\date{\today}
\newtheorem{proposition}{Proposition}
\newtheorem{lemma}{Lemma}
\newtheorem{theorem}{Theorem}
\newtheorem{remark}{Remark}
\newcommand{\N}{\mathbb{N}} 
\newcommand{\C}{\mathbb{C}} 
\newcommand{\Z}{\mathbb{Z}} 
\newcommand{\R}{\mathbb{R}} 
\newcommand{\bp}{{\bf p}}
\newcommand{\bx}{{\bf x}}
\newcommand{\bu}{{\bf u}}
\newcommand{\bs}{\tilde{B}}
\newcommand{\DA}{D_{{\bf A}}}
\newcommand{\SP}[2]{\big\langle #1,#2 \big\rangle} 
\newcommand{\sps}[2]{\langle #1,#2 \rangle} 
\newcommand{\ri}{\mathrm{i\,}}
\newcommand{\rd}{d}
\title[Localization of massless Dirac fermions]{Localization of 
two-dimensional massless Dirac fermions in a magnetic quantum dot}
\author{Martin K\"onenberg}
\address{Martin K\"onenberg\\
Fakult\"at f\"ur Mathematik und Informatik\\
FernUniversit\"at Hagen\\
L\"utzowstra{\ss}e 125\\
D-58084 Hagen, Germany.
{\it Present address:} Fakult\"at f\"ur Physik \\
Universit\"at Wien, Boltzmanngasse 5, 1090 Vienna, Austria.}
\email{martin.koenenberg@univie.ac.at}
\author{Edgardo Stockmeyer}
\address{Edgardo Stockmeyer\\
Mathematisches Institut\\
Ludwig-Maximilians-Universit\"at\\
Theresienstra{\ss}e 39\\
D-80333 M\"unchen, Germany.}
\email{stock@math.lmu.de}
\subjclass{Primary 81Q10; Secondary 47B25}
\keywords{Magnetic operator,
localization,
Dirac operator}
\begin{document}
\maketitle
\begin{abstract}
We consider a two-dimensional massless Dirac operator $H$ in the presence of a perturbed homogeneous magnetic field $B=B_0+b$ and a scalar electric potential $V$. For $V\in L_{\rm loc}^p(\R^2)$, $p\in(2,\infty]$, and  $b\in L_{\rm loc}^q(\R^2)$, $q\in(1,\infty]$, both decaying at infinity, we show that
states in the discrete spectrum of $H$ are superexponentially localized.
We establish the existence of such  states between the zeroth and the first Landau level assuming that $V=0$. In addition, under the condition that $b$ is rotationally symmetric and that $V$ satisfies certain analyticity condition on the angular variable, we show that states belonging to the discrete spectrum of $H$ are Gaussian-like localized.  
\end{abstract}
\section{Introduction}
Graphene is a two-dimensional lattice of carbon atoms arranged on a
honeycomb structure. Due to its unusual properties it has attracted a
great deal of attention since its discovery
\cite{castro2009electronic,novoselov2004electric}.  One of the
striking facts about graphene is that the dynamics of its low-energy
excitations (the charge carriers) can be described by massless
two-dimensional Dirac operators. An interesting feature of Dirac
fermions is the lack of localization under the influence of an
external electric potential \cite{vogelsang,Kalfetal2003}. This fact, related to Klein's paradox \cite{castro2009electronic}, is
due to the peculiar cone-like gapless structure of the spectrum of  massless
free Dirac operators. 

It was suggested in \cite{de2007magnetic} that it is possible to
confine such Dirac fermions in graphene by inhomogeneous magnetic
fields of the type $B=B_0+b$, where $B_0>0$ is a constant and $b$ a
perturbation with negative flux  that decays at infinity. The spectrum of the
corresponding Dirac operator in a constant magnetic field $B_0$ is given by
the (relativistic) Landau levels. The idea is that as the perturbation
$b$ is turned on eigenvalues will emerge from the Landau levels giving
rise to states localized on the bulk of the support of $b$. In this
manner a so-called (magnetic) quantum dot or artificial atom can be
created.  These type of models, also with an external electric
potential $V$, have been further studied in the physics literature, for instance,  in
\cite{martino2010spectrum,masir,wang2009magnetically,kormanyos2008bound}
for the one particle case and in \cite{hausler2009artificial,EDSS}
for the multiparticle case.  The articles
\cite{martino2010spectrum,masir,wang2009magnetically,kormanyos2008bound}
deal with specific electromagnetic fields for which the model is partly
solvable or suitable for numerical computations. 

In this article we consider a large class of electromagnetic
perturbations $(b,V)$ with $V\in L_{\rm loc}^p(\R^2)$,
$p\in(2,\infty]$, and $b\in L_{\rm loc}^q(\R^2)$, $q\in(1,\infty]$,
both decaying at infinity. The essential spectrum of the corresponding
Dirac-operator $H$ describing the quantum dot is given by the Landau
levels. We show that eigenfunctions belonging to the discrete spectrum
of $H$ are superexponentially localized, i.e., they decay faster than
any exponential.  In the case when $V=0$ we verify the existence of
eigenvalues between the zeroth and the first Landau-level assuming
that $b< 0$. Assuming that a certain analyticity conditions on the
angular variable of $V$ is fulfilled and that $b$ is rotationally
symmetric we prove that those states are actually Gaussian-like
localized. These type of results on superexponential and Gaussian
localization, although new for Dirac operators, are known to hold for
spinless magnetic Schr\"odinger operators
\cite{CorneanNenciu1998,Erdos,Nakamura1996,Sordoni1998}. We benefit from this
insight to prove our statements. A precise description of our results 
is given in the next section.
\section{Results}
We consider the massless two-dimensional Dirac operator with an external
magnetic field $B:\R^2\to \R$, pointing
perpendicularly to the plane, and an electric potential  $V:\R^2\to \R$. 
We are interested in the  Hamiltonians
\begin{align}
  \label{eq:1}
  D_{\bf A}&:= {\boldsymbol \sigma}\cdot(\bp-{\bf A}),\\
\label{leila}
H&:=D_{\bf A}+V\,,
\end{align}
a priori defined on $C_0^\infty(\R^2;\C^2)\subset L^2(\R^2;\C^2)$.
Here $\bp:=\tfrac{1}{\ri}\nabla$ is the momentum of the  particle and  $ {\boldsymbol \sigma}:= (\sigma_1,
\sigma_2)$ is a  vector whose entries
\begin{equation*}
  \label{eq:2}
 \sigma_1=:
\left(
\begin{array}{cc}
0&1\\
1&0
\end{array}\right),\qquad
\sigma_2=:
\left(
\begin{array}{cc}
0&-\ri\\
\ri&0
\end{array}\right)\,, 
\end{equation*}
are   Pauli matrices. 
The magnetic field $B$ enters in the definitions \eqref{eq:1} and \eqref{leila} by means of the magnetic vector potential ${\bf A}=(A_1,A_2):\R^2\to \R^2$ through the relation
\begin{equation}\label{carla1}
 B=\partial_1 A_2-\partial_2 A_1=:{\rm curl} \,{\bf A}\,,
\end{equation} 
which is understood in the sense of distributions.

Throughout this article we assume the following on $(B,V)$:
 \begin{itemize}
\item[(A1)] $B=B_0+b$ where $B_0>0$ is a number and  $b\in L^{q}_{\rm loc}(\R^2;\R)$ for some $q\in(1,\infty]$ and 
$\lim_{n\to \infty} \| \mathbbm{1}_{\{|{\bf x}|\ge n\}}b\|_\infty=0.$
\item[(A2)]  $V\in L_{loc}^{p}(\R^2;\R)$
for some $p\in (2, \infty]$ and  $
\lim_{n\to\infty}\|\mathbbm{1}_{\{|\bx|\ge n\}}V\|_\infty=0\,.$
\end{itemize}
Here $\mathbbm{1}_{I}(\cdot)$ denotes the  characteristic function on the set $I$.
Assuming that $B$ fulfills (A1) we can always find ${\bf A}\in L^t_{\rm
  loc}(\R^2;\R^2)$ for some $t\in (2,\infty]$ satisfying \eqref{carla1} (see Remark \ref{poisson}). For such
magnetic vector potentials and electric potentials $V$ satisfying (A2)
we know that the operators defined in \eqref{eq:1} and \eqref{leila}
are essentially self-adjoint (see Subsection \ref{def-of-H}). We denote their self-adjoint extensions by the same symbols and their domains by $\mathcal{D}(\DA)$ and $\mathcal{D}(H)$ respectively.

To the homogeneous magnetic  field $B_0$ we associate  the vector potential
\begin{align}\label{a0}
{\bf A}_0:=\frac{B_0}{2} (-x_2,x_1)\,,
\end{align}
satisfying ${\rm curl} \,{\bf A}_0=B_0$. It is well known that the spectrum of $ D_{{\bf A}_0}$ consists of infinitely degenerated eigenvalues $(l_n)_{n\in\Z}$, called Landau levels, given by
$$l_n:={\mathrm{sgn}(n)}\sqrt{2|n|B_0},\qquad n\in \Z\,,$$
where $\mathrm{sgn}(n)=n/|n|$  if $n\not= 0$ and equals one if $n=0$. 

Given a   self-adjoint operator $T$ we write $\sigma_{\rm pp}(T), \sigma_{\rm d}(T)$, and $\sigma_{\rm ess}(T)$ to denote the pure point, discrete, and essential spectra of $T$ respectively.
Our first main result is as follows.
\begin{theorem}\label{Teo1}
Assume that $B$ satisfies (A1) and let ${\bf A}\in L^p_{\rm loc}(\R^2;\R^2),\,p\in(2,\infty],$ with ${\rm curl}\, {\bf A} = B$. Then, the spectrum of $\DA$ is symmetric with respect to zero and 
$$\sigma_{\rm ess}(\DA)= (l_n)_{n\in\Z} \,.$$
Moreover,
\begin{itemize}
\item[(a)] If $b\le 0$ and strictly negative on some open set,  then the discrete spectrum of $\DA$ on $(0,l_1)$ is non-empty, i.e.,  $\sigma_{\rm d}(\DA)\cap (0,l_1)\not=\emptyset$ and 
$${\rm dim(Ran}(\mathbbm{1}_{(0,l_1)}(D_{\mathbf{A}})))=\infty\,.$$
\item[(b)] If $b\ge 0$ then 
$${\rm dim(Ran}(\mathbbm{1}_{(0,l_1)}(D_{\mathbf{A}})))=0\,.$$
\end{itemize} 
\end{theorem}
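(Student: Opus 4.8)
The plan is to reduce every claim to the supersymmetric block structure of $\DA$. Writing $\pi_j:=p_j-A_j$ and $Q:=\pi_1+\ri\,\pi_2$, understood as the closure of its restriction to $\core$ (so that $Q$ is closed and $Q^*=\pi_1-\ri\,\pi_2$), the operator $\DA$ acts as $(\psi_1,\psi_2)\mapsto(Q^*\psi_2,\,Q\psi_1)$. Since $\sigma_3:={\rm diag}(1,-1)$ is a self-adjoint unitary anticommuting with $\sigma_1$ and $\sigma_2$, conjugation by $\sigma_3$ turns $\DA$ into $-\DA$; hence $\DA$ and $-\DA$ are unitarily equivalent and $\sigma(\DA)$, $\sigma_{\rm ess}(\DA)$, $\sigma_{\rm d}(\DA)$ are all symmetric about $0$. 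Squaring, $\DA^2=Q^*Q\oplus QQ^*$ with $Q^*Q=(\bp-{\bf A})^2-B$ and $QQ^*=(\bp-{\bf A})^2+B$, so that $QQ^*-Q^*Q=2B=2B_0+2b$, and the supersymmetry relation $\sigma(Q^*Q)\setminus\{0\}=\sigma(QQ^*)\setminus\{0\}$ holds with equal multiplicities. All four assertions will be extracted from these two Pauli-type operators.

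For the essential spectrum, $\sigma_{\rm ess}(\DA)$ is symmetric and $\sigma_{\rm ess}(\DA^2)=\{\lambda^2:\lambda\in\sigma_{\rm ess}(\DA)\}=\sigma_{\rm ess}(Q^*Q)\cup\sigma_{\rm ess}(QQ^*)$, so it suffices to compute $\sigma_{\rm ess}((\bp-{\bf A})^2\pm B)$. The key point is stability of the essential spectrum of a two-dimensional Pauli operator under the perturbation $b=B-B_0$: since $\|\mathbbm{1}_{\{|\bx|\ge n\}}b\|_\infty\to 0$, one has $\sigma_{\rm ess}((\bp-{\bf A})^2\pm B)=\sigma_{\rm ess}((\bp-{\bf A}_0)^2\pm B_0)$. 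The obstacle here is that ${\bf A}-{\bf A}_0$ need not decay; I would circumvent it by comparing the two operators ball by ball along a partition of unity of $\R^2$, using on each ball a gauge transformation uniformly close to the identity far from the origin (where $b$ is uniformly small) and invoking $b\in L^q_{\rm loc}$, $q>1$, together with elliptic regularity to control the localized pieces, i.e.\ a Weyl-sequence argument. Granting this, $\sigma((\bp-{\bf A}_0)^2)=\{(2n+1)B_0:n\ge 0\}$ yields $\sigma_{\rm ess}(Q^*Q)=\{2nB_0:n\ge 0\}$ and $\sigma_{\rm ess}(QQ^*)=\{2nB_0:n\ge 1\}$, whose union is $\{2nB_0:n\ge 0\}=\{l_n^2:n\in\Z\}$; taking signed square roots and using the symmetry gives $\sigma_{\rm ess}(\DA)=(l_n)_{n\in\Z}$. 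I expect this stability step to be the main difficulty; the rest is algebra.

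Part (b) is then immediate: if $b\ge 0$ then $QQ^*=Q^*Q+2B_0+2b\ge 2B_0=l_1^2$ since $Q^*Q\ge 0$, so $\sigma(QQ^*)\subset[l_1^2,\infty)$, whence $\sigma(Q^*Q)\subset\{0\}\cup[l_1^2,\infty)$, whence $\sigma(\DA^2)\subset\{0\}\cup[l_1^2,\infty)$ and $\sigma(\DA)\cap(0,l_1)=\emptyset$, i.e.\ $\dim{\rm Ran}\,\mathbbm{1}_{(0,l_1)}(\DA)=0$.

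For part (a) I would run a min-max argument for $QQ^*$. Passing to a gauge ${\bf A}={\bf A}_0+\nabla^\perp h$ with $\Delta h=b$ — harmless, since distinct gauges for $B$ give unitarily equivalent Dirac operators, and $h$ may be chosen of sub-quadratic growth — the equation $Q\psi=0$ is solved by $\psi=g(z)\,\re^{-h-B_0|z|^2/4}$ with $g$ entire in $z=x_1+\ri x_2$; thus $\ker Q$ contains $z^k\re^{-h-B_0|z|^2/4}$ for every $k\ge 0$ (these lie in $L^2$ because the Gaussian factor dominates $\re^{-2h}$) and is infinite-dimensional, this being the Aharonov--Casher lowest Landau level, while the mirror computation shows $\ker Q^*=\{0\}$. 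For $\psi\in\ker Q$, a direct computation from $Q\psi=0$ gives $\|Q^*\psi\|^2=2\sps{\psi}{B\psi}=2B_0\|\psi\|^2+2\int_{\R^2}b\,|\psi|^2$; since $b\le 0$ and $b<0$ on an open set on which $\psi$ (a nowhere-zero exponential times a nonzero entire function) is a.e.\ nonzero, we get $\sps{\psi}{QQ^*\psi}=\|Q^*\psi\|^2<2B_0\|\psi\|^2$ for all $\psi\in\ker Q\setminus\{0\}$. Inserting finite-dimensional subspaces of $\ker Q$ into the min-max principle and using $\inf\sigma_{\rm ess}(QQ^*)=2B_0$ shows $QQ^*$ has infinitely many eigenvalues in $[0,l_1^2)$, all lying in $(0,l_1^2)$ because $\ker Q^*=\{0\}$; hence $\dim{\rm Ran}\,\mathbbm{1}_{(0,l_1^2)}(QQ^*)=\infty$. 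By the supersymmetry relation $\sigma(\DA^2)\cap(0,l_1^2)=\sigma(QQ^*)\cap(0,l_1^2)$ with equal multiplicities, and since $\sigma_{\rm ess}(\DA)\cap(0,l_1)=\emptyset$, these correspond to genuine discrete eigenvalues of $\DA$; taking positive square roots gives $\dim{\rm Ran}\,\mathbbm{1}_{(0,l_1)}(\DA)=\infty$, in particular $\sigma_{\rm d}(\DA)\cap(0,l_1)\neq\emptyset$. The one technical point to watch is that the trial functions in $\ker Q$ genuinely lie in the form domain $\mathcal{D}(Q^*)$ of $QQ^*$, which follows from the local regularity provided by (A1).
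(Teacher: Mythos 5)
Your parts (a) and (b) are essentially the paper's own arguments (Rayleigh--Ritz over finite-dimensional subspaces of the infinite-dimensional zero-mode space, resp.\ $dd^*=d^*d+2B\ge 2B_0$), but the step on which your whole computation of $\sigma_{\rm ess}(\DA)$ rests is precisely the one you leave unproved. You want $\sigma_{\rm ess}((\bp-{\bf A})^2\pm B)=\sigma_{\rm ess}((\bp-{\bf A}_0)^2\pm B_0)$ via local gauge transformations, a partition of unity and ``a Weyl-sequence argument'', and then say ``granting this''. Weyl sequences can only certify that points \emph{belong} to the essential spectrum (so at best that the Landau levels survive); they cannot exclude new essential spectrum in the gaps. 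Since ${\bf A}-{\bf A}_0$ need not decay (the flux of $b$ may well be infinite), ${\boldsymbol \sigma}\cdot({\bf A}-{\bf A}_0)$, resp.\ the corresponding Schr\"odinger perturbation, is not relatively compact, and the locally gauged operators on distant balls have \emph{fields} that are small there, not potentials close to ${\bf A}_0$; turning that into control of the resolvent is exactly the hard analysis the paper only does later (Section \ref{superex}, with explicit Green-kernel bounds), and it is not needed for Theorem \ref{Teo1}. The paper's route (Lemma \ref{essentialda}, following Iwatsuka) is gauge-free: $dd^*-d^*d=2B_0+2b$ in the form sense, and since $|b|^{1/2}$ is relatively form-compact, $dd^*$ and $d^*d+2B_0$ have compact resolvent difference, hence equal essential spectra; combined with the supersymmetric identity $\sigma_{\rm ess}(dd^*)\setminus\{0\}=\sigma_{\rm ess}(d^*d)\setminus\{0\}$ this yields $S\setminus\{0\}=S+2B_0$ for $S=\sigma_{\rm ess}(d^*d)$, and $S$ is then pinned down by the infinite degeneracy of $\ker(d^*d)$ (Lemma \ref{Cor}), which puts $0\in S$. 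This commutator mechanism is the missing idea in your plan; without it your stability claim is unsupported.

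Two secondary points. Your explicit zero modes $z^k\re^{-h-B_0|z|^2/4}$ require a solution of $\Delta h=b$ with suitably subquadratic growth; for $b$ merely bounded, decaying in sup norm and not integrable this needs justification, and the paper instead invokes \cite{Rozenblum2006} after checking only the flux conditions $\int [B]_+=\infty$, $\int[B]_-<\infty$. You also do not need (and do not prove) $\ker Q^*=\{0\}$: once $0\notin\sigma_{\rm ess}(dd^*)$ is known, the kernel of $dd^*$ is finite dimensional, so infinitely many min--max values below $2B_0$ still give infinitely many eigenvalues in $(0,2B_0)$ --- this is how the paper argues. Finally, the facts that $\ker Q\subset\mathcal{D}(Q^*)$ and that the identity $\|Q^*\psi\|^2-\|Q\psi\|^2=2\sps{\psi}{B\psi}$ extends from $C_0^\infty$ to the common form domain are exactly what the paper's quadratic-form construction (equality of the form domains, relative form-boundedness/compactness of $b$) delivers; an appeal to ``local regularity'' alone does not settle this.
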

This theorem is a consequence of lemmas \ref{Cor} and \ref{lemma}. That the spectrum of $\DA$ is symmetric with respect to zero is well known; see, however,  Proposition \ref{supersymmetry}.
\begin{remark}
  A similar result to Theorem \ref{Teo1} is shown in \cite{Besch2000}
  when $b$ is replaced by $\lambda b$ and $\lambda$ is assumed to be
  sufficiently large.  Moreover, in \cite{Besch2000} stronger
  regularity assumptions on $b$ are made. In addition, the magnetic
  vector potential ${\bf a}$ associated to $b$ is assumed to decay at
  infinity. However, the results of \cite{Besch2000} hold for more
  general background magnetic fields than $B_0$. We note also that our
  proof differs from the one in \cite{Besch2000}.
\end{remark} 
\begin{remark}
Assume that (A1) and (A2) are fulfilled. As a consequence of lemmas \ref{essentialda} and \ref{Cor} below,
\begin{align*}
  \sigma_{\rm ess}(H)= \sigma_{\rm ess}(\DA)=\sigma_{\rm ess}(D_{{\bf A}_0})\,,
\end{align*}
for any ${\bf A}\in L^p_{\rm loc}(\R^2;\R^2),\,p\in(2,\infty],$ with ${\rm curl}\, {\bf A} = B$.
\end{remark}
Our next result state that eigenfunctions corresponding to the discrete spectrum of $H$ are super-exponentially localized.
\begin{theorem}\label{Maintheorem}
Assume that $B$ and $V$ satisfy (A1) and (A2) respectively and let ${\bf A}\in L^p_{\rm loc}(\R^2;\R^2),\,p\in(2,\infty],$ with ${\rm curl}\, {\bf A} = B=B_0+b$. Then, for any eigenfunction $\Psi$ of $H=\DA+V$ with $H\Psi=E\Psi$ and 
$E\in \R\setminus\sigma(D_{{\bf A}_0})$ the following holds:

For every $r\in[2,\infty]$ and $\gamma>0$ there exists an $R>0$ such that
\begin{equation}\label{sexpon}
\|\mathbbm{1}_{\{|\bx|\ge R\}}e^{\gamma |\bx|}\Psi\|_r<\infty\,.
\end{equation}
\end{theorem}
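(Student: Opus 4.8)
The plan is to follow the classical Agmon/Combes–Thomas strategy adapted to the massless Dirac setting, exploiting the fact that $E$ lies in a spectral gap of the unperturbed operator $D_{\mathbf A_0}$. First I would write $H = D_{\mathbf A_0} + W$, where $W = \boldsymbol\sigma\cdot(\mathbf A_0 - \mathbf A) + V = -\boldsymbol\sigma\cdot\mathbf a + V$ with $\mathbf a$ a vector potential for $b$; by (A1)–(A2) and the remark following the statement, $W$ is a relatively bounded (indeed relatively compact) perturbation that is small at infinity, in the sense that $\|\mathbbm 1_{\{|\mathbf x|\ge n\}} W\, (D_{\mathbf A_0}-z)^{-1}\|\to 0$ as $n\to\infty$ for $z$ in the resolvent set. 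The eigenvalue equation $H\Psi=E\Psi$ then reads $(D_{\mathbf A_0}-E)\Psi = -W\Psi$. Since $E\notin\sigma(D_{\mathbf A_0})$, the resolvent $(D_{\mathbf A_0}-E)^{-1}$ is bounded.

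Next I would conjugate by the weight $F_\gamma(\mathbf x) := e^{\gamma\langle\mathbf x\rangle}$ (or a bounded-gradient regularization $e^{\gamma|\mathbf x|(1+\varepsilon|\mathbf x|)^{-1}}$, letting $\varepsilon\downarrow 0$ at the end to avoid domain issues). The key algebraic point, special to first-order Dirac operators, is that the commutator $[D_{\mathbf A_0}, F_\gamma] = -\ri\,\boldsymbol\sigma\cdot(\nabla F_\gamma) = -\ri\gamma\,(\boldsymbol\sigma\cdot\widehat{\mathbf x})\,F_\gamma + o(\gamma)F_\gamma$ is a \emph{bounded} multiplication operator of norm $O(\gamma)$ times $F_\gamma$, with no derivative falling on $\Psi$. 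Hence for the truncated weight $F_{\gamma,n} := \min(F_\gamma, e^{\gamma n})$ — which is bounded, so $F_{\gamma,n}\Psi\in\mathcal D(D_{\mathbf A_0})$ — one obtains
\begin{equation*}
(D_{\mathbf A_0}-E)\,F_{\gamma,n}\Psi = [D_{\mathbf A_0},F_{\gamma,n}]\Psi - F_{\gamma,n} W\Psi\,.
\end{equation*}
Applying $(D_{\mathbf A_0}-E)^{-1}$ and estimating: the first term on the right is bounded by $C\gamma\|F_{\gamma,n}\Psi\|_2$ plus a contribution supported where $|\mathbf x|\ge n$ (from the cutoff region of $F_{\gamma,n}$), and the second is split at radius $R_0$ as $F_{\gamma,n}(\mathbbm 1_{\le R_0}W)\Psi + F_{\gamma,n}(\mathbbm 1_{> R_0}W)\Psi$, where the first piece is finite (bounded weight times $L^2$ function times relatively bounded $W$) and the second has small relative bound $\delta(R_0)\to 0$. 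Choosing first $R_0$ large and then $\gamma$ such that $C\gamma + \|(D_{\mathbf A_0}-E)^{-1}\|\,\delta(R_0) < \tfrac12$ gives
\begin{equation*}
\|F_{\gamma,n}\Psi\|_2 \le 2\big(C_{R_0} + (\text{terms bounded uniformly in }n)\big)\,,
\end{equation*}
and letting $n\to\infty$ by monotone convergence yields $\|F_\gamma\Psi\|_2<\infty$, i.e.\ the $r=2$ case of \eqref{sexpon} (the restriction to $|\mathbf x|\ge R$ is then automatic). Since $\gamma>0$ was arbitrary, super-exponential $L^2$-decay follows.

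Finally, to upgrade from $r=2$ to all $r\in[2,\infty]$, I would use elliptic regularity / a local Sobolev bound for $D_{\mathbf A}$: on a unit ball $B_1(\mathbf y)$ one has $\|u\|_{L^\infty(B_1(\mathbf y))} \le C(\|D_{\mathbf A}u\|_{L^2(B_2(\mathbf y))} + \|u\|_{L^2(B_2(\mathbf y))})$ uniformly in $\mathbf y$ (this type of estimate is already needed for the self-adjointness discussion in the paper, given $\mathbf A\in L^t_{\rm loc}$, $t>2$). Applied to $u=\Psi$ on $B_1(\mathbf y)$ with $|\mathbf y|$ large, and using $D_{\mathbf A}\Psi = (E-V)\Psi$ together with $V\in L^p_{\rm loc}$, $p>2$, this bounds $e^{\gamma|\mathbf y|}|\Psi(\mathbf y)|$ by a constant times $e^{\gamma'|\mathbf y|}\|\Psi\|_{L^2(B_2(\mathbf y))}$ for $\gamma'$ slightly larger than $\gamma$; since $\gamma$ is arbitrary this gives \eqref{sexpon} for $r=\infty$, and the cases $2<r<\infty$ follow by interpolation.

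\textbf{Main obstacle.} The delicate point is the interplay between the unboundedness of the weight and the domain of $D_{\mathbf A_0}$: one must work with the truncated weights $F_{\gamma,n}$ throughout and make sure all estimates are uniform in $n$ before passing to the limit, and separately handle the replacement of $\langle\mathbf x\rangle$ by a smooth proxy so that $\nabla F_\gamma$ is genuinely bounded. The relative-compactness / smallness-at-infinity of $W=-\boldsymbol\sigma\cdot\mathbf a+V$ relative to $D_{\mathbf A_0}$ under the mere $L^q_{\rm loc}$/$L^p_{\rm loc}$ hypotheses (A1)–(A2) is the analytic input that makes $\delta(R_0)\to 0$ work, and verifying it cleanly — presumably via a diamagnetic / Sobolev argument already set up in the self-adjointness section — is the second technical hurdle.
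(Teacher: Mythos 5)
Your Combes--Thomas skeleton (conjugation by regularized exponential weights, truncation, letting $n\to\infty$) is fine in itself, but the analytic input it rests on is not available under (A1), and this is exactly where the paper takes a different route. You write $H=D_{{\bf A}_0}+W$ with $W=-\boldsymbol\sigma\cdot{\bf a}+V$ and need $\|\mathbbm 1_{\{|\bx|\ge n\}}W(D_{{\bf A}_0}-z)^{-1}\|\to0$. Assumption (A1) only makes the \emph{field} $b$ decay in sup norm; no decay (nor even boundedness) of any vector potential ${\bf a}$ with ${\rm curl}\,{\bf a}=b$ follows, in any gauge. For instance, if $b(\bx)=-(\log(2+|\bx|))^{-1}$, the flux of $b$ through the disc of radius $R$ grows like $R^2/\log R$, so by Stokes' theorem the tangential average of \emph{every} such ${\bf a}$ on the circle of radius $R$ grows like $R/\log R$; hence $\mathbbm 1_{\{|\bx|\ge R_0\}}\boldsymbol\sigma\cdot{\bf a}$ is not small at infinity (it need not even be $D_{{\bf A}_0}$-bounded), your $\delta(R_0)$ does not tend to zero, and the choice $C\gamma+\|(D_{{\bf A}_0}-E)^{-1}\|\,\delta(R_0)<\tfrac12$ is impossible. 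The paper explicitly emphasizes (remark comparing with \cite{Besch2000}) that decay of the vector potential is \emph{not} assumed, so this is a genuine gap in the argument, not a technical hurdle one can absorb later.

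What replaces this step in the paper is a gauge-covariant argument: far from the origin one works on half-planes $\Omega_n=\{\bu\cdot\bx>n\}$, gauge-transforms ${\bf a}$ there into the transversal-gauge potential ${\bf a}^n$ of $b_n=\mathbbm 1_{\Omega_n}b$, which obeys the pointwise bound $|{\bf a}^n_{\bx'}(\bx)|\le\|b_n\|_\infty|\bx-\bx'|$, and expands the resolvent of $D_{{\bf A}_n}$, ${\bf A}_n={\bf A}_0+{\bf a}^n$, around the explicit Green kernel of $D_{{\bf A}_0}$ (Lemma \ref{expansion} and Appendix \ref{greenf}); the linear growth $|\bx-\bx'|$ is absorbed by the Gaussian off-diagonal decay $e^{-B_0|\bx-\bx'|^2/4}$ of that kernel, and the smallness parameter is $\|b_n\|_\infty\to0$, not any decay of the potential. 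This machinery also yields weighted bounds in $\mathcal{B}(q,r)$ for all admissible $q,r$, which give the $L^r$ conclusion for $r\in[2,\infty]$ directly via \eqref{eq:23a}--\eqref{eq:27}; your final elliptic-regularity upgrade to $L^\infty$ would additionally require constants uniform in the center $\by$, which is not immediate with only ${\bf A}\in L^p_{\rm loc}$ and a linearly growing ${\bf A}_0$, and is rendered unnecessary by the kernel estimates.
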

This theorem is proven in Section \ref{superex}.
\begin{remark}
This type of results are known to hold for magnetic Schr\"odinger operators
$(\bp-{\bf A})^2+B$. Our proof   follows the ideas presented in \cite{CorneanNenciu1998}.
In fact, since our operator is linear in ${\bf A}$, some parts of the argument are more straightforward. For instance, we do not require that $b\in C^1(\R^2;\R)$  decays in the $C^1$-norm as  done in \cite{CorneanNenciu1998}.
\end{remark}
\begin{remark}
One  essential ingredient in the proof of Theorem \ref{Maintheorem} is 
the explicit knowledge of the  Green function $G_0$   of $D_{{\bf A}_0}$. This is calculated in  Appendix~\ref{greenf}.
\end{remark}
In order to obtain  Gaussian decay we make further assumptions on $(B,V)$. Let $T=\R/(2\pi \Z)$ and let $v=v(r,\theta),(r,\theta)\in\R^+\times T$ be the potential $V$ written in
polar coordinates. We assume:
%
%
\begin{enumerate}
\item[(A3)] $B$  is radially symmetric, i.e., $b(\bx)=b(r),\ r=|\bx|$. 
\item[(A4)] For any $(r,\theta)\in
  \R^+\times T$ the mapping $\R\ni a \mapsto v(r,\theta+a)=:{v}_a(r,\theta)$ has an analytic continuation $\widetilde{v}_z(r,\theta)$ to $\C$. Moreover, for any $\tau>0$ there exist a $p\in(2,\infty]$ and a real-valued function $u_\tau\in L_{\rm loc}^p(\R^+\times T,r\rd r\,\rd \theta)$ such that $\|\mathbbm{1}_{\{r>n\}}u_\tau\|_\infty\to 0$  as $n\to\infty$ and 
$$|\widetilde{v}_z(r,\theta)|\le u_\tau (r,\theta)\,,$$
for any  $(r,\theta)\in
  \R^+\times T$ and $z\in S_\tau:= \{z \in \C\,:\, |\operatorname{Im}
  z|\le \tau \}$.
\item[(A5)] $v$ is differentiable with respect to $r$ and  $\R\ni a \mapsto \partial_rv(r,\theta+a)$ can be  analytically  continued to $\partial_r\widetilde{v}_z(r,\theta)$ on $\C$. Moreover, there exist a $\rho>0$ such that for any $\tau>0$ there is  $\kappa_\tau>0$ such that $|\mathbbm{1}_{\{r>\rho\}} \partial_rv_z(r,\theta)|\le \kappa_\tau $ for   any  $(r,\theta)\in
  \R^+\times T$ and $z\in S_\tau$.
\end{enumerate}
\begin{theorem}\label{Maintheorem3}
Assume that $B$ satisfies (A1) and (A3) and $V$  satisfies (A2),(A4) and (A5). Let  ${\bf A}\in L^p_{\rm loc}(\R^2;\R^2),\,p\in(2,\infty],$ with ${\rm curl}\, {\bf A} = B$.  Then, for any eigenfunction $\Psi$ of $H=\DA+V$ with $H\Psi=E\Psi$ and 
$E\in \R\setminus\sigma(D_{{\bf A}_0})$ the following holds:
 For every $0<\alpha<1$,  we have 
\begin{equation}
\|e^{\alpha B_0/4|\bx|^2}\Psi\|_2<\infty\,.
\end{equation}
\end{theorem}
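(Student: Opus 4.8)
The plan is to upgrade the superexponential localization of Theorem \ref{Maintheorem} to Gaussian localization by exploiting the rotational symmetry of $B$ together with the analyticity condition (A4)--(A5) on $V$, following the complex-rotation (dilation/rotation in the angular variable) strategy that is standard for magnetic Schr\"odinger operators \cite{Erdos,Nakamura1996,Sordoni1998}. Because $b=b(r)$ is radial, one can choose a radial magnetic vector potential ${\bf a}$ with ${\rm curl}\,{\bf a}=b$ of the form ${\bf a}(\bx)=a(r)(-x_2,x_1)/r$; then ${\bf A}={\bf A}_0+{\bf a}$ and, crucially, $D_{{\bf A}_0}$ and the perturbation commute nicely with the angular structure. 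First I would pass to polar coordinates and decompose $L^2(\R^2;\C^2)$ into angular-momentum channels; the key observation is that conjugating $H$ by the unitary family of rotations $U(\beta)\psi(r,\theta)=\psi(r,\theta+\beta)$, $\beta\in\R$, acts on the Dirac operator $D_{\bf A}$ in a completely explicit way (the free part $\boldsymbol\sigma\cdot\bp$ and the radial potentials are rotation invariant, so only $V$ is affected), and by (A4)--(A5) this family extends to an analytic family of (type A) operators $H(z)=D_{\bf A}+\widetilde v_z$ for $z\in S_\tau$.

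Next I would observe that the localization bound to be proved, $\|e^{\alpha B_0|\bx|^2/4}\Psi\|_2<\infty$, is equivalent to analyticity of a certain vector-valued map. Concretely, if $\Psi=\sum_m e^{\ri m\theta}\psi_m(r)$ is the channel decomposition of the eigenfunction, then Gaussian decay in $\bx$ corresponds to the statement that the generating function $r\mapsto\sum_m \psi_m(r)\, \zeta^{m}$ (or, more precisely, the map $\beta\mapsto U(\beta)\Psi$) extends analytically in a strip and grows no worse than Gaussian at the boundary. The standard mechanism is: Theorem \ref{Maintheorem} already gives $e^{\gamma|\bx|}\Psi\in L^2$ for every $\gamma$, so the rotated eigenfunction $\Psi(z):=U(z)\Psi$ is \emph{a priori} well defined and analytic for $z$ in a strip; it satisfies $H(z)\Psi(z)=E\Psi(z)$, and since $E\notin\sigma(D_{{\bf A}_0})=\sigma_{\rm ess}(H(z))$ (the essential spectrum is rotation invariant, by the analogue of Lemma \ref{essentialda} and the fact that the rotation does not change the behaviour at infinity once (A4) is in force), $E$ is in the resolvent set of the free rotated operator, so $\Psi(z)=-(D_{{\bf A}_0}-E)^{-1}(\widetilde v_z-\text{(radial part)})\Psi(z)$ extends analytically to the full strip $S_\tau$ with locally $L^2$ bounds. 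Letting $\tau\to\infty$ gives analyticity in all of $\C$.

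The final step is to convert this global angular analyticity into the Gaussian bound. Here I would use the explicit Green function $G_0$ of $D_{{\bf A}_0}$ from Appendix \ref{greenf}: its kernel decays like a Gaussian $e^{-B_0|\bx-\by|^2/4}$ (up to the harmonic-oscillator-type prefactors), which is exactly the source of the exponent $B_0/4$. Combining the resolvent identity $\Psi=-(D_{{\bf A}_0}-E)^{-1}(V-\text{radial})\Psi$ with the decay of $G_0$ and the bound $e^{\gamma|\bx|}\Psi\in L^2$ from Theorem \ref{Maintheorem}, a bootstrap argument — feeding the improved decay of $\Psi$ back into the right-hand side — yields $e^{\alpha B_0|\bx|^2/4}\Psi\in L^2$ for every $\alpha<1$; the loss of the endpoint $\alpha=1$ comes from the cross terms in $|\bx-\by|^2\ge(1-\epsilon)|\bx|^2-C_\epsilon|\by|^2$ and the fact that $(V-\text{radial})\Psi$ is only known to decay exponentially. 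I expect the main obstacle to be the first half: setting up the angular complex-rotation rigorously for the \emph{Dirac} operator with only $L^p_{\rm loc}$ coefficients — establishing that $H(z)$ is an analytic family of closed operators with constant (rotation-invariant) essential spectrum, and controlling the radial part of ${\bf a}$ and the $V$-dependence uniformly on strips via (A4)--(A5) — since the first-order nature of $D_{\bf A}$ means one cannot simply quote the Schr\"odinger-operator literature verbatim, though the linearity in ${\bf A}$ again simplifies several estimates compared with \cite{CorneanNenciu1998}.
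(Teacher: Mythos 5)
The first half of your plan is in the spirit of the paper's argument (its Lemma \ref{RotAnalytic}), but your justification of the angular complex rotation is not sound. Analyticity of $z\mapsto U(z)\Psi$ in a strip is equivalent to exponential decay of $\Psi$ in the \emph{angular momentum} variable, since the complex rotation multiplies the $m$-th angular Fourier mode by $e^{\ri m z}$; this does not follow from the superexponential \emph{spatial} decay of Theorem \ref{Maintheorem}, which gives decay in $r$, not in $m$. Defining $\Psi(z)$ through a resolvent identity is circular (the vector must exist before it can satisfy the identity), and note that $E$ remains an eigenvalue of the rotated operator $H^{(1)}(z)$, so being outside $\sigma_{\rm ess}$ gives you no invertibility for the full operator at $E$. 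The paper obtains the decay in $m_j$ by analytic perturbation theory: $H^{(1)}(z)$ is an analytic family of type (A) (Lemma \ref{family2}), the $E$-eigenprojections $P_z$ continue analytically, and $U_aP_0f_j=P_aU_af_j$ for rotation-analytic $f_j$ is continued from real $a$ to all of $\C$.

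The decisive gap is your last step. The identity $\Psi=-(D_{{\bf A}_0}-E)^{-1}(V-\text{radial})\Psi$ is not correct: the perturbation of $D_{{\bf A}_0}$ is the full $V+{\boldsymbol \sigma}\cdot{\bf a}$, including the radial part of $V$ and the vector potential ${\bf a}$ of $b$ (which need not decay). Even after correcting it, the proposed bootstrap fails quantitatively: the source $(V+{\boldsymbol \sigma}\cdot{\bf a})\Psi$ is only known to decay superexponentially, and convolving the kernel bound $e^{-B_0|\bx-\by|^2/4}$ against $e^{-\gamma|\by|}$ yields only $e^{-\gamma|\bx|}$ up to a $\gamma$-dependent constant (the saddle sits at $|\by|\approx|\bx|-O(1)$); moreover Gaussian rates compose under such convolutions by $1/c\mapsto 1/c+4/B_0$, i.e.\ they degrade, so no iteration starting from zero Gaussian rate can produce one. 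This is consistent with Erd\H{o}s's counterexample cited in the paper: without using (A4) quantitatively, Gaussian decay is false in general, so it cannot follow from the Green-function decay plus Theorem \ref{Maintheorem} alone. The paper's actual mechanism is different: after the angular-momentum decomposition it proves Agmon-type estimates for the \emph{square} of the transformed free Dirac operator $K_0^{(2)}$ with Gaussian weights $e^{\delta\rho_\epsilon}$ supported in the region $m_j\lesssim r^2$, where the $A(r)^2\sim q_1^2B_0^2r^2/4$ term dominates the $-m_j\bs$ term (lemmas \ref{KonjugiertW}, \ref{LowerboundQ}, \ref{UpperboundQ}, \ref{LowerboundK02}), while the complementary region $m_j\gtrsim r^2$ is controlled by the exponential decay in $m_j$ from Lemma \ref{RotAnalytic}. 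That interplay between the spatial and angular-momentum localizations — not the Gaussian decay of $G_0$ — is what produces the exponent $B_0/4$ and the restriction $\alpha<1$, and it is missing from your plan.
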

This theorem is proven in Section \ref{Gauss}.
\begin{remark}
The analyticity assumption (A4)  on the angular variable of $V$  implies, by a Paley-Wiener argument, exponential decay of the
Fourier modes of the potential in its angular momentum decomposition
 (see equations \eqref{defW}, \eqref{FourierKoeff} and \eqref{decV} below). The assumption (A5) is similar to (A4) but for the radial derivative of the potential. 
\end{remark}
\begin{remark}
  The first proof of Gaussian localization for magnetic Schr\"odinger
  operators using assumptions like (A4) (but not (A5)) was given in
  \cite{Erdos}. In addition, an example of a potential decaying at
  infinity for which the corresponding ground state decays slower than
  a Gaussian is also given in \cite{Erdos}.  The proof in \cite{Erdos}
  is based on a generalized Feynman-Kac formula.  An alternative proof
  using Agmon-type estimates with localizations in space and angular
  momentum was given in \cite{Nakamura1996}. A variation of the method
  in \cite{Nakamura1996} was used in \cite{Sordoni1998} to treat the
  general $n$-dimensional case, again for magnetic Schr\"odinger
  operators. Our proof follows the ideas developed in
  \cite{Nakamura1996}. However, it turns out to be more involved since
  our operator is not bounded from below.  To overcome this difficulty
  we square the Dirac operator (or parts of it). This is the reason
  why  (A5) is used in our setting.
\end{remark}
\noindent
{\it The article is organized as follows}: In Section \ref{preliminars}
we review some essentially well known facts about 
magnetic Dirac operators. Sections \ref{spectrum}, \ref{superex}, and
\ref{Gauss} are devoted to the proofs of theorems \ref{Teo1},
\ref{Maintheorem}, and \ref{Maintheorem3} respectively. The article 
ends with an appendix containing some useful technical results.


\bigskip

\noindent
{\bf Acknowledgements.}
E.S. thanks Horia Cornean for stimulating discussions in the conference `Spectral Days' in Santiago.
Both authors have been partially supported by the DFG (SFB/TR12).

\section{Preliminaries}\label{preliminars}
\subsection{Essential self-adjointness}\label{def-of-H}
Throughout this article we consider magnetic
potentials ${\bf A}\in L^p_{{\rm
   loc}}(\R^2;\R^2)$ and electric potentials $V\in L^q_{{\rm
   loc}}(\R^2),\ p,q\in(2,\infty]$.
In order to show essential self-adjointness of the operators $H$ and $D_{\bf A}$ defined in \eqref{eq:1} and \eqref{leila} it suffices
to prove that 
$$ H_R\phi:= {\boldsymbol \sigma}\cdot(\bp-\mathbbm{1}_{\{|\bx|\le R\}}{\bf A})\phi+ \mathbbm{1}_{\{|\bx|\le R\}}V\phi\,,\qquad \phi\in C_0^\infty(\R^2;\C^2)\,,$$
is essentially self-adjoint  for every $R>0$ (see \cite{Chernoff77}). 
Using that for $f\in L^p(\R^2;\C)$ and $2<p<\infty$
\begin{equation*}\label{Intop}
f(\bx)(\bp^2+1)^{-1/2}
\end{equation*}
is a compact operator  (see \cite[Theorem 4.1]{Simon2005}) we get
that $\mathbbm{1}_{\{|\bx|\le R\}}(V-{\boldsymbol \sigma}\cdot \bf A)$ is a relative compact perturbation of $D_{\bf 0}$.
This shows essential self-adjointness of $H_R$, since $D_{\bf 0}$ is essentially self-adjoint on $C_0^\infty(\R^2;\C^2)$.
\subsection{Gauge invariance}\label{SecGauge}
Let ${\bf A}, \hat{\bf A}\in L^p_{{\rm
   loc}}(\R^2;\R^2),\ 2< p<\infty,$ be two vector potentials  with
$$
{\rm curl}\, {\bf A}={\rm curl}\, \hat{\bf A}
$$
in the sense of distributions. According  to \cite{Leinfelder1983}  there is a gauge function $\hat{\Phi}\in W^{1,p}_{loc}(\R^2;\R)$ such that
$$
{\bf A}=\hat{\bf A}+\nabla \hat{\Phi}\,.
$$
It follows, for any electric potential $V\in L^q_{{\rm
   loc}}(\R^2;\R),\, q\in(2,\infty]$, that
$$ (D_{\bf A}+V) =e^{\ri \hat{\Phi}} (D_{\hat{\bf A}}+V)e^{-\ri \hat{\Phi}}\,.$$
In particular, $e^{\ri \hat{\Phi}} (D_{\hat{\bf A}}+V)e^{-\ri \hat{\Phi}}$ is essentially self-adjoint on $C_0^\infty(\R^2;\C^2)$. 

\noindent
This can be seen as follows: 
Note that $\mathcal{D}(e^{\ri \hat{\Phi}} (D_{\hat{\bf A}}+V)e^{-\ri \hat{\Phi}})=\{ 
f\in L^2(\R^2;\C^2)\,:\, e^{-\ri \hat{\Phi}}f\in \mathcal{D}(D_{\hat{\bf A}}+V)\}$.
Pick functions $\eta,\eta'\in {C}^\infty_0(\R^2;\C^2)$ and a sequence $(\hat{\Phi}_{m})_{m\in\N}$
 in ${C}^\infty(\R^2;\R)$ with $\hat{\Phi}_{m}\to \hat{\Phi}$ in $W^{1,p}_{loc}(\R^2)$ (and hence in $W^{1,2}_{loc}(\R^2)$) as $m\to\infty$. Then,
\begin{align*}
\sps{(D_{\hat{\mathbf{A}}}+V)\eta'}{e^{-\ri \hat{\Phi}}\eta}
&=\lim_{m\to\infty}\sps{(D_{\hat{\mathbf{A}}}+V)\eta'}{e^{-\ri \hat{\Phi}_{m}}\eta}\\
&=\lim_{m\to\infty}\sps{e^{\ri \hat{\Phi}_{m}}\eta'}{(D_{\hat{\mathbf{A}}}+V)\eta}-\lim_{m\to\infty}\sps{e^{\ri \hat{\Phi}_{m}}\eta'}{{\boldsymbol \sigma}\cdot\nabla\hat{\Phi}_{m}\eta}\\
&=\sps{e^{\ri \hat{\Phi}}\eta'}{(D_{\hat{\mathbf{A}}}+V)\eta}-\sps{e^{\ri \hat{\Phi}}\eta'}{{\boldsymbol \sigma}\cdot\nabla\hat{\Phi}\eta}\,.
\end{align*}
Since $\eta'$ is an arbitrary element of a core of $D_{\hat{\mathbf{A}}}+V$, it follows that $e^{-\ri\hat{\Phi}}\eta\in \mathcal{D}(D_{\hat{\mathbf{A}}}+V)$
and $(D_{\hat{\mathbf{A}}}+V)e^{-\ri\hat{\Phi}}\eta=e^{-\ri \hat{\Phi}}(D_{\hat{\mathbf{A}}}+V-{\boldsymbol \sigma}\cdot\nabla\hat{\Phi})\eta$
which implies that
\begin{align*}
e^{\ri \hat{\Phi}}(D_{\hat{\mathbf{A}}}+V)e^{-\ri\hat{\Phi}}\eta=(D_{{\mathbf{A}}}+V)\eta\,,\qquad \eta\in {C}^\infty_0(\R^2;\C^2)\,.
\end{align*}
Due to the essential self-adjointness of  $D_{\mathbf{A}}+V$  we deduce that
$e^{\ri\hat{\Phi}}(D_{\hat{\mathbf{A}}}+V)e^{-\ri\hat{\Phi}}$ is also essentially self-adjoint on ${C}^\infty_0(\R^2;\C^2)$ and that the two operators coincide.
\subsection{Supersymmetry}\label{supers}
For ${\bf A}=(A_1,A_2)$ with $A_j\in L^p_{{\rm loc}}(\R^2),
\,p\in(2,\infty]$, $j=1,2$, we define the following two operators
\begin{align*}
 d_1\phi= [(p_1-A_1)+\ri (p_2-A_2)]\phi\,,\qquad \phi \in C_0^\infty(\R^2;\C)\,,\\
d_2\phi=[(p_1-A_1)-\ri (p_2-A_2)]\phi\,,\qquad \phi \in C_0^\infty(\R^2;\C)\,.
\end{align*}
Clearly, we have that
\begin{align*}
 D_{\bf A}\upharpoonright_{C_0^\infty(\R^2;\C^2)}=
\left(
\begin{array}{cc}
0 & d_2\\
d_1 & 0
\end{array}\right)\,.
\end{align*}
Since $ D_{\bf A}\upharpoonright_{C_0^\infty(\R^2;\C^2)}$ is essentially self-adjoint it follows that $d_1$ and $d_2$ are closable \cite[Section 5.2.2]{Thaller}. In addition, 
setting $d:=\overline{d_1}$ one finds that $d^*=\overline{d_2}$ and  
\begin{equation}
  \label{eq:4}
  D_{\bf A}=
\left(
\begin{array}{cc}
0 & d^*\\
d & 0
\end{array}\right)\,\quad\mbox{on}\quad \mathcal{D}(D_{\bf A})=
\mathcal{D}(d)\oplus\mathcal{D}(d^*)\,.
\end{equation}
It is known that $dd^*$ and $d^*d$ are self-adjoint with domains $\mathcal{D}(dd^*)=\{ \phi\in \mathcal{D}(d^*)\,:\,d^*\phi\in \mathcal{D}(d)\}$ and $\mathcal{D}(d^*d)=\{ \phi\in \mathcal{D}(d)\,:\,d\phi\in \mathcal{D}(d^*)\}$. Moreover, there is a unitary map $S$ from $\rm Ker(dd^*)^\bot$ to $\rm Ker(d^*d)^\bot$, such that
\begin{equation}\label{Unitary}
 dd^*\upharpoonright_{\rm Ker(dd^*)^\bot} = S^* d^*d\upharpoonright_{\rm Ker(d^*d)^\bot}  S\,.
\end{equation}
Let us note that we can block-diagonalize  $\DA$ using the
Foldy-Wouthuysen transformation. Setting 
\begin{equation*}
  a_+=\left\{
\begin{array}{ll}
1/\sqrt{2} &\,\mbox{on}\,\,\,\mathrm{Ker}(D_{\bf A})^\perp\\
1 &\,\mbox{on}\,\,\,\mathrm{Ker}(D_{\bf A})
\end{array}\right.,\,\,\,\,\,\,
a_-=\left\{
\begin{array}{ll}
1/\sqrt{2} &\,\mbox{on}\,\,\,\mathrm{Ker}(D_{\bf A})^\perp\\
0 &\,\mbox{on}\,\,\,\mathrm{Ker}(D_{\bf A})
\end{array}\right.,
\end{equation*}
we define the  Foldy-Wouthuysen transformation  as
\begin{equation*}
  U=a_++\sigma_3 \mathrm{sgn}(D_{\bf A}) a_-\,,
\end{equation*}
where $\mathrm{sgn}(D_{\bf A})=\DA/|\DA|$ on $\mathrm{Ker}(D_{\bf A})^\perp$ and equals zero on $\mathrm{Ker}(D_{\bf A})$ and 
\begin{equation*}
 \sigma_3=
\left(
\begin{array}{cc}
1&0\\
0&-1
\end{array}\right)\,.
\end{equation*}
The unitarity of the above transformation can be easily verified observing
that $\varphi\in \mathrm{Ker}(D_{\bf A}) \Leftrightarrow
\sigma_3\varphi\in \mathrm{Ker}(D_{\bf A})$ and that $\sigma_3\mathrm{sgn}(D_{\bf A})=-\mathrm{sgn}(D_{\bf A})\sigma_3$. The latter relation holds since $\sigma_3D_{\bf A}=-D_{\bf A}\sigma_3$ and $\sigma_3|D_{\bf A}|=|D_{\bf A}|\sigma_3$. A direct computation yields
\begin{equation}\label{Isom}
  U D_{\bf A} U^{*}=\left(
\begin{array}{cc}
\sqrt{d^* d} & 0\\
0 & -\sqrt{d d^*}
\end{array}\right)\,.
\end{equation}
Equation \eqref{Unitary} and \eqref{Isom} imply the following statement.
\begin{proposition}\label{supersymmetry} 
Let ${\bf A}\in L^p_{{\rm
   loc}}(\R^2;\R^2)$ for some $p\in(2,\infty]$. Then, the spectrum of $\DA$  is symmetric with respect to zero and 
\begin{equation*}\label{www}
\sigma_{\sharp}(\DA)\cap (0,\infty)= \sigma_{\sharp}(\sqrt{d^*d})\setminus \{0\}\,,
\qquad \sharp\in\{\rm pp, d, ess\}\,.
\end{equation*} 
\end{proposition}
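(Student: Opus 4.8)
The plan is to read off both claims from the material already assembled in Subsection~\ref{supers}: the off-diagonal form \eqref{eq:4}, the relation $\sigma_3 D_{\bf A}=-D_{\bf A}\sigma_3$ noted there, and the Foldy--Wouthuysen block-diagonalization \eqref{Isom}. No input beyond these is needed, so the proof is essentially an exercise in how the various parts of the spectrum behave under unitary equivalence and under orthogonal sums.

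\emph{Symmetry.} Since $\sigma_3$ is unitary and self-adjoint and preserves the decomposition $\mathcal D(D_{\bf A})=\mathcal D(d)\oplus\mathcal D(d^*)$ of \eqref{eq:4} (it only flips the sign of the second component), conjugation by $\sigma_3$ maps $\mathcal D(D_{\bf A})$ onto itself; combined with $\sigma_3 D_{\bf A}=-D_{\bf A}\sigma_3$ this shows that $D_{\bf A}$ and $-D_{\bf A}$ are unitarily equivalent as self-adjoint operators. On the core this identity is immediate from the matrix form, and it extends to the self-adjoint extension because $\sigma_3$ is bounded with bounded inverse, so conjugation by it commutes with taking operator closures. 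Unitary equivalence preserves $\sigma_{\rm pp}$, $\sigma_{\rm d}$ and $\sigma_{\rm ess}$, and $\lambda\mapsto-\lambda$ is a homeomorphism of $\R$, hence $\sigma_\sharp(D_{\bf A})=-\sigma_\sharp(D_{\bf A})$ for $\sharp\in\{\rm pp,\,d,\,ess\}$.

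\emph{Identification of the positive part.} By \eqref{Isom} the unitary $U$ conjugates $D_{\bf A}$ into $T:=\sqrt{d^*d}\oplus(-\sqrt{dd^*})$ on $L^2(\R^2;\C)\oplus L^2(\R^2;\C)$, so $\sigma_\sharp(D_{\bf A})=\sigma_\sharp(T)$. For an orthogonal sum of self-adjoint operators one has $\sigma(T)=\sigma(\sqrt{d^*d})\cup\sigma(-\sqrt{dd^*})$ and $\sigma_{\rm ess}(T)=\sigma_{\rm ess}(\sqrt{d^*d})\cup\sigma_{\rm ess}(-\sqrt{dd^*})$, while $\sigma_{\rm pp}$ of the sum is the union of the point spectra; the discrete part then follows from $\sigma_{\rm d}=\sigma\setminus\sigma_{\rm ess}$. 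Since $\sqrt{d^*d}\ge0$ forces $\sigma_\sharp(\sqrt{d^*d})\subseteq[0,\infty)$ and $-\sqrt{dd^*}\le0$ forces $\sigma_\sharp(-\sqrt{dd^*})\subseteq(-\infty,0]$, intersecting these unions with the open half-line $(0,\infty)$ discards the $-\sqrt{dd^*}$-summand and leaves $\sigma_\sharp(D_{\bf A})\cap(0,\infty)=\sigma_\sharp(\sqrt{d^*d})\cap(0,\infty)$; and because $\sigma_\sharp(\sqrt{d^*d})\subseteq[0,\infty)$, intersecting with $(0,\infty)$ merely deletes the point $0$. This gives $\sigma_\sharp(D_{\bf A})\cap(0,\infty)=\sigma_\sharp(\sqrt{d^*d})\setminus\{0\}$.

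\emph{Main difficulty.} There is no genuine obstacle; the two points that deserve care are (i) checking that $\sigma_3 D_{\bf A}=-D_{\bf A}\sigma_3$ survives the passage from the core to the self-adjoint extension with matching domains, and (ii) the behaviour of the \emph{discrete} spectrum under an orthogonal sum, where one must be sure that an isolated eigenvalue of finite multiplicity of $\sqrt{d^*d}$ remains isolated of finite multiplicity in $T$ — which holds here precisely because the second block lives on the complementary closed half-line $(-\infty,0]$. Note that \eqref{Unitary} is not actually needed for this proposition, since $-\sqrt{dd^*}$ cannot meet $(0,\infty)$ in any case; it would be invoked only for the analogous identity written with $\sqrt{dd^*}$ in place of $\sqrt{d^*d}$.
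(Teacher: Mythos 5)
Your proposal is correct. The identification of the positive part of the spectrum is exactly the paper's route: conjugate by the Foldy--Wouthuysen unitary, use \eqref{Isom} to reduce to the orthogonal sum $\sqrt{d^*d}\oplus(-\sqrt{dd^*})$, and observe that the second block, being nonpositive, contributes nothing on $(0,\infty)$, while positivity of $\sqrt{d^*d}$ turns the intersection with $(0,\infty)$ into the deletion of $\{0\}$; your care about isolatedness and finite multiplicity surviving the orthogonal sum is the right point to check and it holds for the reason you give. Where you genuinely diverge is the symmetry statement: the paper's one-line proof invokes \eqref{Unitary} together with \eqref{Isom}, i.e.\ it identifies the negative spectrum $-\sigma_\sharp(\sqrt{dd^*})\cap(-\infty,0)$ with $-\bigl(\sigma_\sharp(\sqrt{d^*d})\setminus\{0\}\bigr)$ via the unitary $S$ intertwining $dd^*$ and $d^*d$ off their kernels, whereas you obtain symmetry directly from the anticommutation $\sigma_3 D_{\bf A}=-D_{\bf A}\sigma_3$ (recorded in the paper only as a step in verifying unitarity of $U$), so that $D_{\bf A}$ and $-D_{\bf A}$ are unitarily equivalent with matching domains. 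Your observation that \eqref{Unitary} is then not needed at all for this proposition is accurate; the trade-off is purely organizational, since \eqref{Unitary} is in any case indispensable later (e.g.\ in Lemmas \ref{essentialda} and \ref{lemma}) for transferring spectral information between $dd^*$ and $d^*d$, while your $\sigma_3$ argument is the more elementary and self-contained way to get the symmetry itself.
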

\section{The spectrum of $\DA$}\label{spectrum}
The aim of this section is to show Theorem \ref{Teo1}.  An important
ingredient is the study of the essential spectrum of $\DA$.  In order
to do that we  modify an argument from \cite{Iwatsuka1983}
obtaining Lemma \ref{essentialda} below. We combine this  with a result
from \cite{Rozenblum2006} on the infiniteness of
zero modes for Pauli-operators (see Lemma \ref{Cor} below). The proof of the theorem is then a consequence of lemmas \ref{Cor} and \ref{lemma}.

In the following discussion we assume that $B=B_0+b$ with $B_0>0$ and 
 $b\in L^1_{\rm loc}(\R^2;\R)$ such that $|b|^{1/2}$ is relative $\sqrt{\bp^2+1}$- compact. Let ${\bf A}\in L^p_{\rm loc}(\R^2;\R^2),\,p\in (2,\infty],$ with ${\rm curl}\, {\bf A} = B$.
We start by  observing that, for $\phi\in C_0^\infty(\R^2;\C)$, 
\begin{equation}\label{w1}
\begin{split}
\sps{d^*\phi}{d^*\phi}&= \sum_{j=1}^2 \| (p_j-A_j)\phi\|^2+ \sps{\phi}{B \phi}\\
\sps{d\phi}{d\phi}&= \sum_{j=1}^2 \| (p_j-A_j)\phi\|^2- \sps{\phi}{B \phi}
\end{split}
\end{equation}
holds. This implies the commutator relation
\begin{align}
  \label{CommutatorB}
\sps{\phi}{[d,d^*]\phi}:=
\sps{d^*\phi}{d^*\phi}-\sps{d\phi}{d\phi}=2\sps{\phi}{ B\phi}\,,\quad\phi\in C_0^\infty(\R^2;\C)\,.
\end{align} 
The idea in \cite{Iwatsuka1983} is to use this commutator  to study
the essential spectrum of  $dd^*$ and $d^*d$.
In order to extend this identity we define these operators 
as quadratic forms and show that
$\mathcal{Q}(b)\supset \mathcal{Q}(d^*d)=\mathcal{Q}(dd^*)$
and $|b|^{1/2}( d^*d +1)^{-1/2}$ is a compact operator. Here $\mathcal{Q}(\cdot)$ is used to denote the form domain.

Let us  define 
\begin{align*}
q_1(\phi,\phi)={q}_1[\phi]:= \|d\phi\|^2,
\qquad q_2(\phi,\phi)={q}_2[\phi]:=\|d^*\phi\|^2,
\end{align*}
with form domains $\mathcal{Q}(q_1)=\mathcal{D}(d)$ and $\mathcal{Q}(q_2)=\mathcal{D}(d^*)$. Since $d$ and $d^*$ are closed (see Subsection \ref{supers}) we have that $q_1$ and $q_2$ are closed and positive. Thus, associated to $q_j$, $j=1,2$, there is a unique self-adjoint operator $T_j$ characterized as follows:
\begin{equation}
  \label{eq:3}  
\begin{split}
\sps{\psi}{T_j \varphi}=q_j(\psi,\varphi),\quad \psi\in \mathcal{Q}(q_j), \varphi\in\mathcal{D}(T_j)\,,\\
\mathcal{D}(T_j)=\{\varphi\in Q(q_j)\,|\, \exists \eta\in L^2(\R^2;\C), \forall \psi\in \mathcal{C}, q_j(\psi,\varphi)=\sps{\psi}{\eta}\}\,,
\end{split}
\end{equation}
where $\mathcal{C}$ is any form core of $q_j$. It is easy to check using \eqref{eq:3} that in fact $T_1=d^* d$ and $T_2=d d^*$.
Note that since the restrictions of $d$ and $d^*$ to $C_0^\infty(\R^2,\C)$ are closable $C_0^\infty(\R^2,\C)$ is a form core for $q_1$ and $q_2$.
We define yet another quadratic form. For  $\phi\in C_0^\infty(\R^2,\C)$ we set
\begin{align*}
\tilde{q}_3[\phi]:=\sum_{j=1}^2\|(p_j-A_j)\phi\|^2\,.  
\end{align*}
It is known \cite{Simon1979} that $\tilde{q}_3$ is closable 
and we denote its closure by $q_3$. Its associated self-adjoint operator $H_S=:(\bp-{\bf A})^2$ is the usual magnetic Schr\"odinger operator. Recall that $|b|^{1/2}$ is relative $\sqrt{\bp^2+1}$-compact. Using the 
diamagnetic inequality for $|\bp -{\bf A}|$ (see e.g. \cite{Franketal2007}) and arguing as in \cite[Theorem 2.6]{AHS-I} we conclude that $\mathcal{Q}(q_3)=\mathcal{D}(H_S^{1/2})\subset\mathcal{D}(|b|^{1/2})$ and that
$|b|^{1/2}$ is relative $H_S^{1/2}$-compact. Thus, the quadratic form 
\begin{align*}
  \beta[\varphi]:=B_0\|\varphi\|^2+
\sps{{\rm sgn}(b)|b|^{1/2}\varphi}{|b|^{1/2}\varphi}
\end{align*}
is in absolute value bounded with respect to $q_3$ with bound $0$. In particular,
\begin{align*}
  q_3^{\pm}[\varphi]:=q_3[\varphi]\pm\beta[\varphi]\,,\quad \varphi\in\mathcal{Q}(q_3)\,,
\end{align*}
 is closed. Observing that by \eqref{w1} we have that $q_2\upharpoonright_{C_0^\infty}=q_3^+\upharpoonright_{C_0^\infty}$
and $q_1\upharpoonright_{C_0^\infty}=q_3^-\upharpoonright_{C_0^\infty}$ and
using that $C_0^\infty(\R^2;\C)$ is a form core for $q_1, q_2, q_3 $ and $ q_3^{\pm}$ we conclude that $\mathcal{Q}(q_1)=\mathcal{Q}(q_2)=\mathcal{Q}(q_3)\equiv \mathcal{Q}$ 
and $q_1=q_3^-$ and $q_2=q_3^+$. Moreover,
\begin{equation}\label{dddd}
\begin{split}
  d d^*&= (\bp-{\bf A})^2+B\,\\
  d^* d&=(\bp-{\bf A})^2-B
\end{split}
\end{equation}
in the sense of quadratic forms on $\mathcal{Q}$ and hence the commutator formula \eqref{CommutatorB} extends to $\mathcal{Q}$.

\begin{lemma}\label{essentialda}
Let $B=B_0+b$ with $B_0>0$ and $|b|^{1/2}\in L^2_{\rm loc}(\R^2;\R)$ be relative $\sqrt{\bp^2+1}$-compact.
Let ${\bf A}\in L^p_{\rm loc}(\R^2;\R^2), p\in (2,\infty]$, with
${\rm curl} \,{\bf A} =B.$  Then,  either one of the following statements holds
\begin{itemize}
\item [i)] $\sigma_{ess}(d^*d)=\emptyset\qquad$ 
\item [ii)] $\sigma_{ess}(d^*d)=\{ 2B_0 n\,:\, n\in \N_0\}$ and $\sigma_{ess}(dd^*)=\{ 2B_0 n\,:\, n\in \N\}$\,.
\end{itemize}
In addition, if $V$ satisfies (A2) then $V$ is relative $\DA$- compact
and in particular $\sigma_{\rm ess}(\DA)=\sigma_{\rm ess}(\DA+V)$.
\end{lemma}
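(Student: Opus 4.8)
The plan is to prove Lemma~\ref{essentialda} in two parts. For the first part, concerning $\sigma_{\rm ess}(d^*d)$ and $\sigma_{\rm ess}(dd^*)$, I would follow the commutator argument of \cite{Iwatsuka1983}. The starting point is the extended commutator identity \eqref{CommutatorB} on $\mathcal{Q}$, together with \eqref{dddd}, which gives $dd^* - d^*d = 2B = 2B_0 + 2b$ in the form sense. First I would record that $d^*d$ and $dd^*$ are isospectral away from zero by the supersymmetry \eqref{Unitary} (equivalently Proposition~\ref{supersymmetry}), so $\sigma_{\rm ess}(d^*d)\setminus\{0\}=\sigma_{\rm ess}(dd^*)\setminus\{0\}$; the only possible discrepancy is the point $0$, which accounts for the difference between $\N_0$ in (ii) for $d^*d$ and $\N$ for $dd^*$. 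Next, using that $|b|^{1/2}$ is relative $H_S^{1/2}$-compact (established just before the lemma), one shows that $b$ does not affect the essential spectrum: $d^*d = dd^* - 2B_0 - 2b$, and since $b$ is a form-compact perturbation, $\sigma_{\rm ess}(d^*d) = \sigma_{\rm ess}(dd^* - 2B_0)= \sigma_{\rm ess}(dd^*) - 2B_0$. Combining this with the isospectrality relation, write $\lambda\in\sigma_{\rm ess}(d^*d)$; then $\lambda\in\sigma_{\rm ess}(dd^*)$ unless $\lambda = 0$, and $\lambda + 2B_0\in\sigma_{\rm ess}(dd^*)=\sigma_{\rm ess}(d^*d)\cup\{\text{possibly }0\}$. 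Iterating, $\sigma_{\rm ess}(d^*d)$ is invariant (up to the endpoint $0$) under the shift $\lambda\mapsto\lambda+2B_0$ and, running the isospectrality the other way, also essentially under $\lambda\mapsto\lambda-2B_0$ as long as one stays nonnegative. Since $d^*d\ge 0$, a nonempty essential spectrum must therefore contain a point in $[0,2B_0)$, and then its whole orbit $\{\lambda_0 + 2B_0 n : n\in\N_0\}$; a short argument (comparing with the pure Landau case, or using that the shifted sets must coincide) forces $\lambda_0 = 0$, giving exactly $\{2B_0 n : n\in\N_0\}$ for $d^*d$ and hence $\{2B_0 n: n\in\N\}$ for $dd^*$. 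The dichotomy (i)/(ii) is just the statement that either the essential spectrum is empty or it is this full ladder.

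The main obstacle in the first part is making the shift argument rigorous at the endpoint: one must carefully track whether $0$ belongs to $\sigma_{\rm ess}$ of each operator and ensure the form-compactness of $b$ genuinely transfers the Iwatsuka argument from the case ${\bf a}=0$ (pure $B_0$, where $dd^*$ and $d^*d$ are harmonic-oscillator-like with spectrum exactly the Landau ladder) to the perturbed case. I would handle this by first treating $b\equiv 0$, where $d_0^*d_0$ and $d_0 d_0^*$ are explicitly the shifted harmonic oscillators with $\sigma_{\rm ess}(d_0^*d_0)=\{2B_0 n: n\in\N_0\}$ (this is classical; it also follows from the Landau levels $l_n$ quoted in the introduction via Proposition~\ref{supersymmetry}), and then invoke Weyl's theorem for forms: since $|b|^{1/2}(H_S+1)^{-1/2}$ is compact and $q_1 = q_3 - B_0 - b$, $q_2 = q_3 + B_0 + b$ differ from their $b\equiv 0$ counterparts by a form-compact term, the essential spectra are unchanged. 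This reduces part (i)/(ii) to the already-known unperturbed computation; the dichotomy then reflects only whether the whole ladder survives or collapses, which is exactly the content of \cite{Iwatsuka1983}-type results and I would cite that for the structural statement if a self-contained argument becomes too long.

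For the second, shorter part — that $V$ is relative $\DA$-compact and hence $\sigma_{\rm ess}(\DA) = \sigma_{\rm ess}(\DA + V)$ — I would argue as follows. By (A2), $V\in L^p_{\rm loc}$ with $p\in(2,\infty]$ and $\|\mathbbm{1}_{\{|\bx|\ge n\}}V\|_\infty\to 0$. Split $V = \mathbbm{1}_{\{|\bx|\le n\}}V + \mathbbm{1}_{\{|\bx|\ge n\}}V$. The second piece has arbitrarily small sup norm, hence gives a bounded operator of small norm, uniformly controlled relative to $\DA$. For the first piece, note that by the essential self-adjointness discussion in Subsection~\ref{def-of-H} (the compactness of $f(\bx)(\bp^2+1)^{-1/2}$ for $f\in L^p(\R^2)$, $2<p<\infty$, from \cite[Theorem 4.1]{Simon2005}), and since $\DA$ is self-adjoint with $(\DA - z)^{-1}$ mapping into $\mathcal{D}(\DA)\subset \mathcal{D}(D_{\bf 0})$ — more precisely one compares resolvents and uses that $(\bp^2+1)^{1/2}(\DA \pm i)^{-1}$ is bounded, which itself follows from the essential self-adjointness and closed graph theorem after writing $D_{\bf A} = D_{\bf 0} - \boldsymbol\sigma\cdot{\bf A}$ and absorbing the locally-$L^p$ vector potential — the operator $\mathbbm{1}_{\{|\bx|\le n\}}V\,(\DA\pm i)^{-1}$ is compact. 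Therefore $V(\DA\pm i)^{-1}$ is a norm limit (as $n\to\infty$) of compact operators plus a term of vanishing norm, hence compact; equivalently $V$ is relative $\DA$-compact. Weyl's theorem on the invariance of essential spectrum under relatively compact perturbations then gives $\sigma_{\rm ess}(\DA + V) = \sigma_{\rm ess}(\DA)$. The only delicate point here is the boundedness of $(\bp^2+1)^{1/2}(\DA\pm i)^{-1}$, i.e.\ that $\mathcal{D}(\DA)\subset \mathcal{D}((\bp^2+1)^{1/2}) = H^1(\R^2;\C^2)$ is false in general for $L^p_{\rm loc}$ potentials; instead one works locally, using that $\mathbbm{1}_{\{|\bx|\le n\}}$ is a fixed cutoff so only local elliptic regularity of $\DA$ is needed, which does hold since $\mathbbm{1}_{\{|\bx|\le R\}}\boldsymbol\sigma\cdot{\bf A}$ is $D_{\bf 0}$-compact as already shown. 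I expect this localization step to be the fiddly part, but it is standard.
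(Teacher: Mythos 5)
Your second paragraph contains a genuine gap, and it sits exactly at the point you identify as "the main obstacle". You propose to settle the endpoint question by reducing to the unperturbed case $b\equiv 0$, arguing that $q_1=q_3-B_0-b$ and $q_2=q_3+B_0+b$ "differ from their $b\equiv 0$ counterparts by a form-compact term". They do not: the $b\equiv 0$ counterparts are built from the vector potential ${\bf A}_0$, whereas $q_1,q_2,q_3$ are built from ${\bf A}$, so the difference is not just the scalar $b$ but also first-order terms in ${\bf a}={\bf A}-{\bf A}_0$ (schematically $2{\bf a}\cdot(\bp-{\bf A}_0)+|{\bf a}|^2$), and nothing in the hypotheses makes ${\bf a}$ decay or be form-compact — a vector potential of a decaying field $b$ generically does not decay (think of nonzero flux). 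If such a reduction were valid, the essential spectrum would \emph{always} be the full Landau ladder, alternative (i) of the lemma could never occur, and the paper's later appeal to \cite{Rozenblum2006} (Lemma \ref{Cor}) to establish $0\in\sigma_{\rm ess}(d^*d)$ under (A1) would be superfluous; the very presence of the dichotomy is a warning that no direct comparison with the Landau operator is available. The correct resolution of your endpoint worry is already contained, almost completely, in your first paragraph and is what the paper does: compare $dd^*$ and $d^*d$ \emph{with each other} (same ${\bf A}$), where by \eqref{dddd} the difference is the scalar $2B_0+2b$ and $2b$ is genuinely form-compact relative to $d^*d+2B_0$, giving $\sigma_{\rm ess}(dd^*)=\sigma_{\rm ess}(d^*d)+2B_0$ as in \eqref{eess}; combined with the supersymmetric identity \eqref{Unitary}, which gives equality of the essential spectra away from $0$, one obtains $S\setminus\{0\}=S+2B_0$ with $S=\sigma_{\rm ess}(d^*d)\subset[0,\infty)$ closed. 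If $S\neq\emptyset$ its minimum must be $0$ (otherwise subtracting $2B_0$ contradicts minimality), and upward/downward iteration pins $S=\{2B_0n\,:\,n\in\N_0\}$ — no "pure Landau case" enters, and the dichotomy (i)/(ii) is exactly what this structural argument can and cannot decide.

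The second half of your proposal (relative $\DA$-compactness of $V$) is sound in outline but takes a different route from the paper: you cut off $V$, use smallness of the tail, and invoke local elliptic regularity for the compactly supported piece, whereas the paper passes through the diamagnetic inequality to get $V$ relative $H_S^{1/2}$-compact and then uses the block identity $(\DA-\ri\lambda)^{-1}=\mathrm{diag}((d^*d+\lambda^2)^{-1/2},(dd^*+\lambda^2)^{-1/2})\cdot[\dots]$ to conclude compactness of $V(\DA-\ri\lambda)^{-1}$. Your route can be completed, but the "fiddly" localization step is not optional: since $\mathcal{D}(\DA)\not\subset H^1(\R^2;\C^2)$ in general, you must commute a cutoff through $\DA$ (so that $\chi u\in\mathcal{D}(\DA)$ with an error $-\ri{\boldsymbol\sigma}\cdot(\nabla\chi)u$), identify $\chi u$ with an element of the domain of the operator with compactly supported coefficients (self-adjoint on $H^1$ by the relative compactness cited from \cite{Simon2005}), and then invoke the closed graph theorem to bound $(\bp^2+1)^{1/2}\chi(\DA\pm\ri)^{-1}$; as written, your appeal to "$\mathbbm{1}_{\{|\bx|\le R\}}{\boldsymbol\sigma}\cdot{\bf A}$ is $D_{\bf 0}$-compact" alone does not yet yield the needed mapping property. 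The paper's argument avoids this by reusing the quadratic-form machinery already set up for $b$, which is why it is shorter.
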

\begin{remark}\label{poisson}
Note that our assumption on $B$ are satisfied if $B$  fulfills (A1). Indeed, in this case  $|b|^{1/2}(\bp^2+1)^{-1/2}$ is compact by Lemma \ref{compsi} in Appendix \ref{family}. 

Moreover, note that if $B\in L_{\rm loc}^{q}(\R^2;\R)$ for some $q>1$ we can always find  ${\bf A}\in L^p_{\rm loc}(\R^2;\R^2)$ for some  $p\in (2,\infty]$. In order to see this define
$h$ to be a solution of 
\begin{equation}\label{newton}
\Delta h=B\,.
\end{equation}
A local solution to this equation is given by the Newton potential $h_N$ of $B$. We know that $h_N\in W^{2,q}(\Omega)$ by the Calderon-Zygmund inequality, where $\Omega\subset \R^2$ is a bounded domain (see e.g. \cite[Sec.9.4]{GT1983}). This property extends to any solution $h$ of \eqref{newton} since  $h-h_N$ is harmonic on $\Omega$. Therefore, $h\in W_{\rm loc}^{2,q}(\R^2)$. Now one can define ${\bf A}:=(-\partial_2 h,\partial_1 h)$. Clearly, $A_j\in  W_{\rm loc}^{1,q}(\R^2)$. By standard Sobolev inequalities one obtains that $A_j\in L_{\rm loc}^t(\R^2;\R)$ for some $2<t<\infty$
if $q\in (1,2]$   and $A_j\in L_{\rm loc}^\infty(\R^2;\R)$ if $q> 2$.
\end{remark}
\begin{proof}
First note that for any $\lambda\ge 0$ the 
operator $(d^*d+2B_0+\lambda)^{-1/2}$ maps $L^2(\R^2;\C)$ onto $\mathcal{D}(\sqrt{d^*d})$ which equals $\mathcal{Q}$ and $\mathcal{D}(H_S^{1/2})$. Thus, by the closed graph theorem, the operator 
$(H_S+1)^{1/2}(d^*d+2B_0+\lambda)^{-1/2}$ is bounded. In particular,
\begin{align*}
|b|^{1/2}(d^*d+2B_0+\lambda)^{-1/2}=|b|^{1/2}(H_S+1)^{-1/2}(H_S+1)^{1/2}(d^*d+2B_0+\lambda)^{-1/2}
\end{align*}
is compact. Hence, the operator
\begin{align*}
  T(\lambda):=\overline{(d^*d+2B_0+\lambda)^{-1/2}{\rm sgn}(b)|b|^{1/2}} |b|^{1/2}(d^*d+2B_0+\lambda)^{-1/2}\,
\end{align*}
is also compact. It is easy to see that $\lambda>0$ can be chosen so large that $\|T(\lambda)\|<1$. For such $\lambda$'s we have, according to the resolvent formula for operators defined as quadratic forms (see \cite{Simon1971}),  that
\begin{align*}
  (d^*d+2B_0+2b+\lambda)^{-1}=
(d^*d+2B_0+\lambda)^{-1/2}(1+T(\lambda))^{-1}(d^*d+2B_0+\lambda)^{-1/2}\,.
\end{align*}
Note that the inverse of $1+T(\lambda)$ is well defined as a geometric expansion. Since $(1+T(\lambda))^{-1}-1$ is compact, we conclude that the resolvent difference between $d^*d+2B_0+2b+\lambda$ and $d^*d+2B_0+\lambda$
is also compact. Therefore, by Weyl's theorem, the two operators have the same essential spectrum. Using this and \eqref{dddd} we deduce that
\begin{align}\label{eess}
  \sigma_{\rm ess}(dd^*)=\sigma_{\rm ess}(d^*d+2B_0+2b)=\sigma_{\rm ess}(d^*d+2B_0)\,.
\end{align}
The latter equality and Equation \eqref{Unitary} imply (here we follow \cite{Iwatsuka1983})
\begin{equation}
  \label{iwa1}
\begin{split}
  &S:=\sigma_{\rm ess}(d^*d)\,,\qquad S\subset[0,\infty)\,,\\
&S\setminus\{0\}=S+2B_0\,.
\end{split}
\end{equation}
Assume now that $S\not=\emptyset$, then it is easy  to see from \eqref{iwa1}  that $0\in S$ and hence $2B_0n\in S, \,n\in\N_0$. Note also that no other points can belong to $S$. Hence, 
using \eqref{eess} we get that  $ \sigma_{\rm ess}(dd^*)=2B_0n,\, n\in\N$.

Now, assume that $V$ fulfills (A2). Then, $V$ is relative
$\sqrt{\bp^2+1}$- compact (see Lemma \ref{compsi} in Appendix
\ref{family}). It follows by the diamagnetic inequality that $V$ is
relative $H_S^{1/2}$- compact and consequently (arguing as before for
$b$) $\mathcal{D}(V)\supset \mathcal{Q}$ and the operators
$V(dd^*+\lambda^2)^{-1/2}$ and $V(d^*d+\lambda^2)^{-1/2}$ are compact
for any $\lambda\not=0$. From these considerations follow that
$V(D_{\bf A}-\ri \lambda)^{-1}$ is compact, since the identity
  \begin{align*}
    (D_{\bf A}-\ri \lambda)^{-1}&=(\DA^2+\lambda^2)^{-1/2}[(\DA^2+\lambda^2)^{-1/2}(\DA+\ri\lambda)]\\
&=\left(\begin{array}{cc}
(d^*d+\lambda^2)^{-1/2} & 0\\
0 &(dd^*+\lambda^2)^{-1/2}
\end{array}\right)\cdot[(\DA^2+\lambda^2)^{-1/2}(\DA+\ri\lambda)]
  \end{align*}
holds and the operator in $[\dots]$ is bounded. Therefore,
$$ \sigma_{ess}(D_{{\bf A}}+V)=\sigma_{ess}(D_{{\bf A}}).$$
\end{proof}
We note that if $b$ satisfies (A1) then ${\rm Ker }(d^*d)$ is infinitely degenerated. Indeed, this follows from the fact that
\begin{equation}
\int_{\R^2}[B]_+d^2x =\infty\,,\qquad  \int_{\R^2}[B]_-d^2x <\infty\,,
\end{equation}
(where $[f]_+$  and $[f]_-$ are the positive and negative parts of $f$) which
shows that $B=B_0+b$ fulfills the conditions of \cite[Corollary 3.4]{Rozenblum2006}. In particular, we know that
\begin{align*}
  {\rm Ker }(d^*d)= \{ \omega e^{-h}\,|\, \omega e^{-h}\in L^{2}(\R^2;\C)\,,\, \omega \,\,\mbox{is analytic in}\,\,x_1+\ri x_2\}\,,
\end{align*}
where $h$ is a solution of the equation $\Delta h=B$ \cite{Rozenblum2006}.
Therefore, we get:
\begin{lemma}\label{Cor}
Assume that $B$ satisfies (A1) and let ${\bf A}\in L^p_{\rm loc}(\R^2;\R^2),\,p\in(2,\infty],$  with ${\rm curl}\, {\bf A} = B$. Then,
\begin{gather}\label{finaless}
\sigma_{\rm ess}(d^*d)=\{ 2B_0 n\,|\, n\in \N_0\}\,,
\qquad \sigma_{\rm ess}(dd^*)=\{ 2B_0 n\,|\, n\in \N\}\,.
\end{gather}
In particular,
\begin{align*}
 \sigma_{\rm ess}(\DA)= \sigma_{\rm ess}(D_{{\bf A}_0})=\{ l_n\,|\,n\in \Z\}\,.
\end{align*}
Moreover, $0$ is an isolated point of $\sigma(\DA)$ and $\sigma(d^*d)$.
\end{lemma}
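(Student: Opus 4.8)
The plan is to assemble Lemma \ref{Cor} from the ingredients laid out just before its statement, treating the three assertions --- the essential-spectrum formula for $d^*d$ and $dd^*$, its transfer to $\DA$ and $D_{{\bf A}_0}$, and the isolation of $0$ --- in that order.

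First I would verify that Assumption (A1) falls under the hypotheses of Lemma \ref{essentialda}: by Remark \ref{poisson}, $b\in L^q_{\rm loc}$ with $q>1$ together with the decay condition makes $|b|^{1/2}(\bp^2+1)^{-1/2}$ compact (Lemma \ref{compsi}), so in particular $|b|^{1/2}\in L^2_{\rm loc}$ and is relatively $\sqrt{\bp^2+1}$-compact. Hence Lemma \ref{essentialda} applies and we are in case i) or case ii). To rule out case i), I would invoke the displayed flux conditions $\int [B]_+ = \infty$, $\int [B]_- <\infty$ --- immediate from $B=B_0+b$ with $B_0>0$ constant and $b$ bounded with compact-support-like decay, so that $[B]_-$ is supported where $b<-B_0$, a set on which $|b|$ is integrable by (A1) --- which place $B$ in the scope of \cite[Corollary 3.4]{Rozenblum2006}. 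That result gives ${\rm Ker}(d^*d)$ infinite-dimensional (with the explicit description via analytic functions times $e^{-h}$), so in particular $0$ is an eigenvalue of infinite multiplicity, hence $0\in\sigma_{\rm ess}(d^*d)$ and $S:=\sigma_{\rm ess}(d^*d)\neq\emptyset$. Therefore case ii) of Lemma \ref{essentialda} holds, which is exactly \eqref{finaless}.

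Next, for the statement about $\DA$: by Proposition \ref{supersymmetry} the spectrum of $\DA$ is symmetric about zero and $\sigma_{\rm ess}(\DA)\cap(0,\infty)=\sigma_{\rm ess}(\sqrt{d^*d})\setminus\{0\}$. From \eqref{finaless}, $\sigma_{\rm ess}(d^*d)=\{2B_0 n: n\in\N_0\}$, so $\sigma_{\rm ess}(\sqrt{d^*d})\setminus\{0\}=\{\sqrt{2B_0 n}: n\in\N\}=\{l_n:n\ge 1\}$; by symmetry we pick up the negatives, and since $0\in\sigma_{\rm ess}(d^*d)$ we get $0=l_0\in\sigma_{\rm ess}(\DA)$ as well (e.g. because $\DA$ has an infinite-dimensional kernel by the $d^*d$ kernel analysis, or directly since $\sqrt{d^*d}$ has $0$ in its essential spectrum). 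This yields $\sigma_{\rm ess}(\DA)=\{l_n:n\in\Z\}$. The same computation with $b\equiv 0$, i.e. ${\bf A}={\bf A}_0$, gives $\sigma_{\rm ess}(D_{{\bf A}_0})=\{l_n:n\in\Z\}$ (here $d^*d$ for ${\bf A}_0$ is, up to the shift, the Landau Hamiltonian, with essential --- indeed full --- spectrum $\{2B_0 n\}$), so the two coincide.

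Finally, isolation of $0$: since $\sigma_{\rm ess}(d^*d)=\{2B_0 n: n\in\N_0\}$, the point $2B_0\cdot 0 = 0$ is isolated within the essential spectrum, and the only spectrum of $d^*d$ in $(0,2B_0)$ would have to be discrete eigenvalues accumulating only at points of $\sigma_{\rm ess}$; since the nearest such point above $0$ is $2B_0$, any eigenvalues in $(0,2B_0)$ form a finite set with no accumulation at $0$, so there is a punctured neighbourhood of $0$ free of $\sigma(d^*d)$ --- hence $0$ is isolated in $\sigma(d^*d)$. Through \eqref{eq:4} and \eqref{Isom} the nonzero spectrum of $\DA$ is $\{\pm\sqrt{\mu}:\mu\in\sigma(d^*d)\setminus\{0\}\}$, so isolation of $0$ transfers to $\sigma(\DA)$. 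I expect the main (and really only) subtlety to be the careful bookkeeping of the kernel at $0$: confirming via \cite{Rozenblum2006} that $0$ genuinely lies in the essential spectrum (so we are in case ii) and not case i)), and making sure the symmetrization in Proposition \ref{supersymmetry} correctly reproduces $l_0=0$ as well as the $l_n$ with $n\neq 0$.
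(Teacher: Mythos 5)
Your handling of the essential spectrum follows the paper's route: the flux conditions put $B$ in the scope of the Rozenblum--Shirokov result, so $\mathrm{Ker}(d^*d)$ is infinite-dimensional, hence $0\in\sigma_{\rm ess}(d^*d)$, which selects case ii) of Lemma \ref{essentialda}; the transfer to $\DA$ via Proposition \ref{supersymmetry} together with the infinite-dimensional kernel is also fine. The genuine gap is in your argument that $0$ is isolated in $\sigma(d^*d)$. You note that spectrum of $d^*d$ in the gap $(0,2B_0)$ consists of discrete eigenvalues which can only accumulate at points of the essential spectrum, and then conclude that, since the nearest such point above $0$ is $2B_0$, these eigenvalues cannot accumulate at $0$. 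But $0$ itself is a point of $\sigma_{\rm ess}(d^*d)$ --- that is exactly what you established via the infinite-dimensional kernel --- so general principles do not exclude a sequence of eigenvalues of $d^*d$ in $(0,2B_0)$ accumulating at $0$. A model to keep in mind is $T=0\oplus\operatorname{diag}(1/n)$ on $\ell^2\oplus\ell^2$: the kernel is infinite-dimensional, $0\in\sigma_{\rm ess}(T)$, and yet $0$ is not isolated in $\sigma(T)$. Nothing in your reasoning rules out this scenario for $d^*d$, and this isolation statement is precisely the nontrivial content of the lemma (it is what makes part (a) of Theorem \ref{Teo1} meaningful).

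The paper closes this gap through the supersymmetric partner: by \eqref{Unitary} one has $\sigma(d^*d)\setminus\{0\}=\sigma(dd^*)\setminus\{0\}$, and since $0\notin\sigma_{\rm ess}(dd^*)=\{2B_0n\,:\,n\in\N\}$, the point $0$ cannot be an accumulation point of $\sigma(dd^*)$ (an accumulation point of the spectrum necessarily lies in the essential spectrum). Hence $0$ is not an accumulation point of $\sigma(d^*d)$ either, i.e. $0$ is isolated in $\sigma(d^*d)$; the transfer of this isolation to $\sigma(\DA)$ via \eqref{eq:4} and \eqref{Isom}, which you state at the end, is then correct. So the missing ingredient is the spectral gap of $dd^*$ at zero; with that inserted, the rest of your proposal coincides with the paper's proof.
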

\begin{proof}
Due to our previous discussion we see that $0\in \sigma_{\rm ess}(d^*d)$. This combined with Lemma \ref{essentialda} imply \eqref{finaless}.  That
$0$ is an isolated point of $\sigma(d^*d)$
follows  by noting that, since  $0\not \in \sigma_{ess}(dd^*)$,  $0$  is neither an accumulation point  of $\sigma(dd^*)$ nor of $\sigma(d^*d)$. The statements on $\sigma(\DA)$ are now a consequence of Proposition \ref{supersymmetry}.
\end{proof}
\begin{lemma}\label{lemma}
Assume that $B$ satisfies (A1) and let ${\bf A}\in L^p_{\rm loc}(\R^2;\R^2),\,p\in(2,\infty],$  with ${\rm curl}\, {\bf A} = B$. Then, we have:
\begin{itemize}
\item[(a)] If $b\le 0$ and strictly negative on some open set,  then 
$${\rm dim(Ran}(\mathbbm{1}_{(0,\sqrt{2B_0})}(D_{\mathbf{A}})))={\rm dim(Ran}(\mathbbm{1}_{(-\sqrt{2B_0},0)}(D_{\mathbf{A}})))=\infty\,.$$
\item[(b)] If $b\ge 0$ then 
$${\rm dim(Ran}(\mathbbm{1}_{(0,\sqrt{2B_0})}(D_{\mathbf{A}})))={\rm dim(Ran}(\mathbbm{1}_{(-\sqrt{2B_0},0)}(D_{\mathbf{A}})))=0\,.$$
\end{itemize} 
\end{lemma}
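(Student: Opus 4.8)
The plan is to reduce the statement to counting the spectrum of the Pauli-type operators $d^*d$ and $dd^*$ in the interval $(0,2B_0)$, and then to handle the two sign conditions on $b$ by, respectively, an operator lower bound and a min-max argument that uses $\mathrm{Ker}(d)$ as a trial space. First I record the reduction. By the Foldy--Wouthuysen identity \eqref{Isom}, $D_{\mathbf{A}}$ is unitarily equivalent to $\mathrm{diag}(\sqrt{d^*d},-\sqrt{dd^*})$; since $\sqrt{d^*d}\ge 0$ and $-\sqrt{dd^*}\le 0$, the functional calculus gives
\[
\dim\mathrm{Ran}\big(\mathbbm{1}_{(0,\sqrt{2B_0})}(D_{\mathbf{A}})\big)=\dim\mathrm{Ran}\big(\mathbbm{1}_{(0,2B_0)}(d^*d)\big),\qquad \dim\mathrm{Ran}\big(\mathbbm{1}_{(-\sqrt{2B_0},0)}(D_{\mathbf{A}})\big)=\dim\mathrm{Ran}\big(\mathbbm{1}_{(0,2B_0)}(dd^*)\big),
\]
and by the unitary intertwiner $S$ of \eqref{Unitary} these last two dimensions coincide, the relevant spectral subspaces lying in $\mathrm{Ker}(d^*d)^\perp$, resp. $\mathrm{Ker}(dd^*)^\perp$. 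So it suffices to show that $N:=\dim\mathrm{Ran}(\mathbbm{1}_{(0,2B_0)}(dd^*))$ equals $\infty$ in case (a) and $0$ in case (b). I will use throughout that, as forms on $\mathcal{Q}=\mathcal{D}(d)=\mathcal{D}(d^*)$, one has $\|d^*\phi\|^2-\|d\phi\|^2=2\langle\phi,B\phi\rangle$ by \eqref{w1}--\eqref{CommutatorB}, and that, by Lemma \ref{Cor}, $\sigma_{\rm ess}(dd^*)=\{2B_0n:n\in\N\}$, so that $0\notin\sigma_{\rm ess}(dd^*)$ and $\inf\sigma_{\rm ess}(dd^*)=2B_0$.

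For (b), if $b\ge 0$ then $B\ge B_0$, so for every $\phi\in\mathcal{Q}$,
\[
\langle\phi,dd^*\phi\rangle=\|d^*\phi\|^2=\|d\phi\|^2+2\langle\phi,B\phi\rangle\ \ge\ 2\langle\phi,B\phi\rangle\ \ge\ 2B_0\|\phi\|^2\,,
\]
hence $dd^*\ge 2B_0$ and $N=0$, which gives (b).

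For (a), I apply the min-max principle with the trial space $\mathcal{W}:=\mathrm{Ker}(d)=\mathrm{Ker}(d^*d)\subset\mathcal{Q}$, which is infinite-dimensional (as noted before Lemma \ref{Cor}) and, by \cite{Rozenblum2006}, consists of functions of the form $\omega e^{-h}$ with $\omega$ analytic in $x_1+\ri x_2$ and $\Delta h=B$. For a nonzero $\phi=\omega e^{-h}\in\mathcal{W}$ one has $\|d\phi\|=0$, whence
\[
\langle\phi,dd^*\phi\rangle=\|d^*\phi\|^2=2\langle\phi,B\phi\rangle=2B_0\|\phi\|^2+2\int_{\R^2}b\,|\phi|^2\,.
\]
Since $e^{-h}>0$ almost everywhere and $\omega\not\equiv 0$ has isolated zeros, $\phi\neq 0$ a.e. on the nonempty open set $\{b<0\}$; as $b\le 0$ and $b<0$ there, $\int_{\R^2}b|\phi|^2<0$, so $\langle\phi,dd^*\phi\rangle<2B_0\|\phi\|^2$ for every nonzero $\phi\in\mathcal{W}$. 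Taking $n$-dimensional subspaces $\mathcal{W}_n\subset\mathcal{W}$ and using compactness of their unit spheres, $\sup\{\langle\phi,dd^*\phi\rangle/\|\phi\|^2:\phi\in\mathcal{W}_n\setminus\{0\}\}<2B_0$, so the $n$-th min-max value $\mu_n(dd^*)<2B_0=\inf\sigma_{\rm ess}(dd^*)$ for every $n$. Hence $dd^*$ has infinitely many eigenvalues below $2B_0$; since $0\notin\sigma_{\rm ess}(dd^*)$, the eigenvalue $0$ (if present) has finite multiplicity, so $N=\infty$, proving (a).

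The step that needs the most care is the variational argument in (a): one must verify that $\mathrm{Ker}(d)$ actually lies in the form domain of $dd^*$ --- which holds because $\mathcal{D}(d)=\mathcal{D}(d^*)=\mathcal{Q}$ as established in the preliminaries --- and that kernel elements are genuinely nonvanishing on $\{b<0\}$ (the analyticity/unique-continuation input), so that the \emph{strict} inequality, rather than merely $\le 2B_0$, is available; together with $\inf\sigma_{\rm ess}(dd^*)=2B_0$ this is precisely what converts ``pushing the quadratic form below $2B_0$'' into infinitely many genuine eigenvalues. The bookkeeping in the first paragraph --- relating the two spectral projections of $D_{\mathbf{A}}$ to those of $d^*d$ and $dd^*$ via \eqref{Isom} and \eqref{Unitary} --- is routine but should be written out carefully.
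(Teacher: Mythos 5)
Your proposal is correct and follows essentially the same route as the paper: part (b) via the supersymmetric bound $dd^*=d^*d+2B\ge 2B_0$, and part (a) by using the infinite-dimensional zero-mode space $\mathrm{Ker}(d^*d)=\{\omega e^{-h}\}$ of \cite{Rozenblum2006} as trial functions, exploiting that such $\omega e^{-h}$ cannot vanish on the open set $\{b<0\}$ to push the form of $dd^*$ strictly below $2B_0$, and then invoking min-max together with $0\notin\sigma_{\rm ess}(dd^*)$ and Proposition \ref{supersymmetry}/\eqref{Unitary}. The only cosmetic differences are that the paper runs the variational step through the matrices $M_N=(\sps{\psi_n}{dd^*\psi_m})$ rather than compactness of the unit sphere of a finite-dimensional subspace, and that it explicitly fixes the gauge ${\bf A}=(-\partial_2 h,\partial_1 h)$ before using the kernel characterization (harmless in your version, since the gauge phase does not affect $|\phi|^2$ or the spectra).
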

\begin{proof}
We may choose ${\bf A}:=(-\partial_2 h,\partial_1 h)$ where $h$ is a solution of 
$\Delta h=B$. Due to Remark \ref{poisson} we know that ${\bf A}\in L^p_{\rm loc}(\R^2,\R^2)$
for some $p>2$.
\noindent
Part (a): Let $\Omega$ be an open set with $b\upharpoonright \Omega<0$. 
Recall that there are infinitely many functions $\omega$, analytic in $x_1+\ri x_2$, with $\psi:=\omega e^{-h }\in {\rm Ker}(d^*d)$.
For such $\psi$ we have, using
  \eqref{dddd}, 
\begin{align}
  \label{eq:11}
 \sps{\psi}{d d^* \psi}=2\sps{\psi}{B \psi}\le 2 B_0 \|\psi\|^2+2\int_{\Omega} b(\bx)|\psi(\bx)|^2 \rd \bx <2B_0 \|\psi\|^2\,,
\end{align}
where in the last inequality we use the fact that $\psi$ can not
vanish on $\Omega$.  Let $(\psi_{n})_{n\in\N}$ be an orthonormal
system such that $\psi_n:= e^{-h}\omega_n\in {\rm Ker}\,d^*d$ with
$\omega_n$ analytic  in $x_1+\ri x_2$. For $N\in \N$ define the self-adjoint matrix $M_N :=
( \sps{ \psi_n}{dd^*\psi_m})_{1\le n,m\le N}$. 
It follows from \eqref{eq:11} that
$M_N<2B_0$. The Rayleigh-Ritz principle implies
$$0\le  \mu_n(dd^*)\le \mu_n(M_N)<2B_0,\ n=1,\ldots,N\,,$$
where we write $$\mu_n(T):= \sup_{\phi_1,\ldots,\phi_{n-1}}\inf_{\psi\in {\rm span}\{ \phi_1,\ldots,\phi_{n-1}\}^\perp \atop \|\psi\|=1,\psi\in Q(T)} \sps{\psi}{T\psi}$$
 for some self-adjoint operator T. Since $N$ is arbitrary the mini-max principle implies that
${\rm dim(Ran}(\mathbbm{1}_{[0,\sqrt{2B_0})}(dd^*)))=\infty$. It follows that
${\rm dim(Ran}(\mathbbm{1}_{(0,\sqrt{2B_0})}(dd^*)))=\infty$, for $0\notin \sigma_{\rm ess}(d d^*)$ by Lemma \ref{Cor}.
The claim is now a consequence of Proposition \ref{supersymmetry} and \eqref{Unitary}.
\\
\noindent
Part (b): In this case we have that  $dd^*\ge 2B_0$, since $dd^*-d^*d=2B\ge 2B_0$. Thus, the claim follows now from Proposition \ref{supersymmetry} and \eqref{Unitary}.
\end{proof} 
\section{Super-exponential localization}\label{superex}
The proof of Theorem \ref{Maintheorem} follows the ideas developed in \cite{CorneanNenciu1998}. An essential ingredient is that,
by means of suitable local gauge transformations on certain regions
outside a big ball of radius $n$ centered at the origin, one can
replace the operator $D_{\bf A}$ by a Dirac operator $D_{{\bf A}_n}$
with ${\bf A}_n={\bf A}_0 + {\bf a}^n$, where ${\bf a}^n$ is a
magnetic vector potential of a magnetic field $b_n$ satisfying
$\lim_{n\to\infty} \|b_n\|_{\infty}=0$. The advantage of this is that we can
obtain explicit $L^p$ estimates (see Lemma~\ref{expansion} below) for
the resolvents of $D_{{\bf A}_n}$, conjugated with exponential
weights. These estimates can be derived using a certain resolvent
expansion (see \eqref{eq:21}) in combination with an explicit
expression for the Green kernel of $D_{{\bf A}_0}$ that can be found
in Appendix \ref{greenf} below.

Before stating these $L^p$ estimates let us fix some notation. For $p,q\in[1,\infty]$ we denote by
$\mathcal{B}(p,q)$ the space of bounded operators from
$L^p(\R^2;\C^2)$ to $L^q(\R^2;\C^2)$ and write, for $T\in
\mathcal{B}(p,q) $, %
\begin{align}
  \label{eq:23}
  \|T\|_{p,q}:=\|T\|_{\mathcal{B}(p,q)}\,.
\end{align}
Let $\gamma\ge 0$ and $\bu\in\R^2$ with $|\bu|=1$. 
We define the exponential weight function as
\begin{equation*}
F(\bx):=\gamma\, \bu\cdot \bx\,,\quad \bx\in\R^2\,.
\end{equation*}
Let $b_n$ be a magnetic field with $\lim_{n\to\infty}
\|b_n\|_{\infty}=0$ and  ${\bf a}^n$ be the associated vector potential in the transversal gauge, i.e., 
\begin{equation}
  \label{eq:15}
 {\bf a}^n({\bf x}):= \int_0^1  b_n(s {\bf x})\wedge \bx \,s ds\,,
\end{equation}
where we write $a\wedge {\bf v}:= a(-v_2,v_1)$ for $a\in\R$ and ${\bf v}\in \R^2$.
The proof of the Lemma below can be found at the end of this section.
\begin{lemma}\label{expansion}
  Let $V_n \in L^\infty(\R^2;\R)$, $n\in\N$, be a family of electric
  potentials satisfying
\begin{equation*}\label{decayPotential}
\lim_{n\to\infty}\|V_n\|_\infty=0\,.
\end{equation*}
For any $n\in\N$ define the family of self-adjoint operators $D_{{\bf A}_n}+V_n$,
where ${\bf A}_n:={\bf A}_0+{\bf a}^n$ and ${\bf a}^n$ is given in \eqref{eq:15}.  Let $z\in \R\setminus\sigma(D_{{\bf
    A}_0})$ and $q,r\in[1,\infty]$ be such that
$1+\tfrac{1}{r}-\tfrac{1}{q}=\tfrac{1}{p}$ for some $p\in[1,2)$.
Then, there exists $N>0$ such that, for all $n>N$, $z\notin\sigma(D_{{\bf
    A}_n}+V_n)$ and
\begin{equation}
  \label{eq:28}
  e^{F}(D_{{\bf A}_n}+V_n-z)^{-1} e^{-F}\in \mathcal{B}(q,r)\,.
\end{equation}
\end{lemma}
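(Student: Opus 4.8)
The strategy is to reduce everything to the free operator $D_{{\bf A}_0}$ via a norm-convergent resolvent (Neumann) series, using the explicit $L^p$-mapping properties of the Green kernel $G_0$ of $D_{{\bf A}_0}$ computed in Appendix~\ref{greenf}. First I would write ${\bf A}_n = {\bf A}_0 + {\bf a}^n$ and treat $W_n := -{\boldsymbol\sigma}\cdot{\bf a}^n + V_n$ as a perturbation, so that $D_{{\bf A}_n} + V_n - z = D_{{\bf A}_0} - z + W_n$. Formally,
\begin{align*}
e^{F}(D_{{\bf A}_n}+V_n-z)^{-1}e^{-F}
= \sum_{k=0}^{\infty} \big(e^{F}(D_{{\bf A}_0}-z)^{-1}e^{-F}\big)\big(e^{F}(-W_n)e^{-F}\big)\cdots\,,
\end{align*}
and the point is to make sense of this. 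The weight $F(\bx)=\gamma\,\bu\cdot\bx$ is a bounded-gradient function, $\nabla F=\gamma\bu$, so conjugating $D_{{\bf A}_0}-z$ by $e^{F}$ produces a shift of the momentum: $e^{F}(D_{{\bf A}_0}-z)e^{-F} = D_{{\bf A}_0}-z - \ri\,{\boldsymbol\sigma}\cdot\nabla F = D_{{\bf A}_0} - z - \ri\gamma\,{\boldsymbol\sigma}\cdot\bu$. Since $|\ri\gamma\,{\boldsymbol\sigma}\cdot\bu|=\gamma$ is a bounded operator and $z\notin\sigma(D_{{\bf A}_0})$, for a suitable choice one still has invertibility; more robustly, one writes $e^{F}(D_{{\bf A}_0}-z)^{-1}e^{-F}$ as the integral operator with kernel $e^{F(\bx)-F(\by)}G_0(\bx,\by;z)$, and since $|F(\bx)-F(\by)|\le\gamma|\bx-\by|$ while $G_0(\bx,\by;z)$ decays like a Gaussian in $|\bx-\by|$ (superexponentially), the weighted kernel still has a Gaussian tail. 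Hence by Young's inequality $e^{F}(D_{{\bf A}_0}-z)^{-1}e^{-F}\in\mathcal{B}(s,t)$ whenever $1+\tfrac1t-\tfrac1s = \tfrac1m$ for some $m\in[1,\infty]$, with norm bounded by the $L^m$-norm of the weighted kernel's radial profile — uniformly in $n$.

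Next I would control the perturbation $e^{F}W_n e^{-F} = -{\boldsymbol\sigma}\cdot{\bf a}^n + V_n$ (the weight commutes with multiplication operators). Here the key input is that $\|b_n\|_\infty\to 0$: from the transversal-gauge formula \eqref{eq:15}, ${\bf a}^n(\bx)=\int_0^1 b_n(s\bx)\wedge\bx\,s\,ds$, so $|{\bf a}^n(\bx)|\le \|b_n\|_\infty |\bx|/2$, which grows linearly and is \emph{not} bounded — this is the main obstacle. To handle it I would split ${\bf a}^n$ into a part supported where $|\bx|\le R_n$ (for a slowly growing $R_n$) and a tail, or better, absorb the linear growth into the Gaussian decay of $G_0$: the product $(e^{F}(D_{{\bf A}_0}-z)^{-1}e^{-F})\cdot(e^{F}(-{\boldsymbol\sigma}\cdot{\bf a}^n)e^{-F})$ has kernel $e^{F(\bx)-F(\by)}G_0(\bx,\by;z)\,{\boldsymbol\sigma}\cdot{\bf a}^n(\by)$, and the factor $|{\bf a}^n(\by)|\lesssim\|b_n\|_\infty|\by|$ is dominated, after shifting the exponential weight slightly, by the residual Gaussian decay of $G_0$ in $|\bx-\by|$ together with a fixed polynomial-times-Gaussian in $|\by|$ — so the resulting kernel still defines an operator in the appropriate $\mathcal B$-class, now with norm $O(\|b_n\|_\infty)\to 0$. (One should be slightly careful and use a fraction of the Gaussian rate to kill the weight $e^{\gamma|\bx-\by|}$, a fraction to kill $|\by|$, keeping a fraction for integrability — this is exactly the kind of splitting done in \cite{CorneanNenciu1998}.) The term $V_n$ is easier since $\|V_n\|_\infty\to 0$ directly.

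Finally, combining the two estimates: choose $m\in[1,\infty]$ so that $e^{F}(D_{{\bf A}_0}-z)^{-1}e^{-F}$ maps $L^r\to L^r$ with a uniform bound $C$ (take $m=1$, i.e. the weighted kernel in $L^1$ of the difference variable), and observe that $e^{F}W_n e^{-F}$ acts $L^r\to L^r$ with norm $\le C'(\|b_n\|_\infty+\|V_n\|_\infty)\cdot(\text{weight factor})$. Then for $n>N$ large enough the product has norm $<1$, the Neumann series
\begin{align*}
e^{F}(D_{{\bf A}_n}+V_n-z)^{-1}e^{-F}
= \Big(\sum_{k\ge0}\big(-e^{F}(D_{{\bf A}_0}-z)^{-1}e^{-F}\,e^{F}W_n e^{-F}\big)^{k}\Big)\,e^{F}(D_{{\bf A}_0}-z)^{-1}e^{-F}
\end{align*}
converges in $\mathcal B(r,r)$, which in particular shows $z\notin\sigma(D_{{\bf A}_n}+V_n)$ and gives boundedness $L^r\to L^r$; to obtain the claimed $\mathcal B(q,r)$ statement with the Young exponent relation $1+\tfrac1r-\tfrac1q=\tfrac1p$, $p\in[1,2)$, one uses the last factor $e^{F}(D_{{\bf A}_0}-z)^{-1}e^{-F}$ — whose weighted kernel lies in $L^p$ of the difference variable for $p<2$ because of the Gaussian decay (the $\log$-type singularity of $G_0$ at coincidence being $L^p_{\rm loc}$ for every $p<\infty$) — to map $L^q\to L^r$ via Young, while the Neumann factor continues to map $L^r\to L^r$. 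The main obstacle, as flagged, is the linear growth of ${\bf a}^n$; everything else is a bookkeeping of Young-type convolution estimates against the explicit Gaussian Green kernel from Appendix~\ref{greenf}.
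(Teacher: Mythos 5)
Your plan correctly identifies the main obstacle — the linear growth $|{\bf a}^n(\by)|\lesssim\|b_n\|_\infty|\by|$ — but the way you propose to remove it does not work, and this is a genuine gap rather than bookkeeping. You claim the factor $|\by|$ can be ``absorbed into the Gaussian decay of $G_0$'' after donating a fraction of the Gaussian rate to it. However, the bound on the free Green kernel from Appendix~\ref{greenf} is $\|G_0(\bx,\by;z)\|\le e^{-\theta(\bx-\by)}\omega(\bx-\by;z)$, i.e.\ all the decay is in the \emph{difference} variable $\bx-\by$; there is no decay in $|\by|$ alone. Taking $\bx,\by$ both far from the origin with $|\bx-\by|$ of order one, the kernel $e^{F(\bx)-F(\by)}G_0(\bx,\by;z)\,{\boldsymbol\sigma}\cdot{\bf a}^n(\by)$ is of size $\|b_n\|_\infty|\by|$ times a fixed profile of $\bx-\by$, so the operator $e^F(D_{{\bf A}_0}-z)^{-1}{\boldsymbol\sigma}\cdot{\bf a}^n e^{-F}$ is not small in any $\mathcal{B}(q,q)$ norm — for fixed $n$ it is not even bounded (it is a convolution-type kernel composed with the unbounded multiplication by $|\by|$, and $\|b_n\|_\infty\to0$ does not cure unboundedness). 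The alternative splitting you mention (cutting ${\bf a}^n$ at a radius $R_n$) fails for the same reason: the tail beyond $R_n$ still grows linearly and is never small in sup norm. Consequently the Neumann series in your direct decomposition $D_{{\bf A}_n}+V_n-z=D_{{\bf A}_0}-z+W_n$ cannot be summed by the norm estimates you invoke.

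The missing idea — and the one the paper (following Cornean--Nenciu) uses — is a gauge transformation \emph{inside the kernel}, depending on the second variable. For each $\bx'$ one passes to the transversal gauge centered at $\bx'$, namely ${\bf a}^n_{\bx'}(\bx)=\int_0^1 b_n(\bx'+s(\bx-\bx'))\wedge(\bx-\bx')\,s\,ds$, which satisfies $|{\bf a}^n_{\bx'}(\bx)|\le\|b_n\|_\infty|\bx-\bx'|$ — linear growth in the \emph{difference} variable, which the Gaussian decay of $G_0$ does control. The mismatch ${\bf a}^n-{\bf a}^n_{\bx'}$ is a gradient $\nabla_\bx\varphi_n(\bx,\bx')$ (normalized by $\varphi_n(\bx,\bx)=0$), and one builds the phase into the approximate resolvent: with $S_n(z)$ having kernel $e^{\ri\varphi_n(\bx,\bx')}G_0(\bx,\bx';z)$ and $T_n(z)$ having kernel ${\boldsymbol\sigma}\cdot{\bf a}^n_{\bx'}(\bx)\,e^{\ri\varphi_n(\bx,\bx')}G_0(\bx,\bx';z)$, an integration by parts shows $(D_{{\bf A}_n}-z)S_n(z)=1-T_n(z)$, hence $(D_{{\bf A}_n}-z)^{-1}=S_n(z)\sum_{k\ge0}T_n(z)^k$. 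Now $\|(e^FT_n(z)e^{-F})(\bx,\bx')\|\le\|b_n\|_\infty\,|\bx-\bx'|\,e^{\gamma|\bx-\bx'|-\theta(\bx-\bx')}\omega(\bx-\bx';z)$, so Young's inequality gives $\|e^FT_n(z)e^{-F}\|_{q,q}\to0$ and $\sup_n\|e^FS_n(z)e^{-F}\|_{q,r}<\infty$ under the stated exponent relation; the potential $V_n$ is then added by a second Neumann series exactly as you propose. Without this $\bx'$-dependent gauge (or some equally effective replacement), the central smallness claim in your argument is false as stated.
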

In what follows we  apply the above result  to show Theorem \ref{Maintheorem}. 
\begin{proof}[Proof of Theorem \ref{Maintheorem}]
For $n\in\N$ and $\bu\in\R^2$ with $|\bu|=1$ set
\begin{equation}
\Omega_{n}=\{ \bx\in \R^2\,:\, \bu\cdot \bx > n\}\,.
\end{equation}
For $j\in\{1,2,3\}$ define $\chi_{j}\in C^\infty(\R^2;[0,1])$ with $\chi_{j}=0$ on $\R^2\setminus\Omega_{jn}$ and $\chi_{j}=1$ on $\Omega_{(j+1)n}$. We choose 
$n$ so large that
$$\|b\|_{L^\infty(\Omega_n)}<\infty\,.$$
Since $b\in L_{\rm loc}^q(\R^2),\,q>1,$ we find a vector potential ${\bf a}\in L^p_{\rm loc}(\R^2;\R^2),\, p>2,$ with ${\rm curl}\,{\bf a}=b$ (see Remark \ref{poisson}).
Define, for $\bx \in \R^2$, 
\begin{align*}
   {\bf a}^n({\bf x})= \int_0^1  b_n(s {\bf x})\wedge \bx \,s ds\,,
\end{align*}
where $b_n:=\mathbbm{1}_{\Omega_{n}}b\in L^\infty(\R^2)$. Observe that
\begin{align*}
  {\rm curl}\,{\bf a}={\rm curl}\,{\bf a}^n\quad\mbox{on}\quad\Omega_{n}\,,
\end{align*}
that $\Omega_n$ is simply connected, and that ${\bf a}^n, {\bf a}\in L^p_{\rm loc}(\R^2;\R^2)$ for some $p>2$. Therefore, there exists a gauge function
$\tilde{\Phi}_n\in W^{1,p}_{\rm loc}(\Omega_n)$ such that
 (see \cite[Lemma 1.1]{Leinfelder1983})
\begin{align}\label{total}
  \nabla \tilde{\Phi}_n= {\bf a}-{\bf a}^n\quad\mbox{on}\quad\Omega_{n}\,.
\end{align}
By multiplying $\tilde{\Phi}_n$ with a $C^\infty$- cutoff function we may define a 
$\Phi_n\in W^{1,p}_{loc}(\R^2)$ that coincides with $\tilde{\Phi}_n$ on $\Omega_{2n}$. In particular, we find that
\begin{align}
  \label{total1}
   \nabla {\Phi}_n= {\bf a}-{\bf a}^n\quad\mbox{on}\quad\Omega_{2n}\,.
\end{align}
Define now $V_n:=\chi_1V$ and observe that $\|V_n\|_\infty\to 0$ as $n\to\infty$. 
Then we get, for any $\eta\in \mathcal{C}^\infty_0(\R^2;\C^2)$ and  $j\in\{2,3\}$, using \eqref{total1} and the identity $\chi_j=\chi_1\chi_j$, 
\begin{align*}
\chi_j(D_{\mathbf{A}}+V)\eta&=\chi_j(D_{\mathbf{A}_0}-{\boldsymbol \sigma}\cdot{\bf a}+V_n)\eta\\
&=\chi_j (D_{\mathbf{A}_0}- {\boldsymbol \sigma}\cdot\nabla\Phi_n-{\boldsymbol \sigma}\cdot{\bf a}^{n}+V_n)\eta\\
&=\chi_j e^{\ri \Phi_n}(D_{\mathbf{A}_0}-{\boldsymbol \sigma}\cdot {\bf a}^{n}+V_n)e^{-\ri \Phi_n}\eta\,.
\end{align*}

Set ${\bf A}_n:={\bf A}_0+{\bf a}^n$ and let $\Psi$ be an eigenfunction of $D_{\bf A}$ with eigenvalue $E\notin \sigma(D_{\mathbf{A}_0})$. By the previous computation we obtain, for any $\eta\in \mathcal{C}^\infty_0(\R^2;\C^2)$ and  $j\in\{2,3\}$,
\begin{align*}
\langle\, e^{\ri \Phi_n}(D_{\mathbf{A}_n}+V_n-E)e^{-\ri \Phi_n}\eta\,|\,\chi_j\,\Psi\,\rangle
&=\langle \,(D_{\mathbf{A}}+V-E)\eta\,|\,\chi_j\,\Psi\,\rangle\\
&=\langle\, \ri {\boldsymbol \sigma}\cdot \nabla \chi_j \eta\,|\,\Psi\,\rangle.
\end{align*}
This equality extends to any $\eta$ in the domain of $e^{\ri \Phi_n}(D_{\mathbf{A}_n}+V_n-E)e^{-\ri \Phi_n}$,
since $ \mathcal{C}^\infty_0(\R^2;\C^2)$ is a core for $e^{\ri \Phi_n}(D_{\mathbf{A}_n}+V_n-E)e^{-\ri \Phi_n}$  (see Subsection \ref{SecGauge}). Clearly, ${\bf A}_n$ and $V_n$ satisfy the assumptions of Lemma~\ref{expansion}.  
 Thus, for $n$ sufficiently large, $E\notin \sigma(D_{\mathbf{A}_n})$ and
we may replace $\eta$ by $e^{\ri \Phi_n}(D_{\mathbf{A}_n}+V_n-E)^{-1}e^{-\ri \Phi_n}\eta$ obtaining that
\begin{equation}\label{eq:23a}
\chi_j \Psi
=-\ri e^{\ri \Phi_n}(D_{\mathbf{A}_n}+V_n-E)^{-1}e^{-\ri \Phi_n}
 ({\boldsymbol \sigma}\cdot\nabla \chi_j)\Psi,\quad j\in \{2,3\}\,.
\end{equation}
Observe that using \eqref{eq:23a} for $j=2$
in combination with Lemma~\ref{expansion} (with $q=2, r=3$, and $F=0$) we obtain  that
\begin{equation}\label{eq:25a}
\chi_2 \Psi \in L^3(\R^2;\C^2)\,.
\end{equation}
We use again \eqref{eq:23a}, for large $n$,  to get in addition that
\begin{equation}\label{eq:27}
e^{F}\chi_3 \Psi
=-\ri e^{\ri \Phi_n}\big(e^{F}(D_{\mathbf{A}_n}+V_n-E)^{-1}e^{-F}\Big)\big(e^{-\ri \Phi_n}
 e^{F}({\boldsymbol \sigma}\nabla \chi_3) \chi_2\Psi\big)\,.
\end{equation}
Since $\operatorname{supp}(\nabla\chi_3)\subset
\Omega_{3n}\setminus\Omega_{4n}$ we find thanks to \eqref{eq:25a} that
$e^{-\ri \Phi_n} e^{F}({\boldsymbol \sigma}\nabla \chi_3) \chi_2\Psi
\in L^2(\R^2;\C^2)\cap
L^3(\R^2;\C^2)$.  Thus, we may apply
Lemma~\ref{expansion} with $q=3, r=\infty$ and $q=2, r=2$ to obtain
the decay in the $L^\infty$ and $L^2$ norms respectively for $n\ge
n_0$ sufficiently large.  We obtain the desired bound \eqref{sexpon} from \eqref{eq:27}
by varying $F$ over finitely many vectors $\bu$.
\end{proof}
\begin{proof}[Proof of Lemma \ref{expansion}]
Recall that the magnetic vector potential is given by ${\bf A}_n={\bf A}_0+{\bf a}^{n}$ where ${\bf a}^{n}$ is defined in  \eqref{eq:15}.

A simple calculation shows that the vector potential
\begin{equation}
  \label{eq:12}
  {\bf a}^n_{\bf x'}({\bf x})= \int_0^1  b_n({\bf x}'+s ({\bf x}-{\bf x}'))\wedge (\bx-\bx') \,s ds\,,\quad \bx'\in\R^2\,,
\end{equation}
is also a vector potential of  the magnetic field $b_n$.  
A crucial property of ${\bf a}^n_{\bx'}$ is that
\begin{align}
  \label{eq:25}
  |{\bf a}^n_{\bx'}(\bx)|\le \|b_n\|_\infty\cdot |\bx-\bx'|, \quad \bx,\bx'\in\R^2\,.
\end{align}
Since 
$\rm{curl}\, {\bf a}^n_{\bf x'}= \rm{curl}\, {\bf a}^n$ there exists a function
$\varphi_n:\R^2\to\R$ with 
\begin{equation}
  \label{eq:13}
  \nabla_{\bx} \varphi_n(\bx,\bx')={\bf a}^n(\bx)-{\bf a}^n_{\bx'}(\bx)\,.
\end{equation}
We may further require that
\begin{equation}
  \label{eq:14}
  \varphi_n(\bx,\bx)=0.
\end{equation}
The proof of Lemma~\ref{expansion} is based upon $L^p$ estimates for the  resolvent expansion \eqref{eq:21} below. We start by defining the relevant objects and list their $L^p$ properties.  For $z\in\R\setminus\sigma(D_{{\bf A}_0})$ let   $G_0(\bx,\bx',z)$ be a representation of the Green kernel
  of $(D_{{\bf A}_0}-z)^{-1}$ as $2\times 2$-matrix. Due to \eqref{eq:17} from Appendix \ref{greenf} and the triangular inequality we obtain that
\begin{align*}
  \big\|e^{F(\bx)}G_0(\bx,\bx';z)e^{-F(\bx')}\big\|_{\C^2\otimes\C^2}\le e^{-\theta(\bx-\bx')+\gamma|\bx-\bx'|}\omega(\bx-\bx';z):=g_{F}(\bx-\bx')\,.
\end{align*}
We observe that thanks to \eqref{eq:18} we have that
\begin{equation}
  \label{eq:16}
g_{F} \in L^t(\R^2),\quad t\in[1,2)\,\quad\mbox{and} \quad|\bx|g_{F}\in  L^t(\R^2), \quad t\in[1,\infty]\,.
\end{equation}
We introduce, for $n\in \N$, the integral operators $ S_n(z),T_n(z) :
L^2(\R^2;\C^2)\to L^2(\R^2;\C^2)$ with
\begin{align}
  \label{eq:20}
  \big(S_n(z)f\big)(\bx)&:=
\int_{\R^2}e^{\ri\varphi_n(\bx,\bx')}G_0(\bx,\bx';z) f(\bx')  d\bx',\\
  \big(T_n(z)f\big)(\bx)&:=
\int_{\R^2}{\boldsymbol \sigma}\cdot a^n_{\bx'}(\bx)\,e^{\ri\varphi_n(\bx,\bx')}G_0(\bx,\bx';z)f(\bx')  d\bx'\,,
\end{align}
where $\varphi_n$ is determined by \eqref{eq:13} and \eqref{eq:14}.
Notice that in view of \eqref{eq:16}, \eqref{eq:25},
and Young's inequality (see \cite[Section 4.2]{LL2001})  both operators are well defined and bounded. In fact, since 
$$
\|(e^{F} T_n(z) e^{-F})(\bx,\bx')\|_{\C^2\otimes\C^2}\le \|b_n\|_\infty |\bx-\bx'|
g_{F}(\bx-\bx')\,,
$$
we find by \eqref{eq:16} and Young's inequality that, for $q\in[1,\infty]$,
\begin{equation}
    \label{eq:24}
\lim_{n\to\infty} \|e^F T_n(z) e^{-F}\|_{q,q}=0\,.
  \end{equation}
Furthermore, a similar argument implies that, for $q,r\in[1,\infty]$ and 
$t\in [1,2)$ with  $\tfrac{1}{t}=1+\tfrac{1}{r}-\tfrac{1}{q}$,
\begin{align} \label{eq:24-b}  
\sup_{n\in \N}\|e^{F}S_n(z)e^{-F}\|_{q,r}<\infty\,.
\end{align}
Our next task is to show the following resolvent
formula in $L^2(\R^2;\C^2)$, for $n\in\N$ so large that $\|T_n(z)\|_{2,2}<1$ (see \eqref{eq:24}),
\begin{equation}
  \label{eq:21}
  (D_{{\bf A}_n}-z)^{-1}= S_n(z)\sum_{k=0}^\infty T_n(z)^k\,.
\end{equation}
Pick functions $f\in L^2(\R^2;\C^2)$ and $g\in C^\infty_0(\R^2;\C^2)$ and, we find
\begin{align*}
\label{eq:22}
&\SP{(D_{{\bf A}_n}-z)g}{S_n(z) f}\\
&\quad=\int_{\R^2}\SP{\big[(D_{{\bf A}_n}-z)g](\bx)}{\int_{\R^2} e^{\ri\varphi_n(\bx,\bx')} G_0(\bx,\bx';z)f(\bx')\,d\bx'}_{\C^2}\,d\bx\\
&\quad=\int_{\R^2}\int_{\R^2}\SP{e^{-\ri \varphi_n(\bx,\bx')}
\big[(D_{{\bf A}_n}-z)g](\bx)}{  G_0(\bx,\bx';z)
f(\bx')}_{\C^2}\,d\bx\,d\bx'\,,
\end{align*}
where Young's inequality together with Lemma \ref{GreenKernel} from Appendix \ref{greenf} enabled us to 
use Fubini's theorem in the last equality.  
Observe that due to \eqref{eq:13}
\begin{align*}
  e^{-\ri \varphi_n(\bx,\bx')}
  \big[(D_{{\bf A}_n}-z)g](\bx)
  =\big[(D_{{\bf A}_0}-{\boldsymbol \sigma}\cdot {\bf a}^n_{\bx'}-z) 
  e^{-\ri \varphi_n(\cdot,\bx')}g](\bx)\,.
\end{align*}
Hence, using again Fubini's theorem,
\begin{equation}\label{ide}
\begin{split}
  &\SP{(D_{{\bf A}_n}-z)g}{S_n(z) f}=\\
&\quad\int_{\R^2}\SP{ \int_{\R^2} G_0(\bx',\bx;z)
\big[(D_{{\bf A}_0}-z)e^{-\ri \varphi_n(\cdot,\bx')}g](\bx)\, d\bx}{ f(\bx')}_{\C^2}\,d\bx'\\
&-\int_{\R^2}\int_{\R^2}\SP{ {\boldsymbol \sigma}\cdot{\bf a}^n_{\bx'}(\bx) 
  e^{-\ri \varphi_n(\bx,\bx')}g(\bx)}{  G_0(\bx,\bx';z)f(\bx')}_{\C^2}\,d\bx'\,d\bx\\
&=\SP{g}{ f}- \SP{g}{T_n(z) f}\,.
\end{split}
\end{equation}
For the first integral after the first equality 
above we used \eqref{eq:14} together with the fact 
that $G_0$ is the Green kernel of $D_{{\bf A}_0}$ 
and thus, for any $\tilde{g}\in C^\infty_0(\R^2;\C^2)$,
\begin{align*}
  \int_{\R^2} G_0(\bx',\bx;z)\big[(D_{{\bf A}_0}-z)\tilde{g}](\bx)\,d\bx
  =\tilde{g}(\bx')\quad \mbox{a.e.}\,.
\end{align*}
Now,  since  $D_{{\bf A}_n}$ is essentially self-adjoint on $C^\infty_0(\R^2;\C^2)$ 
we can extend the identity \eqref{ide} for all $g\in \mathcal{D}(D_{{\bf A}_n})$. 
From this extension  follows that $S_n(z)$ maps $L^2(\R^2;\C^2)$ in   
$\mathcal{D}(D_{{\bf A}_n})$ and 
\begin{equation*}
  \label{eq:26}
  (D_{{\bf A}_n}-z)S_n(z)f=f- T_n(z)f\,,\quad f\in L^2(\R^2;\C^2)\,. 
\end{equation*}
This yields, for $n$ large enough, the operator identity
\begin{align*}
  S_n(z)= (D_{{\bf A}_n}-z)^{-1}\big(1- T_n(z) \big)\,,
\end{align*}
from which follows \eqref{eq:21}.

We observe that  \eqref{eq:28}, for $V_n=0$, is a consequence of
\eqref{eq:24},  \eqref{eq:24-b}, and
\begin{align*}
  \Big\| e^{F}(D_{{\bf A}_n}-z)^{-1} e^{-F} \Big\|_{q,r}\le
  \Big\| e^{F}S_n (z)e^{-F} \Big\|_{q,r}\cdot \sum_{k=0}^\infty
  \Big( \big\|e^F T_n(z)e^{-F} \big\|_{q,q}\Big)^k\,.
\end{align*}
Note that the last sum above converges for $n$ large enough due to \eqref{eq:24}.

In order to show \eqref{eq:28} for $V_n\not=0$  we  note that by  H\"older's inequality
\begin{equation}\label{eqeq}
\begin{split}
\|V_n e^F (D_{{\bf A}_n}-z)^{-1}e^{-F}\|_{q,q}\le&
\|V_n\|_\infty\,\|e^F (D_{{\bf A}_n}-z)^{-1}e^{-F}\|_{q,q}\to 0\,,
\end{split}
\end{equation}
 as $n\to\infty$. Therefore, the following computation is meaningful for $n$ large enough
\begin{align*} 
\|e^F&(D_{\mathbf{A}_n}+ V_n-z)^{-1}e^{-F}\|_{q,r}\\ \nonumber
&=\|e^F(D_{\mathbf{A}_n}-z)^{-1}(1+ V_n (D_{\mathbf{A}_n}-z)^{-1})^{-1}e^{-F}\|_{q,r}\\  \nonumber
&\le\|e^F(D_{\mathbf{A}_n}-z)^{-1}e^{-F}\|_{q,r}\sum_{m=0}^\infty 
\|\{V_n e^F(D_{\mathbf{A}_n}-z)^{-1}e^{-F}\}^m\|_{q,q}\,.
\end{align*}
This finishes the proof.
\end{proof}
%
%
%
\section{Gaussian-localization}\label{Gauss}
In this section we show  Theorem \ref{Maintheorem3} on Gaussian localization of eigenfunctions with energies in the  
discrete spectrum of 
\begin{equation}
H= D_{\mathbf A} + V\,,
\end{equation}
under the assumptions (A1)-(A5) stated in the introduction. We choose the magnetic vector potential to be given by
\begin{equation}\label{DefArad1}
{\mathbf A}(\bx):= r^{-1}A(r)\left (\begin{array}{c}-x_2\\ x_1
\end{array}\right),\quad A(r)= r^{-1} \int_0^r B(s)s\,ds\,.
\end{equation}
If $B\in L_{\rm loc}^q(\R^2,\R)$ it is easy to see, using H\"older's
inequality, that if $q\in(1,2]$ then ${\mathbf A}\in L_{\rm
  loc}^p(\R^2;\R^2)$, for some $p\in(2,\infty)$, and that ${\mathbf
  A}\in L_{\rm loc}^\infty(\R^2;\R^2)$ whenever $q\in(2,\infty]$ .

The proof of Theorem \ref{Maintheorem3}, given in Subsection \ref{agmon}, follows the ideas of \cite{Nakamura1996} consisting in Agmon-type estimates with localizations in space and in the angular momentum variable. Of course, we have to adapt the method of \cite{Nakamura1996} since our Hamiltonian is not bounded form below.

In Subsection \ref{vale4} we transform the operator $H$ to polar
coordinates and we decompose it in the angular momentum variable
$m_j$.  The analyticity condition  (A4) on $V$ permits us to obtain
exponential decay in $|m_j|$ of eigenfunctions of $H$ with eigenvalues
$E\in\sigma_{\rm d}(H)$ (see Lemma \ref{RotAnalytic} in Subsection
\ref{vale}). In order to obtain the Agmon estimates, in Subsection
\ref{agmon}, we square the transformed free Dirac operator $K_0^{(2)}$ (see
\eqref{DefK02} for its definition). The Gaussian decay is essentially
due to a positive term in $(K_0^{(2)})^2$ that goes like $r^2$. This
term is in competition with a term that behaves like $m_j$ when $m_j\ge 0$. The
Gaussian weights in the Agmon estimates are localized in the region
where $m_j \lesssim r^2$. The complementary region, on the other hand, is controlled by the exponential decay in $|m_j|$.
\subsection{Unitary transform}\label{vale4}
In the following we  derive an equivalent representation of $H$.
We denote by $U$ the unitary map that represents $H$ in polar coordinates 
(see e.g.  \cite[Section 7.3.3]{Thaller})
\begin{align*}
UHU^*&=H^{(1)}= K^{(1)}_0+v(r,\theta),\\
 K^{(1)}_0&:=S_\theta\big\{ -\ri \partial_r+i r^{-1} 
\sigma_3 J_3-\ri \sigma_3 A(r)\big\},\\
\end{align*}
acting on $\mathcal{H}^{(1)}:=L^2(\R^+)\otimes L^2(T;\C^2)^2$, where 
\begin{align*}
  J_3&:= -\ri \partial_{\theta}+1/2\sigma_3, \quad
S_\theta:=\begin{pmatrix}
 0& e^{-\ri \theta}\\
e^{\ri \theta}&0
\end{pmatrix}\,.
\end{align*}
Next we identify $L^2(T;\C^2)$ with $\ell^2(\Z)^2$
by means of  the transformation $\mathcal{F}:L^2(T;\C^2)\to \ell^2(\Z)^2$  given by  
\begin{align*}
  \mathcal{F}[f](j):=\frac{1}{\sqrt{2\pi}}\int_0^{2\pi}M_\theta e^{-\ri m_j \theta}f(\theta) \,\rd \theta\,,
\end{align*}
for $f\in L^2(T;d\theta)^2$, where  $m_j = (2j+1)/2,\ j\in \Z$, and 
\begin{align*}
  M_\theta:=\left(\begin{array}{cc}
e^{\ri\theta/2}&0\\
0&\ri e^{-\ri \theta/2}\,
\end{array}
\right)\,.
\end{align*}
Under these transformations we find  the decomposition
\begin{align*}
& L^2(\R^2;\C^2) \cong \mathcal{H}^{(2)}:=\bigoplus_{j\in \Z}L^2(\mathbb{R}^+ ;dr)^2\,
\end{align*}
and the corresponding  operator 
\begin{align*}
& H \cong H^{(2)}=K^{(2)}_0+W\,,
\end{align*}
which is essentially self-adjoint on $\mathcal{D}^{(2)}:=\mathcal{F}U C^\infty_0(\R^2;\C^2)$. For $h\in \mathcal{D}^{(2)}$,  $K^{(2)}_0=\mathcal{F}U D_{\mathbf A} U^* \mathcal{F}^*$ acts as
\begin{equation}\label{DefK02}
(K^{(2)}_0h)(r,j)= \big(-\ri \sigma_2 \partial_r + \sigma_1(-m_j r^{-1}+A(r))\big)h(r,j)\,,
\end{equation}
where we used that $\mathcal{F} S_\theta \mathcal{F}^*=\sigma_2$, $\mathcal{F}S_\theta \sigma_3\mathcal{F}^*=\ri \sigma_1$, and that $\mathcal{F} J_3 \mathcal{F}^*$ is the multiplication operator by $m_j$.
The electric potential 
$W=\mathcal{F}v \mathcal{F}^*$ acts as
\begin{equation}\label{defW}
(Wh)(r,l):= \sum_{j \in \Z}
\hat{v}(r,l-j)
h(r,j),
\end{equation}
where
\begin{equation}\label{FourierKoeff}
\hat{v}(r,n) =\frac{1}{2\pi}\int_0^{2\pi}e^{-\ri n\theta}v(r,\theta) \rd \theta\,,\quad n\in\Z\,.
\end{equation}
Two other quantities play an important role in our analysis, namely $W_1:=\mathcal{F}\partial_r v \mathcal{F}^*$ given by
\begin{equation}
  \label{eq:19}
 (W_1 h)(r,l):= \sum_{j \in \Z}
\partial_r \hat{v}(r,l-j)
h(r,j), 
\end{equation}
and $W_2:=\mathcal{F}\partial_\theta v \mathcal{F}^*$ that acts as 
\begin{align}
  \label{eq:29}
  (W_2h)(r,l):= \sum_{j \in \Z}
\ri(j-l)\hat{v}(r,l-j)
h(r,j)\,.
\end{align}
\subsection{Rotation-analyticity}\label{vale}
For $f\in\mathcal{H}^{(1)}$ and  $a\in \R$ we set
\begin{equation}
\big(U_af \big)(r,\theta):= (e^{\ri J_3 a} f)(r,\theta)
=e^{\ri \sigma_3 a/2}f(r, \theta + a)\,.
\end{equation}
%
%
We call a vector $f\in \mathcal{H}^{(1)}$
rotation-analytic, if and only if the series 
$$
\sum_{n\in\N} \frac{\|J_3^n f\|}{n!} \rho^n\,,\quad \rho>0\,,
$$
has an infinite radius of convergence.
%
%
We start by presenting a lemma that gives us some a priori decay of some
eigenfunctions of $H^{(2)}$ in the angular momentum variable.
\begin{lemma}\label{RotAnalytic}
Assume that (A1)-(A4) hold. Let $\Psi\in
\mathcal{H}^{(2)}$ be an eigenfunction of $H^{(2)}$ to the eigenvalue  $E\in \sigma_d(H^{(2)})$.
Then, for every $\gamma>0$, we have
\begin{align}
\sum_{j\in\Z}& 
\int_0^\infty 
e^{2\gamma |m_j|}|\Psi(r,j)|^2 \rd r<\infty.
\end{align}
\end{lemma}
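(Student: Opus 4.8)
The aim is to prove that $\Psi$ is rotation-analytic in the sense defined above, i.e.\ that the $\mathcal H^{(1)}$-valued map $w\mapsto U_w\Psi=e^{\ri J_3 w}\Psi$, a priori defined only for $w\in\R$, admits an analytic extension to a neighbourhood of $\R$ in $\C$. Under $\mathcal F$ the operator $U_w$ acts on $\mathcal H^{(2)}$ as multiplication by $e^{\ri m_j w}$ on the $j$-th block, so such an extension at $w=\pm\ri\gamma$ gives $\sum_{j}e^{\mp 2m_j\gamma}\int_0^\infty|\Psi(r,j)|^2\,\rd r<\infty$; adding the two and using $e^{2\gamma|m_j|}\le e^{2\gamma m_j}+e^{-2\gamma m_j}$ yields the asserted bound. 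The plan is thus: (i) deform $H^{(1)}$ by complexifying the rotation $U_w$; (ii) show that the eigenvalue $E$ persists, with an analytic eigenprojection, for the deformed family; (iii) analytically continue $w\mapsto U_w\Psi$ by continuing a basis of the (finite-dimensional) eigenspace built from vectors that are manifestly rotation-entire.

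For step (i), observe that (A4) says exactly that $a\mapsto U_aVU_a^{-1}$ (multiplication by $v(r,\theta+a)$) extends analytically to multiplication by $\widetilde v_z$ with $|\widetilde v_z|\le u_\tau$ on $S_\tau$, while $[J_3,K_0^{(1)}]=0$ (equivalently, $K_0^{(2)}$ is block-diagonal in the angular momentum $j$), so that for real $w$ one has $U_wH^{(1)}U_w^{-1}=K_0^{(1)}+\widetilde v_w=:H^{(1)}_w$. By the pointwise analyticity of $\widetilde v_z$ together with the domination $|\widetilde v_z|\le u_\tau$, this extends to a type-$A$ analytic family on $S_\tau$. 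Moreover, since $u_\tau\in L^p_{\rm loc}$ with $p>2$ and $\|\mathbbm 1_{\{r>n\}}u_\tau\|_\infty\to 0$, the potential $\widetilde v_w$ is relatively $K_0^{(1)}$-compact with relative bound $0$, uniformly for $w\in S_\tau$ (this is the argument of Lemma~\ref{essentialda} together with Lemma~\ref{compsi}); hence each $H^{(1)}_w$ is closed on $\mathcal D(K_0^{(1)})$ and $\sigma_{\rm ess}(H^{(1)}_w)=\sigma_{\rm ess}(H^{(1)})=\{l_n:n\in\Z\}$ for all $w\in S_\tau$.

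For steps (ii)--(iii), since $E\in\sigma_{\rm d}(H^{(2)})\subset\R\setminus\{l_n\}$, $E$ is an isolated eigenvalue of $H^{(1)}$ of finite multiplicity $d$; fix $\rho>0$ so small that $E$ is the only point of $\sigma(H^{(1)})$ in $\overline{D_\rho(E)}$ and set $\Gamma=\partial D_\rho(E)$. The set $G=\{w\in S_\tau:\Gamma\cap\sigma(H^{(1)}_w)=\emptyset\}$ is open and contains $\R\cap S_\tau$ (there $H^{(1)}_w$ is unitarily equivalent to $H^{(1)}$); let $G_0$ be its component containing $\R\cap S_\tau$. On $G_0$ the Riesz projection $P(w)=\tfrac1{2\pi\ri}\oint_\Gamma(z-H^{(1)}_w)^{-1}\rd z$ is analytic, of constant rank $d$, the only point of $\sigma(H^{(1)}_w)$ inside $\Gamma$ is $E$ (the branch $w\mapsto E(w)$ is analytic, real for real $w$ by unitary equivalence, hence $\equiv E$), and $P(w)=U_wP(0)U_w^{-1}$ for real $w$. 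Now, because the finite-angular-momentum vectors (those $\eta$ with $(\mathcal F\eta)(\cdot,j)=0$ for all but finitely many $j$) are dense in $\mathcal H^{(1)}$ and $P(0)$ is bounded with range ${\rm Ker}(H^{(1)}-E)$ finite-dimensional, one can pick such $\eta_1,\dots,\eta_d$ with $\phi_k:=P(0)\eta_k$ a basis of ${\rm Ker}(H^{(1)}-E)$; for each $\eta_k$ the map $w\mapsto U_w\eta_k=\sum_j e^{\ri m_j w}(\mathcal F\eta_k)(\cdot,j)$ is entire with values in $\mathcal H^{(1)}$. Hence $\Phi_k(w):=P(w)U_w\eta_k$ is $\mathcal H^{(1)}$-valued analytic on $G_0$, and for real $w$ it equals $U_wP(0)\eta_k=U_w\phi_k$; comparing, blockwise in $j$, the two analytic $L^2(\R^+)^2$-valued functions $w\mapsto(\mathcal F\Phi_k(w))(\cdot,j)$ and $w\mapsto e^{\ri m_j w}(\mathcal F\phi_k)(\cdot,j)$ that agree on $\R$ gives $\|\Phi_k(w)\|^2=\sum_j e^{-2m_j\operatorname{Im}w}\int_0^\infty|\phi_k(r,j)|^2\rd r<\infty$ for $w\in G_0$. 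Taking $w=\pm\ri\sigma\in G_0$, adding, and summing over $k$ against the coefficients of $\Psi=\sum_k c_k\phi_k$ yields $\sum_j e^{2\sigma|m_j|}\int_0^\infty|\Psi(r,j)|^2\rd r<\infty$; since $G_0\supset\{|\operatorname{Im}w|<\tau_0\}$ for some $\tau_0>0$, this proves the claim for every $\gamma<\tau_0$. To reach arbitrary $\gamma$ one iterates: the bound just obtained makes $U_{\ri\tau_0/2}\Psi$ a legitimate vector of $\mathcal H^{(1)}$ and an eigenfunction of $H^{(1)}_{\ri\tau_0/2}$ at $E$, and this operator again satisfies the hypotheses above ((A4) furnishes the further analytic continuations of $\widetilde v_{\ri\tau_0/2}$, and $\sigma_{\rm ess}$ is still $\{l_n\}$), so re-running the argument enlarges the strip, and repeating exhausts $\C$.

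I expect the delicate point to be step (ii): controlling the spectrum of the \emph{non--self-adjoint} operators $H^{(1)}_w$ off the real axis, i.e.\ keeping $E$ isolated and $P(w)$ of constant rank, and in particular ruling out that some other (possibly complex) eigenvalue of $H^{(1)}_w$ drifts onto $\Gamma$. The tools available for this are the fixed discrete essential spectrum $\{l_n\}$ and the analytic-perturbation pinning of the real eigenvalue $E$; turning these into control on the full strip $S_\tau$, rather than only on a sub-strip, is precisely what the iteration in step (iii) must handle, and what makes it necessary.
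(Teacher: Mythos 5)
Your construction up to the strip $\{|\operatorname{Im}w|<\tau_0\}$ is essentially the paper's argument (the paper uses Nelson's rotation-analytic vectors where you use vectors supported on finitely many angular momenta; that difference is harmless), but the passage from small $\gamma$ to arbitrary $\gamma$ --- which is the actual content of the lemma, since it asserts the bound for \emph{every} $\gamma>0$ --- has a genuine gap. The paper does not iterate: it invokes the standard complex-rotation argument (cf.\ the proof of Theorem XIII.36 in Reed--Simon) that pins the whole discrete spectrum at once. Concretely, since $\widetilde{v}_{w+a}(r,\theta)=\widetilde{v}_w(r,\theta+a)$ for real $a$, one has $H^{(1)}(w+a)=U_aH^{(1)}(w)U_a^{-1}$, so $\sigma(H^{(1)}(w))$ depends only on $\operatorname{Im}w$; discrete eigenvalue branches in $\C\setminus\{l_n\}$ are analytic in $w$, hence constant, and since the essential spectrum is fixed at $\{l_n\}$ (Lemma~\ref{family1} plus Weyl), a constant branch can neither emerge from nor disappear into $\{l_n\}$, so every discrete eigenvalue of every $H^{(1)}(w)$ is already a real eigenvalue of the self-adjoint $H^{(1)}$. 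Choosing $\Gamma$ to avoid $\sigma(H^{(1)})\setminus\{E\}$ then makes your set $G$ all of $\C$ and $P(w)$ entire of constant rank, and your step (iii) gives the bound for every $\gamma$ in a single pass.

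Your proposed iteration does not repair the gap as stated. At the second step the reference operator $H^{(1)}(\ri\tau_0/2)$ is no longer self-adjoint, so the input with which you started step (ii) --- that for real parameter the spectrum inside $\Gamma$ is exactly $\{E\}$, by unitary equivalence with a self-adjoint operator whose spectrum near $E$ you know --- is unavailable; to rerun the argument you would need precisely the constancy-of-branches statement above (together with discreteness of $\sigma(H^{(1)}(\ri\tau_0/2))$ away from $\{l_n\}$, which requires an analytic Fredholm argument, and the fact that $U_{\ri\tau_0/2}\Psi$ is an eigenvector of $H^{(1)}(\ri\tau_0/2)$, which requires continuing the identity $H^{(1)}(w)P(w)=EP(w)$ off the real axis), and once you have that statement the iteration is superfluous. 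A further small gap: the inclusion $G_0\supset\{|\operatorname{Im}w|<\tau_0\}$ does not follow merely from $G$ being open and containing $\R$; you need the $2\pi$-periodicity of $w\mapsto H^{(1)}(w)$ in $\operatorname{Re}w$ to extract a uniform strip.
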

\begin{proof}The proof is analog to the one given in \cite[Section 3]{Nakamura1996}. We sketch it here for the reader's convenience.
Due to Lemma \ref{family2} (in Appendix~\ref{family}) $\{H^{(1)}(z)\}_{z\in\C}$ defined on $\mathcal{D}(K_0^{(1)})$ through 
\begin{align*}
  H^{(1)}(z)=K_0^{(1)}+\widetilde{v}_z\,,
\end{align*}
is an analytic family of type (A)  (see \cite{ReedSimon1978}). Note that when $a\in\R$ the identity $H^{(1)}(a)=U_a H^{(1)}U_a^*$ holds. Moreover, by 
Lemma~\ref{family1} (in Appendix~\ref{family}) we have that
\begin{align*}
\widetilde{v}_z (K_0^{(1)}-\ri)^{-1}
\end{align*}
is a compact operator in $\mathcal{H}^{(1)}$ for any $z\in \C$.
Therefore,
$\sigma_{\mathrm{ess}}(H^{(1)}(z))=\sigma_{\mathrm{ess}}(K^{(1)}_0)$
by Weyl's theorem. Arguing with analytic perturbation theory and using the fact that the spectrum of $H^{(1)}(a)$ and $H^{(1)}$ is the same for $a$ real (see
e.g. the proof of Theorem XIII.36 in \cite{ReedSimon1978} for a similar argument)  we find
that $E\in \sigma_{\mathrm{d}}(H^{(1)})$ of multiplicity $N\in\N$ is also an eigenvalue of
$H^{(1)}(z)$ of the same multiplicity. 

Let   $P_z$ be the $N$-dimensional $E$-eigenprojection of $H^{(1)}(z)$. 
Since rotation-analytic vectors are dense in $\mathcal{H}^{(1)}$ (see e.g. \cite{Nelson1959}) we find some rotation-analytic vectors $f_1,\dots, f_N$ such that 
$\mathrm{Ran} P_0=\mathrm{Span}\{P_0f_1,\dots, P_0f_N\}\,.$ 
Observing that,
for $a\in\R$ and $j\in\{1,\dots,N\}$, 
$$U_aP_0 f_j=P_a U_af_j\,, $$
we find an analytic continuation of $a\mapsto U_aP_0f_j\in \mathcal{H}^{(1)}$ to the complex plane. In particular,  $e^{\ri J_3 z}P_0f_j$ belongs to $\mathcal{H}^{(1)}$ for any $z\in \C$. 
Let $\Psi_1\in \mathrm{Ran}P_0$ be such that $\mathcal{F}\Psi_1=\Psi$. By the discussion above we get that 
$$\mathcal{F} e^{ J_3 \gamma}\Psi_1 \in \mathcal{H}^{(2)}\,,\quad \gamma\in\R\,.$$
This ends the proof since $(\mathcal{F}e^{ J_3 \gamma}\Psi_1)(r,j)=e^{m_j\gamma} \Psi(r,j)$ and 
\begin{align*}
  \sum_{j\in\Z}& 
\int_0^\infty 
e^{2\gamma |m_j|}|\Psi(r,j)|^2 \rd r \\
&\le\sum_{j\in\Z}
\int_0^\infty 
e^{-2\gamma m_j}|\Psi(r,j)|^2 \rd r + \sum_{j\in\Z}
\int_0^\infty 
e^{2\gamma m_j}|\Psi(r,j)|^2 \rd r <\infty\,.
\end{align*}
\end{proof}
%

%
\subsection{Agmon-type Estimate}\label{agmon}
In this section we deduce the Agmon estimates needed in the proof of Theorem \ref{Maintheorem3}.
They were obtained  in \cite{Nakamura1996} for
the case of a magnetic Schr\"odinger operator. These estimates uses
heavily the fact that the Schr\"odinger  operator is bounded
from below. As we commented before we will obtain these estimates for the square of  the  Dirac operator $K_0^{(2)}$.

Fix a number $\bs>B_0$ and note that, due to  (A2), there exists $R_0>0$ so large that the estimate \eqref{aeqq:1} is fulfilled and  moreover
\begin{equation}\label{r0}
\|\mathbbm{1}_{\{r>R_0\}} B\|<\tilde{B}\,,\quad r>R_0\,.
\end{equation}
We set, for $0<q_2<q_1<1$,
\begin{align}
r(j)&:=\begin{cases}\sqrt{ \frac{4\bs}{(q_1^2-q_2^2)B_0^2}m_j },& m_j\ge 0\\
0,&m_j<0\,,\\
\end{cases}\\
\Omega_{q_1,q_2}
&:=\{ (r,j)\in \R^+\times \Z\,|\,r\ge r(j)\}\,.
\end{align}
Moreover, we define
\begin{align}
\rho(r,j)&:= 
\begin{cases}
q_2 B_0/4(r^2-r(j)^2),& m_j\ge 0,\ r\ge r(j)\\
q_2 B_0 r^2/4& m_j<0\\
0& m_j\ge 0,\ r< r(j)\,.
\end{cases}
\end{align}
Eventually we will  choose $q_2$ to be sufficiently close to $1$.
A direct  calculation shows that
\begin{equation}
|\rho(r,j_1)-\rho(r,j_2)|\le \frac{q_2\bs}{(q_1^2-q_2^2)B_0}|j_1-j_2|.
\end{equation}
Let $\rho_\epsilon:= \rho(1+ \epsilon\rho)^{-1}$.  It is easy to see
that
\begin{equation}\label{Diffrho}
|\rho_\epsilon(r,j_1)-\rho_\epsilon(r,j_2)|\le \frac{q_2\bs}{(q_1^2-q_2^2)B_0}|j_1-j_2|.
\end{equation}
Finally, for $R>R_0$, we fix a smooth function $f_R$ in $r$ with bounded derivatives in $\R^+\times\Z$ satisfying
\begin{equation}
f_R(r,j)=
\begin{cases}
1&   r\ge 2R \textrm{ and } (r,j)\in\Omega_{q_1,q_2}\\
0&  r \le R \textrm{ or } (r,j)\not \in\Omega_{q_1,\mu q_2}\,
\end{cases}
\end{equation}
where $\mu\in (0,1)$ is a fixed number that  will be chosen sufficiently close to $1$. Note that $\Omega_{q_1,q_2}\subset \Omega_{q_1,\mu q_2}$.

%
%
%
%
Let $\Psi$ be the eigenfunction from Theorem~\ref{Maintheorem3} and  $\hat{\Psi}:=\mathcal{F}U \Psi$ be a normalized eigenfunction of $H^{(2)}$ with
corresponding energy $E\in \sigma_d(H^{(2)})$. We set, for $R>R_0$ and $\delta\in(0,\mu)$,
\begin{equation}\label{Defg}
g:=e^{\delta \rho_\epsilon}f_R \hat{\Psi}\,.
\end{equation}
Observe that $\delta$ can be chosen arbitrarily close to $1$.
\begin{lemma}\label{KonjugiertW} 
We
  find constants $R_1>R_0$ and $c>0$ such that, for any $\delta\in(-1,1), \rho>R_1$, and
  $j\in\{0,1,2\}$, 
\begin{align}\label{aeqq:1}
\sup_{\epsilon>0}\|\,\theta_\rho e^{ \delta\,\rho_\epsilon}W_j 
e^{- \delta\,\rho_\epsilon} \theta_\rho \|< c \,,
\end{align}
where $W_0:=W$ and $\theta_\rho:=\mathbbm{1}_{\{r>\rho\}}$.  In particular,
the commutator
\begin{align}
  \label{eq:32}
  [K_0^{(2)},W_0]=-\ri\sigma_2 W_1+\frac{\ri \sigma_1}{r}W_2
\end{align}
satisfies the estimate
\begin{align}
  \label{eq:33}
  \sup_{\epsilon>0}\|\, e^{ \delta\,\rho_\epsilon}f_R[K_0^{(2)},W_0]e^{- \delta\,\rho_\epsilon}\|<2c\,,\quad R>R_1\,.
\end{align}
\end{lemma}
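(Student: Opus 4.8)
The plan is to reduce the lemma to a single analytic fact — exponential decay in $|n|$ of the angular Fourier coefficients $\hat v(r,n)$ and $\partial_r\hat v(r,n)$ — and then to a Schur-test bound for convolution kernels on $\Z$. So the first step is to extract the Paley--Wiener consequences of (A4) and (A5). Fix $R_1>R_0$, to be enlarged finitely often, large enough that $C_\tau:=\|\mathbbm{1}_{\{r>R_1\}}u_\tau\|_\infty<\infty$ for the value of $\tau$ fixed below, that $R_1$ exceeds the $\rho$ from (A5), and that $R_1\ge1$. For a.e.\ $r>R_1$ the map $\theta\mapsto v(r,\theta)$ is essentially bounded and, by (A4), extends to a $2\pi$-periodic entire function; shifting the integration contour in \eqref{FourierKoeff} by $\mp\ri\tau$ (Cauchy's theorem, the vertical segments cancelling by periodicity) and estimating $|\widetilde v_{\mp\ri\tau}(r,\cdot)|\le u_\tau$ yields $|\hat v(r,n)|\le C_\tau e^{-\tau|n|}$ for every $\tau>0$, uniformly in $r>R_1$. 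The identical argument applied to $\partial_r v$, now with the uniform bound $\kappa_\tau$ from (A5) in place of $u_\tau$, gives $|\partial_r\hat v(r,n)|\le\kappa_\tau e^{-\tau|n|}$ for $r>R_1$. This is the only place (A5) is genuinely used, since $W_1=\mathcal F\,\partial_r v\,\mathcal F^*$ has kernel $\partial_r\hat v(r,l-j)$.

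To prove \eqref{aeqq:1}, note each $e^{\delta\rho_\epsilon}W_je^{-\delta\rho_\epsilon}$ ($j\in\{0,1,2\}$) is block-diagonal in $r$ and acts, for a.e.\ fixed $r$, on $\ell^2(\Z)\otimes\C^2$ by the matrix $(K_j(r))_{l,l'}=e^{\delta\rho_\epsilon(r,l)}\,k_j(r,l-l')\,e^{-\delta\rho_\epsilon(r,l')}$, where $k_0=\hat v$, $k_1=\partial_r\hat v$, $k_2(r,n)=-\ri n\,\hat v(r,n)$ and the $\C^2$-factor is inert. Writing $\lambda$ for the constant on the right of \eqref{Diffrho}, the bound $|\delta|<1$ together with \eqref{Diffrho} gives $|e^{\delta\rho_\epsilon(r,l)-\delta\rho_\epsilon(r,l')}|\le e^{\lambda|l-l'|}$ uniformly in $\epsilon$; choosing $\tau:=\lambda+1$ above, I obtain for $r>\rho\ge R_1$ that $|(K_j(r))_{l,l'}|\le M_{l-l'}$ with $M_n:=(1+|n|)(C_\tau+\kappa_\tau)e^{-|n|}$, so $\sum_{n\in\Z}M_n<\infty$. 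Schur's test then bounds $\|K_j(r)\|_{\ell^2\to\ell^2}$ by $c:=\sum_nM_n$ for a.e.\ $r>\rho$, uniformly in $\epsilon>0$, $\delta\in(-1,1)$ and $j\in\{0,1,2\}$; integrating $\|K_j(r)h(r,\cdot)\|^2\le c^2\|h(r,\cdot)\|^2$ over $r>\rho$ gives \eqref{aeqq:1}.

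For the commutator, since $v$ is a scalar multiplication operator it commutes with $S_\theta$, $\sigma_3$ and $A(r)$, while $[\partial_r,v]=\partial_r v$ and $[J_3,v]=-\ri\,\partial_\theta v$; hence $[K_0^{(1)},v]=S_\theta\big(-\ri\,\partial_r v+r^{-1}\sigma_3\,\partial_\theta v\big)$ on $C_0^\infty$, and conjugating with $\mathcal F U$ and using $\mathcal F S_\theta\mathcal F^*=\sigma_2$, $\mathcal F S_\theta\sigma_3\mathcal F^*=\ri\sigma_1$, $\mathcal F\,\partial_r v\,\mathcal F^*=W_1$, $\mathcal F\,\partial_\theta v\,\mathcal F^*=W_2$ yields \eqref{eq:32} (as a form identity on $\mathcal{D}^{(2)}$, which is enough once both sides are seen to be bounded operators after the localization below). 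For \eqref{eq:33}: $f_R$ is supported in $\{r>R\}\subset\{r>R_1\}$, and $f_R$, $r^{-1}$, $e^{\pm\delta\rho_\epsilon}$ and $\theta_{R_1}:=\mathbbm{1}_{\{r>R_1\}}$ are all block-diagonal in $r$, hence mutually commuting and commuting with $\sigma_1,\sigma_2$; a short computation using $f_R=f_R\theta_{R_1}$ and $\theta_{R_1}(1-\theta_{R_1})=0$ shows $e^{\delta\rho_\epsilon}f_RW_1e^{-\delta\rho_\epsilon}=f_R\cdot\theta_{R_1}e^{\delta\rho_\epsilon}W_1e^{-\delta\rho_\epsilon}\theta_{R_1}$, and similarly with $r^{-1}W_2$ for $W_1$. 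Combining with \eqref{aeqq:1}, $\|f_R\|_\infty\le1$, $\|f_Rr^{-1}\|_\infty\le R_1^{-1}\le1$ and $\|\sigma_1\|=\|\sigma_2\|=1$, the decomposition \eqref{eq:32} gives $\|e^{\delta\rho_\epsilon}f_R[K_0^{(2)},W_0]e^{-\delta\rho_\epsilon}\|\le\|\sigma_2\|c+\|\sigma_1\|c=2c$ for all $R>R_1$ (after enlarging $c$ to the maximum over $j\in\{0,1,2\}$ in \eqref{aeqq:1}).

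The step I expect to be the crux is the first one: turning the qualitative rotation-analyticity of (A4)--(A5) into the quantitative contour-shift bounds, and noticing that the decay rate $\tau$ there is a free parameter while the Lipschitz constant $\lambda$ of $\rho_\epsilon$ from \eqref{Diffrho} is fixed — so that $\tau>\lambda$ can be arranged and, after conjugation by $e^{\pm\delta\rho_\epsilon}$, the convolution kernels stay summable, uniformly in $\epsilon>0$ and $\delta\in(-1,1)$. Everything afterwards is the Schur test plus bookkeeping with the block-diagonal-in-$r$ structure.
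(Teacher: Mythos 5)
Your proposal is correct and follows essentially the same route as the paper: a contour-shift (Paley--Wiener) argument giving $|\hat v(r,n)|,|\partial_r\hat v(r,n)|\lesssim e^{-\tau|n|}$ uniformly for large $r$, combined with the Lipschitz bound \eqref{Diffrho} on $\rho_\epsilon$ and a convolution-type bound in the angular index (your Schur test at fixed $r$ is the paper's Young-plus-Cauchy--Schwarz estimate), followed by the same polar-coordinate computation of $[K_0^{(1)},v]$ to get \eqref{eq:32} and hence \eqref{eq:33}. The only cosmetic deviation is your use of $\theta_{R_1}$ rather than $\theta_\rho$ with $\rho\in(R_1,R)$ in the last step, which is immaterial since $\operatorname{supp}f_R\subset\{r\ge R\}$.
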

%
%
\begin{proof}
We show \eqref{aeqq:1} only for $j=2$, for the other cases follow analogously.
For any $m\in \Z$ and $r>0$ we define the analytic function $\C\ni z\mapsto h_m(r,z):=e^{-\ri m z} \widetilde{v}_z(r,0)$. 
 Using  \eqref{FourierKoeff} and the decay and  analyticity assumptions on $v$ (A5) we find for any $r>\rho$ (sufficiently large), $m\in\Z$, 
 and  $\gamma\in\R$  that there is a constant $C>0$ such that 
\begin{equation}\label{decV}
\begin{split}
  |\hat{v}(r, m)|=&\frac{1}{2\pi}\big| \int_{0}^{2\pi} h_m(r,\theta) \rd \theta\big|
=\frac{1}{2\pi} \big|\int_{0}^{2\pi} h_m(r,\theta-\ri\gamma) \rd \theta\big|\\
\le& 
\frac{e^{-m\gamma}}{2\pi} \int_{0}^{2\pi} u_{2|\gamma|}(r,\theta)  \rd \theta
\le \|\theta_\rho u_{2|\gamma|}\|_\infty e^{-m\gamma}\le C e^{-m\gamma}\,.
\end{split}
\end{equation}
Here we also used Cauchy's integral theorem and the fact that 
$h_m(r,z)$ is $2\pi$-periodic with respect to ${\rm Re}(z)$. In particular, replacing $\gamma$ by $-\gamma$ in the above estimate we see  that, for $\gamma>0$ and  $m\in\Z$, the bound
$ \theta_\rho|\hat{v}(r, m)|\le C e^{-|m|\gamma}$ holds.   
Therefore, using \eqref{Diffrho}, \eqref{eq:29}, and Young's inequality for $\ell^2(\Z^2;\C^2)$
in combination with the Cauchy-Schwarz inequality for $L^2((0,\infty);\C)$,
we get, for $\gamma$ sufficiently large and every $f\in \mathcal{H}^{(2)}$, that $|\sps{f}{\theta_\rho e^{ \delta\,\rho_\epsilon}W_2 
e^{- \delta\,\rho_\epsilon} \theta_\rho f}|$ is bounded by 
\begin{align*}
\int_0^\infty\sum_{l\in\Z}|f(r,l)|  \sum_{j\in\Z} |\theta_\rho \hat{v}(r,l-j)||l-j| e^{\delta \tau |l-j|} |f(r,j)|\rd r \le \tilde{C} \|f\|^2\,, 
\end{align*}
for some constant $\tilde{C}>0$, where $\tau:=\tfrac{q_2\bs}{(q_1^2-q_2^2)B_0}$. 

Equation \eqref{eq:33} follows from \eqref{aeqq:1} and \eqref{eq:32}. Equation \eqref{eq:32} is a consequence of 
\begin{align}
  \label{eq:34}
  [K_0^{(2)},W]=&\mathcal{F}[K_0^{(1)},v]\mathcal{F}^*\,\\
=&\mathcal{F} (-\ri S_\theta \partial_r v+\frac{S_\theta \sigma_3}{r} \partial_\theta v)\mathcal{F}^*\,,
\end{align}
and the fact that $\mathcal{F} S_\theta \mathcal{F}^*=\sigma_2$ and $\mathcal{F}S_\theta \sigma_3\mathcal{F}^*=\ri \sigma_1$.
\end{proof}
%
Before continuing let us state a simple technical result.
\begin{lemma}\label{wdef}
For any $\gamma\in\R$ we have that $e^{\gamma \rho_\epsilon}f_R \hat{\Psi} \in \mathcal{D}(K_0^{(2)})$.
\end{lemma}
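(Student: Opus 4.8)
The goal is to show that multiplying the eigenfunction $\hat\Psi$ by the smooth cutoff $f_R$ and by the bounded weight $e^{\gamma\rho_\epsilon}$ keeps us inside $\mathcal{D}(K_0^{(2)})$. The plan is to split this into two independent observations: first, that $f_R\hat\Psi\in\mathcal{D}(K_0^{(2)})$, and second, that subsequent multiplication by $e^{\gamma\rho_\epsilon}$ preserves the domain. For the first step, recall that $\hat\Psi$ is an eigenfunction of $H^{(2)}=K_0^{(2)}+W$ so in particular $\hat\Psi\in\mathcal{D}(H^{(2)})=\mathcal{D}(K_0^{(2)})$ (the latter equality because $W$ is $K_0^{(2)}$-bounded with relative bound zero, by the same Kato-type / diamagnetic argument used for $V$ in Lemma \ref{essentialda}, transported through the unitaries $\mathcal{F}U$). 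Since $f_R$ is smooth in $r$ with bounded derivatives and is $j$-independent in the relevant sense for differentiation, the commutator $[K_0^{(2)},f_R]=-\ri\sigma_2(\partial_r f_R)$ is a bounded operator; hence for $h\in\mathcal{D}(K_0^{(2)})$ one has $K_0^{(2)}(f_R h)=f_R K_0^{(2)}h-\ri\sigma_2(\partial_r f_R)h\in\mathcal{H}^{(2)}$, which gives $f_R h\in\mathcal{D}(K_0^{(2)})$. Applying this with $h=\hat\Psi$ yields $f_R\hat\Psi\in\mathcal{D}(K_0^{(2)})$.

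For the second step I would argue similarly but exploit the crucial fact that $f_R$ is supported in the region $\Omega_{q_1,\mu q_2}\cap\{r>R\}$, where $\rho_\epsilon$ is well behaved. On the set where $f_R\neq 0$ the function $\rho_\epsilon=\rho(1+\epsilon\rho)^{-1}$ is bounded (uniformly in $\epsilon$ it is $\le\rho$, which on the support of $f_R$ with $m_j\le$ const$\cdot r^2$ is itself not bounded, so one must be slightly careful: $\rho$ grows like $r^2$). The point, however, is that $e^{\gamma\rho_\epsilon}$ is, for each fixed $\epsilon>0$, a \emph{bounded} multiplication operator (since $\rho_\epsilon\le 1/\epsilon$), and that its derivative $\partial_r(\gamma\rho_\epsilon)=\gamma(\partial_r\rho)(1+\epsilon\rho)^{-2}$ is likewise bounded for fixed $\epsilon$, because $\partial_r\rho=q_2B_0 r/2$ and $(1+\epsilon\rho)^{-2}$ decays like $r^{-4}$, so the product is bounded in $r$. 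Moreover, since $K_0^{(2)}$ acts block-diagonally in $j$ and $\rho_\epsilon$ is $j$-dependent only through a multiplier, the commutator $[K_0^{(2)},e^{\gamma\rho_\epsilon}]=-\ri\sigma_2\,\partial_r(\gamma\rho_\epsilon)\,e^{\gamma\rho_\epsilon}$ is bounded (for fixed $\epsilon$). Therefore, exactly as before, for $h\in\mathcal{D}(K_0^{(2)})$ with $\operatorname{supp}h$ controlled, $e^{\gamma\rho_\epsilon}h\in\mathcal{D}(K_0^{(2)})$; combining with step one applied to $h=f_R\hat\Psi$ (note $e^{\gamma\rho_\epsilon}f_R\hat\Psi = e^{\gamma\rho_\epsilon}(f_R\hat\Psi)$ and the weight only matters on $\operatorname{supp}f_R$) gives the claim.

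The step I expect to be the main technical nuisance is making the boundedness of $e^{\gamma\rho_\epsilon}$ and of $\partial_r\rho_\epsilon$ on (and near) $\operatorname{supp}f_R$ fully rigorous: one has to check that the $\epsilon$-regularization $(1+\epsilon\rho)^{-1}$ does its job of keeping everything finite for each fixed $\epsilon>0$, while noting that no \emph{uniformity} in $\epsilon$ is claimed in this lemma (uniformity is precisely what the harder Agmon estimate later will have to establish). Once one commits to "fixed $\epsilon$", everything reduces to the elementary fact that $\rho_\epsilon$ and its $r$-derivative are bounded functions, and that conjugating the first-order operator $K_0^{(2)}$ by a bounded, boundedly-differentiable, fibre-wise multiplier preserves its domain — a routine commutator argument identical to the one used for $f_R$. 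A clean write-up would therefore state one auxiliary claim, \emph{if $\phi$ is bounded, differentiable in $r$ with $\partial_r\phi$ bounded, and acts fibre-wise, then $\phi\,\mathcal{D}(K_0^{(2)})\subset\mathcal{D}(K_0^{(2)})$}, prove it once via the identity $K_0^{(2)}\phi h=\phi K_0^{(2)}h-\ri\sigma_2(\partial_r\phi)h$ on the core $\mathcal{D}^{(2)}$ and a limiting argument, and then apply it twice, to $\phi=f_R$ and to $\phi=e^{\gamma\rho_\epsilon}$ restricted appropriately.
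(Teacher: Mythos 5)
Your reduction to ``multiplier with bounded $r$-derivative preserves $\mathcal{D}(K_0^{(2)})$'' is fine for the factor $f_R$, but the key analytic claim in your second step is false, and it is precisely the point this lemma has to get around. You assert that, for fixed $\epsilon>0$, $\partial_r\rho_\epsilon=(\partial_r\rho)(1+\epsilon\rho)^{-2}$ is bounded because $\rho$ grows like $r^2$, so that $(1+\epsilon\rho)^{-2}$ decays like $r^{-4}$. That is true on the fibers with $m_j<0$, and for each fixed $j\ge 0$ as $r\to\infty$, but it fails uniformly in $j$: for $m_j\ge 0$ one has $\rho(r,j)=\tfrac{q_2B_0}{4}\,(r^2-r(j)^2)$, which vanishes along the curve $r=r(j)$, while $\partial_r\rho(r,j)=q_2B_0r/2$ there. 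Hence at points with $r$ slightly above $r(j)$ --- which lie in $\operatorname{supp}f_R$, and where in fact $f_R=1$ once $r(j)\ge 2R$ --- one gets $\partial_r\rho_\epsilon\approx q_2B_0\,r(j)/2\to\infty$ as $j\to\infty$, for every fixed $\epsilon>0$. Consequently $[K_0^{(2)},e^{\gamma\rho_\epsilon}]=-\ri\gamma\sigma_2(\partial_r\rho_\epsilon)e^{\gamma\rho_\epsilon}$ is \emph{not} a bounded operator on (a neighbourhood of) $\operatorname{supp}f_R$, your auxiliary claim does not apply to $\phi=e^{\gamma\rho_\epsilon}$, and the second step of your argument collapses.

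What saves the lemma is decay of $\hat{\Psi}$, not boundedness of the weight's derivative: one only has the linear bound $|\partial_r(e^{\gamma\rho_\epsilon}f_R)|\le C_\epsilon(1+r)$, and the paper exploits exactly this. It tests $e^{\gamma\rho_\epsilon}f_R\hat{\Psi}$ weakly against $\eta$ in the core $\mathcal{F}UC_0^\infty(\R^2;\C^2)$, integrates by parts, and writes the commutator contribution as $(\partial_r e^{\gamma\rho_\epsilon}f_R)e^{-\lambda r}\cdot(e^{\lambda r}\hat{\Psi})$, where the first factor is bounded and $e^{\lambda r}\hat{\Psi}\in\mathcal{H}^{(2)}$ by the already proven Theorem \ref{Maintheorem}. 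Your first step ($\hat{\Psi}\in\mathcal{D}(K_0^{(2)})$ via relative boundedness of $W$, and boundedness of $[K_0^{(2)},f_R]=-\ri\sigma_2(\partial_rf_R)$) is correct and is implicitly used in the paper as well. If you wanted to avoid invoking Theorem \ref{Maintheorem}, you could instead check that $|\partial_r\rho_\epsilon|\le C_\epsilon(1+\sqrt{|m_j|})$ on $\Omega_{q_1,\mu q_2}$ (split into $r\le 2r(j)$ and $r\ge 2r(j)$) and control the commutator term by the exponential decay in $|m_j|$ from Lemma \ref{RotAnalytic}; but some a priori decay of $\hat{\Psi}$, in $r$ or in $m_j$, must enter --- the purely ``bounded commutator'' route cannot work.
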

\begin{proof}
Let  $\lambda>0$ and $\eta\in \mathcal{F}U C^\infty_0(\R^2;\C^2)$. First observe that a simple computation shows that
$$(\partial_r e^{\gamma \rho_\epsilon}f_R)e^{-\lambda r}$$ extends to a bounded operator on $\mathcal{H}^{(2)}$.
 In addition,  $e^{\lambda r}\hat{\Psi}\in\mathcal{H}^{(2)}$ by Theorem \ref{Maintheorem}. Therefore, we get by   explicit calculation on $\mathcal{F}U C^\infty_0(\R^2;\C^2)$, that
\begin{align*}
  \sps{K_0^{(2)}\eta}{e^{\gamma \rho_\epsilon}f_R \hat{\Psi}}&=\sps{f_Re^{\gamma \rho_\epsilon}K_0^{(2)}\eta}{\hat{\Psi}}\\
&= \sps{K_0^{(2)}f_Re^{\gamma \rho_\epsilon}\eta}{\hat{\Psi}}+\sps{\ri\sigma_2(\partial_r e^{\gamma \rho_\epsilon}f_R)\eta}{\hat{\Psi}}\\
&= \sps{\eta}{e^{\gamma \rho_\epsilon}f_R K_0^{(2)}\hat{\Psi}}-\ri\sps{\eta}{\sigma_2(\partial_r e^{\gamma \rho_\epsilon}f_R)e^{-\lambda r} \,(e^{\lambda r}\hat{\Psi})}\,.
\end{align*}
Since $\eta$ can be chosen arbitrarily  from the domain of essential self-adjointness of $K_0^{(2)}$ we get the desired result.
\end{proof}
An important role in our analysis is played by the quantity
\begin{equation} \label{defQ}
Q:= \operatorname{Re} \langle K^{(2)}_0e^{\delta\,\rho_\epsilon}g\,|\,K^{(2)}_0e^{-\delta\,\rho_\epsilon}g
 \rangle\,,
\end{equation} 
which is well defined due to Lemma \ref{wdef}.
%
Before we show Theorem~\ref{Maintheorem3} we state two preparatory lemmata whose proofs are given in the   next subsection.
\begin{lemma}\label{LowerboundQ}
There are $R,\epsilon $-independent constants $C_1,C_2>0$ such that, for $R>R_1$ sufficiently large, 
\begin{equation}
Q\ge (C_1R^2-C_2)\|g\|^2\,.
\end{equation}
\end{lemma}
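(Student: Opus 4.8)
The plan is to expand $Q$ explicitly using the form of $K_0^{(2)}$ given in \eqref{DefK02}, and to extract from the expansion a dominant positive term proportional to $r^2$ on the support of $g$. Writing $K_0^{(2)}=-\ri\sigma_2\partial_r+\sigma_1 P$ with $P=P(r,j):=-m_jr^{-1}+A(r)$, a direct computation shows that
\begin{align*}
Q=\operatorname{Re}\big\langle(-\ri\sigma_2\partial_r+\sigma_1 P)e^{\delta\rho_\epsilon}g\,\big|\,(-\ri\sigma_2\partial_r+\sigma_1 P)e^{-\delta\rho_\epsilon}g\big\rangle\,.
\end{align*}
Since $e^{\pm\delta\rho_\epsilon}$ commutes with $P$ and with the Pauli matrices, the cross terms involving $\sigma_2\partial_r$ acting on $e^{\pm\delta\rho_\epsilon}$ contribute a factor $(\partial_r\rho_\epsilon)$ with opposite signs, and these largely cancel after taking the real part; what survives is $\|\partial_r g\|^2-\delta^2\|(\partial_r\rho_\epsilon)g\|^2+\|Pg\|^2$ together with a cross term $2\operatorname{Re}\langle\sigma_2\sigma_1(\partial_r g)\,|\,Pg\rangle$ plus the commutator term coming from $\sigma_1\partial_r P=\sigma_1(m_j r^{-2}+A'(r)-A(r)r^{-1})$. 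The key algebraic identity is that $\sigma_2\sigma_1=-\ri\sigma_3$, so this cross term, after an integration by parts in $r$, produces a term of the form $\langle g\,|\,\sigma_3(\partial_r P)g\rangle$; using $A'(r)+A(r)r^{-1}=B(r)$ (which follows from the definition \eqref{DefArad1} of $A$) one finds
\begin{align*}
\partial_r P=\tfrac{m_j}{r^2}+A'(r)-\tfrac{A(r)}{r}=\tfrac{2m_j}{r^2}+A'(r)+\tfrac{A(r)}{r}-\tfrac{2A(r)}{r}=\tfrac{2m_j}{r^2}-\tfrac{2A(r)}{r}+B(r)=-\tfrac{2P}{r}+B(r)\,,
\end{align*}
which is exactly the mechanism producing the quadratic-in-$r$ lower bound.

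Concretely, I would combine the nonnegative term $\|Pg\|^2$ with the $\sigma_3$-weighted commutator term. On $\mathrm{supp}\,g$ we have $(r,j)\in\Omega_{q_1,\mu q_2}$, so $P(r,j)=A(r)-m_jr^{-1}$ and, using $A(r)\approx B_0 r/2$ for large $r$ together with the defining inequality $r\ge r(j)$ (which says $m_j\le\frac{(q_1^2-q_2^2)B_0^2}{4\tilde B}r^2$, hence $m_jr^{-1}\lesssim r$ with a coefficient strictly smaller than $B_0/2$ once $q_1,\mu q_2$ are close enough to $1$), one gets a pointwise lower bound $|P(r,j)|\ge c_0 r$ for $r\ge R_0$ with $R_0$ as fixed above \eqref{r0}, for some $c_0>0$. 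This gives $\|Pg\|^2\ge c_0^2\int_{\mathrm{supp}\,g}r^2|g|^2$. Since $g$ is supported in $\{r\ge R\}$, this is $\ge c_0^2 R^2\|g\|^2$; the remaining negative contributions — namely $-\delta^2\|(\partial_r\rho_\epsilon)g\|^2$ and the $B(r)$-part of the $\sigma_3$-commutator term — are controlled by \eqref{r0} and by the uniform bound $|\partial_r\rho_\epsilon|\le|\partial_r\rho|$, and $\partial_r\rho=\frac{q_2 B_0}{2}r$ grows only linearly times the constant $\delta^2 q_2^2 B_0^2/4$, which I would absorb by taking $q_1,q_2$ close to $1$ and by noting $B(r)\le\tilde B$ while the coefficient in $\|Pg\|^2$ can be arranged to dominate; any error terms not proportional to $r^2$ are bounded by a constant times $\|g\|^2$, giving $Q\ge(C_1R^2-C_2)\|g\|^2$.

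The main obstacle I anticipate is the bookkeeping of the boundary/commutator terms from integrating by parts in $r$ against the weight $e^{\pm\delta\rho_\epsilon}f_R$: the cutoff $f_R$ has a derivative supported in the transition region $\{R\le r\le 2R\}\cup(\Omega_{q_1,\mu q_2}\setminus\Omega_{q_1,q_2})$, and one must check that $\partial_r f_R$ contributes only lower-order (constant times $\|g\|^2$, or absorbable) terms — this is where one uses that $f_R$ has bounded derivatives and that on its transition region $r\asymp R$ so that $r^2$ and $1$ are comparable up to the already-accounted constants. A second delicate point is ensuring the coefficient $c_0$ in $|P|\ge c_0 r$ is genuinely positive and $R,\epsilon$-independent; this forces the constraints "$q_1$ close to $1$, $\mu q_2$ close to $1$, $\tilde B$ close to $B_0$" to be made precise simultaneously, and one checks the single inequality $\frac{B_0}{2}-\frac{1}{2}\sqrt{(q_1^2-q_2^2)B_0^2/\tilde B}>0$ which holds since $q_1^2-q_2^2<1$ and $\tilde B>B_0$ can be taken close to $B_0$. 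Everything else — the cancellation of the $\partial_r\rho_\epsilon$ cross terms after $\operatorname{Re}$, the positivity of $\|\partial_r g\|^2$, which is simply discarded — is routine once the identity $\partial_r P=-2P/r+B$ is in hand.
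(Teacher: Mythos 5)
Your outline is in fact the paper's: conjugating the weight gives exactly $Q=\|K_0^{(2)}g\|^2-\delta^2\|(\partial_r\rho_\epsilon)g\|^2$ (the cross terms cancel identically after taking the real part), and the quadratic-in-$r$ lower bound for $\|K_0^{(2)}g\|^2$ comes from $A(r)\ge q_1B_0r/2$ together with the support restriction $m_j\le (q_1^2-\mu^2q_2^2)B_0^2r^2/(4\bs)$ on $\Omega_{q_1,\mu q_2}$; the paper isolates this as Lemma \ref{LowerboundK02}, written componentwise, while you phrase it through the pointwise bound $P(r,j)=A(r)-m_jr^{-1}\ge c_0 r$. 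That variant is viable, but as written your argument has two genuine gaps plus an algebra slip. The slip: $\partial_rP=m_jr^{-2}+A'(r)$, and since $A'=B-A/r$ the identity is $\partial_rP=-P/r+B(r)$, not $-2P/r+B(r)$ (your chain of equalities inserts a spurious $-A(r)/r$ and doubles $m_j/r^2$). More seriously, your claim that all error terms not proportional to $r^2$ are $O(\|g\|^2)$ fails for the commutator term $\langle g,\sigma_3(\partial_rP)g\rangle$ in the sector $m_j<0$: there $r(j)=0$, so $\operatorname{supp}g$ imposes no relation between $r$ and $m_j$, and $|m_j|r^{-2}$ is unbounded over $j$ at $r\sim R$, entering with the unfavorable sign for one spin component. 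The paper avoids this precisely by \emph{not} discarding the $m_j^2r^{-2}$ part of $P^2$, using $(m_j^2\mp m_j)r^{-2}\ge-\tfrac14 r^{-2}$; in your scheme you would have to retain a portion of $\|Pg\|^2$ (or use Cauchy--Schwarz) to absorb this term rather than converting all of $\|Pg\|^2$ into $c_0^2\|rg\|^2$.

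The second gap is the decisive quantitative step. The negative term $-\delta^2\|(\partial_r\rho_\epsilon)g\|^2\ge-\tfrac{\delta^2q_2^2B_0^2}{4}\|rg\|^2$ is itself quadratic in $r$ on the unbounded support of $g$, so you cannot first replace $\|Pg\|^2$ by $c_0^2R^2\|g\|^2$ and then treat that term as controlled: you must compare the two $\|rg\|^2$-coefficients pointwise and only afterwards use $r\ge R$. Hence what must be checked is not $c_0>0$ (which is what your ``single inequality'' addresses, and that inequality is in any case dimensionally garbled) but $c_0>\delta q_2B_0/2$. This is exactly where the paper's parameter architecture enters: $\operatorname{supp}f_R\subset\Omega_{q_1,\mu q_2}$ with $\mu<1$ and $\delta<\mu$, producing the coefficient $(\mu^2-\delta^2)q_2^2B_0^2/4>0$. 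With your constant, $c_0=\tfrac{q_1B_0}{2}-\tfrac{(q_1^2-\mu^2q_2^2)B_0^2}{4\bs}$, one does have $c_0-\tfrac{\delta q_2B_0}{2}\ge\tfrac{(\mu-\delta)q_2B_0}{2}$ (using $\bs>B_0$ and $q_1+\mu q_2<2$), so your route closes, but this verification and the role of $\mu$ versus $\delta$ are absent from the proposal. Finally, the worry about boundary terms from $\partial_rf_R$ is moot in this lemma: $f_R$ is already part of $g$, so the integration by parts in the cross term produces no separate cutoff contributions (those arise only in Lemma \ref{UpperboundQ}).
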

%
%
%
\begin{lemma}\label{UpperboundQ}
There is an $R,\epsilon $-independent constant $C_3$ and an $\epsilon$-independent constant $C(R)$ such that, for $R>R_1$ sufficiently large, 
\begin{equation}
Q\le C_3\|g\|^2+ C(R)\|g\|\,. 
\end{equation}
\end{lemma}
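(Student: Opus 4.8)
The plan is to turn $Q$ into an estimate for $\|K_0^{(2)}g\|$, insert the eigenvalue equation, and control the resulting terms with Lemma~\ref{KonjugiertW} and Lemma~\ref{RotAnalytic}. First I would shift the exponential weights through $K_0^{(2)}$. Since $K_0^{(2)}=-\ri\sigma_2\partial_r+\sigma_1(A(r)-m_jr^{-1})$, only the first summand fails to commute with the multiplication operator $e^{\pm\delta\rho_\epsilon}$, so $K_0^{(2)}e^{\pm\delta\rho_\epsilon}g=e^{\pm\delta\rho_\epsilon}(K_0^{(2)}\mp\ri\delta\sigma_2\omega)g$ with $\omega:=\partial_r\rho_\epsilon$; this is rigorous by the density argument of Lemma~\ref{wdef}, applicable since $g$ and $e^{2\delta\rho_\epsilon}f_R\hat\Psi$ lie in $\mathcal{D}(K_0^{(2)})$ (Lemma~\ref{wdef} with $\gamma=\delta$ and $\gamma=2\delta$). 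The weights cancel in the inner product defining $Q$, and since $\sigma_2$ is self-adjoint and $\omega$ real the cross terms are purely imaginary, giving the exact identity
\[
Q=\operatorname{Re}\SP{(K_0^{(2)}-\ri\delta\sigma_2\omega)g}{(K_0^{(2)}+\ri\delta\sigma_2\omega)g}=\|K_0^{(2)}g\|^2-\delta^2\|\omega g\|^2 .
\]

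Next, using $g=e^{\delta\rho_\epsilon}f_R\hat\Psi$ and $K_0^{(2)}\hat\Psi=(E-W)\hat\Psi$, I would write
\[
K_0^{(2)}g=-\ri\delta\sigma_2\omega g+Eg-\underbrace{e^{\delta\rho_\epsilon}f_RW\hat\Psi}_{=:h_2}-\ri\sigma_2\underbrace{(\partial_rf_R)e^{\delta\rho_\epsilon}\hat\Psi}_{=:h_1}.
\]
Substituting this into the identity above, the $\delta^2\|\omega g\|^2$ coming from $\|-\ri\delta\sigma_2\omega g\|^2$ cancels exactly, the cross term $2\operatorname{Re}\SP{-\ri\delta\sigma_2\omega g}{Eg}$ vanishes (again purely imaginary), and $Q$ becomes a sum of terms each bilinear in $g,h_1,h_2$ and $\omega g$. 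Equivalently — and this is the point of "squaring" $K_0^{(2)}$ — one may use the commutator \eqref{eq:32}, $[K_0^{(2)},W_0]=-\ri\sigma_2W_1+\ri r^{-1}\sigma_1W_2$, to replace the $W$-quadratic contributions by $(E-W)^2$ together with the operators $W_1$ and $W_2$; this is exactly where (A5) is needed, so that $W_1$ enters controllably on $\operatorname{supp}(f_R)\subset\{r>R_1\}$. Thus everything reduces to bounding $\|h_1\|$, $\|h_2\|$, the conjugated operators $e^{\delta\rho_\epsilon}W_je^{-\delta\rho_\epsilon}$ on the range of $g$, and $(E-W)$ on $\{r>R\}$.

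For $(E-W)$: since $R>R_1$ and, by (A4)/\eqref{decV}, $\|\theta_RW\theta_R\|\to0$ as $R\to\infty$, one has $\|\theta_R(E-W)\theta_R\|\le|E|+1$, contributing an $R$-independent $C_3\sim(|E|+1)^2$. For $h_1$: $\partial_rf_R$ is supported in $\{R<r<2R\}\cup(\Omega_{q_1,\mu q_2}\setminus\Omega_{q_1,q_2})$, on which $\rho_\epsilon$ is either $\equiv0$ (there $m_j\ge0$ and $r<r(j)$) or $\le q_2B_0R^2$, uniformly in $\epsilon$; hence $\|h_1\|\le C(R)$. The delicate point is $h_2=e^{\delta\rho_\epsilon}f_RW\hat\Psi$ (and likewise the $W_1,W_2$ terms): componentwise $(W\hat\Psi)(r,l)=\sum_j\hat v(r,l-j)\hat\Psi(r,j)$, and I would split the sum via $1=f_R(r,j)+(1-f_R(r,j))$. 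The $f_R$-part equals $f_R\big(\theta_Re^{\delta\rho_\epsilon}W_0e^{-\delta\rho_\epsilon}\theta_R\big)g$ (using $f_R=f_R\theta_R$), hence is $\le c\|g\|$ by Lemma~\ref{KonjugiertW}, with $c$ independent of $R$ and $\epsilon$. The $(1-f_R)$-part is supported where either $r<2R$ — a set compact in $r$, on which $e^{\delta\rho_\epsilon}\le C(R)$ and $\sup_{r>R_0}\|\hat v(r,\cdot)\|_{\ell^1}<\infty$ by the Paley--Wiener decay from (A4), contributing $\le C(R)$ — or $(r,j)\notin\Omega_{q_1,q_2}$, i.e. $m_j\ge c_1r^2$ with $c_1=(q_1^2-q_2^2)B_0^2/(4\bs)$; there one estimates $e^{\delta\rho_\epsilon(r,l)}\le e^{\delta q_2B_0r^2/4}$ and absorbs it into the super-exponential decay $|\hat\Psi(r,j)|\le e^{-\gamma|m_j|}\Phi_\gamma(r,j)$ supplied by Lemma~\ref{RotAnalytic}, choosing $\gamma>\delta q_2B_0/(4c_1)=\delta\tau$ so that $e^{\delta\rho_\epsilon(r,l)}e^{-\gamma|m_j|}\le1$, after which a Young estimate against $\hat v(r,l-j)$ bounds this contribution by a constant. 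Altogether $\|h_2\|\le c\|g\|+C(R)$.

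Collecting — Cauchy--Schwarz in the bilinear terms, $\|Eg-h_2-\ri\sigma_2h_1\|^2\le3(E^2\|g\|^2+\|h_2\|^2+\|h_1\|^2)$, and Young's inequality to split the mixed $\|g\|\cdot C(R)$ contributions, the $\omega g$-factors being paired on $\operatorname{supp}(\partial_rf_R)$ with the there-bounded multiplier $\omega$ — yields $Q\le C_3\|g\|^2+C(R)\|g\|$ with $C_3$ independent of $R,\epsilon$ and $C(R)$ independent of $\epsilon$. The main obstacle is the estimate for $h_2$ (and the conjugated $W_1,W_2$): the Gaussian weight $e^{\delta\rho_\epsilon}$ is only available, and only bounded by Lemma~\ref{KonjugiertW}, in the sector $m_j\lesssim r^2$, so on the complementary angular-momentum sector one must trade the unbounded growth of the weight against the decay of $\hat\Psi$ in $|m_j|$ furnished by Lemma~\ref{RotAnalytic}; making the two thresholds match is exactly what fixes the relations among $q_1,q_2,\mu,\bs$ and the decay rate $\gamma$.
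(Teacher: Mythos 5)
Your reduction of $Q$ to $\|K_0^{(2)}g\|^2-\delta^2\|\partial_r\rho_\epsilon\, g\|^2$ is fine (it is the same identity the paper uses for the \emph{lower} bound), but the subsequent substitution of the eigenvalue equation creates a term you do not actually control. Writing $K_0^{(2)}g=-\ri\delta\sigma_2\omega g+Eg-h_2-\ri\sigma_2 h_1$ with $\omega=\partial_r\rho_\epsilon$, the expansion leaves, besides $\|Eg-h_2-\ri\sigma_2h_1\|^2$, the cross term $-2\operatorname{Re}\sps{-\ri\delta\sigma_2\omega g}{h_2}$ with $h_2=e^{\delta\rho_\epsilon}f_RW\hat\Psi$. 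Your parenthetical claim that ``the $\omega g$-factors are paired on $\operatorname{supp}(\partial_rf_R)$'' is true only for the $h_1$ cross term: $h_2$ is supported on all of $\operatorname{supp}f_R$, and on the bulk of that set $\omega$ behaves like $q_2B_0r/2$ (it vanishes only where $m_j\ge 0$, $r<r(j)$). Cauchy--Schwarz therefore gives for this term only a bound of order $\|r\,g\|\,(c\|g\|+C(R))$, and $\|r\,g\|$ cannot be estimated by $C\|g\|$ uniformly in $\epsilon$ (nor can one move $\omega$ onto $h_2$, since $\hat v$ is merely bounded/decaying in $r$, not $O(1/r)$). So as written the proof does not yield $Q\le C_3\|g\|^2+C(R)\|g\|$; this cross term is precisely the difficulty created by symmetrizing first. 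The paper avoids it by exploiting the asymmetry of $Q$: one factor is $K_0^{(2)}e^{-\delta\rho_\epsilon}g=f_R(E-W)\hat\Psi-\ri\sigma_2(\partial_rf_R)\hat\Psi$, which contains no derivative of the weight, and then $K_0^{(2)}$ is moved off the weighted factor $e^{\delta\rho_\epsilon}g$ by duality onto this unweighted factor, producing $(E-W)^2f_R$, the commutator $[W,K_0^{(2)}]f_R$ (this is where (A5), via Lemma~\ref{KonjugiertW}, enters), and boundary terms supported on $\operatorname{supp}\partial_rf_R$ --- so $\partial_r\rho_\epsilon$ never multiplies $g$ against a $W$-term.

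For completeness: your route is probably repairable, but only with an additional idea you did not state. Since $\delta<\mu$, Lemma~\ref{LowerboundK02} together with $\|K_0^{(2)}g\|^2=Q+\delta^2\|\omega g\|^2$ gives $(\mu^2-\delta^2)q_2^2B_0^2\|r g\|^2/4\le Q+C\|g\|^2$, so the bad cross term is $\lesssim(\sqrt{Q}+\|g\|)(c\|g\|+C(R))$ and can be absorbed by $\tfrac12 Q$ via Young's inequality; this yields $Q\le C_3\|g\|^2+C(R)\|g\|+C'(R)$ with an extra additive constant, which would still suffice for the bootstrap in the proof of Theorem~\ref{Maintheorem3}. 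The remaining ingredients of your argument (the conjugation bound for the $f_R$-part of $h_2$, and the trade-off on the complementary angular-momentum sector using Lemma~\ref{RotAnalytic} with $\gamma>\delta\tau$) are sound and parallel the paper's treatment, but without handling the $\omega g$--$h_2$ pairing the proposal has a genuine gap.
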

%
%
\begin{proof}[Proof of Theorem \ref{Maintheorem3}]
Fix $\delta,q_1, q_2\in(0,1)$.
Combining Lemma \ref{UpperboundQ} and \ref{LowerboundQ} we find,
for  $R>R_1$ sufficiently large, 
\begin{equation}\label{eq31a}
\|g\|\le (C_1R^2- C_2-C_3)^{-1} C(R) 
\end{equation}
Since the right hand side of \eqref{eq31a} is
independent of $\epsilon$ 
we obtain, by the monotone convergence theorem, 
\begin{equation}
\|e^{\delta\rho}\hat{\Psi}\|^2=\lim_{\epsilon\to 0} \|e^{\delta\rho_\epsilon}\hat{\Psi}\|^2\le ( \sup_{\epsilon>0}\|g\|+\|e^{\delta \rho}(1-f_R)\|)^2 <\infty\,.
\end{equation} 
For  $M>1$ define
\begin{equation}
\widetilde{\Omega}_{q_1,q_2,M}= \Big\{ (r,j)\in \R^+\times \Z\,|\,r^2\ge M r(j)^2\Big\}\,.
\end{equation}
We have that $\widetilde{\Omega}_{q_1,q_2,M}\subset \Omega_{q_1,q_2}$. Thus,  for any $(r,j)\in \widetilde{\Omega}_{q_1,q_2,M}$, we get
\begin{equation}
\rho(r,j)=  \frac{q_2B_0}{4}(r^2-r(j)^2)\ge \frac{q_2B_0}{4} \big(1-\frac{1}{M}\big) r^2\,.
\end{equation}
Therefore, setting $\alpha:= \delta q_2 (1-M^{-1})$, we obtain
\begin{equation}\label{wwww}
\|e^{\alpha B_0/4 r^2} \mathbbm{1}_{\widetilde{\Omega}_{q_1,q_2,M}}\hat{\Psi}\|<\infty\,.
\end{equation}
 If
$(r,j)\not \in\widetilde{\Omega}_{q_1,q_2,M}$ then
\begin{equation}
m_j\ge \frac{(q_1^2-q_2^2)B_0^2r^2}{4M\bs}=:\beta r^2.
\end{equation}
Thus, thanks to Lemma \ref{RotAnalytic} we deduce, for any $\gamma>0$, that
\begin{equation}\label{deca}
\|e^{\beta \gamma r^2} \mathbbm{1}_{\widetilde{\Omega}^c_{q_1,q_2,M}}\hat{\Psi}\| <\infty\,.
\end{equation}
Choosing $\gamma= \alpha/\beta \cdot B_0/4$ and combining \eqref{deca} with \eqref{wwww} we conclude that $\|e^{\alpha B_0/4 r^2} \hat{\Psi}\| <\infty$. The latter holds for $\alpha>0$  arbitrarily close to 1, since $\delta$ and $q_2$
can be chosen arbitrarily close to 1 and $M>1$ can be as large as we want. This proves the theorem.
\end{proof}
%
%
\subsection{Proof of lemmas \ref{LowerboundQ} and  \ref{UpperboundQ} }
Before we give the proof of lemmas  \ref{LowerboundQ}  and  \ref{UpperboundQ} we need a preparatory result.
\begin{lemma}\label{LowerboundK02}
For $R>R_1$ sufficiently large we have that
\begin{equation} 
\|K^{(2)}_0 g\|^2_{\mathcal{H}^{(2)}}
\ge \mu^2 q_2^2B_0^2 \|r\,g\|^2/4-\|r^{-1}\,g\|^2/4
-\bs \,\|g\|^2\,.
\end{equation}
\end{lemma}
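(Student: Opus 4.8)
The plan is to reduce the estimate to a pointwise lower bound on the ``potential'' that appears when one squares the radial Dirac operator. Put $\Lambda(r,j):=A(r)-m_j r^{-1}$, so that on each fibre $K^{(2)}_0=-\ri\sigma_2\partial_r+\sigma_1\Lambda(r,j)$. First I would expand, using that $\sigma_1,\sigma_2$ are unitary,
\[
\|K^{(2)}_0 g\|^2=\|\partial_r g\|^2+\|\Lambda g\|^2+2\operatorname{Re}\SP{-\ri\sigma_2\partial_r g}{\sigma_1\Lambda g}\,.
\]
Since $\sigma_2\sigma_1=-\ri\sigma_3$, the cross term equals $\sum_{j}\int_0^\infty\Lambda(r,j)\,\partial_r\langle g(r,j),\sigma_3 g(r,j)\rangle_{\C^2}\,\rd r$, and integrating by parts in $r$ — legitimate because $g=e^{\delta\rho_\epsilon}f_R\hat\Psi$ is supported in $\{r>R\}$ (no boundary term at the origin), $g\in\mathcal D(K^{(2)}_0)$ by Lemma~\ref{wdef}, and $\hat\Psi$ decays rapidly at infinity by Theorem~\ref{Maintheorem} (no boundary term at infinity) — it becomes $-\sum_{j}\int_0^\infty(A'(r)+m_j r^{-2})\,\langle g(r,j),\sigma_3 g(r,j)\rangle_{\C^2}\,\rd r$. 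Hence
\[
\|K^{(2)}_0 g\|^2=\|\partial_r g\|^2+\sum_{j\in\Z}\int_0^\infty\Big(\Lambda(r,j)^2\,|g(r,j)|^2-(A'(r)+m_j r^{-2})\,\langle g(r,j),\sigma_3 g(r,j)\rangle_{\C^2}\Big)\rd r\,,
\]
which is just the identity $(K^{(2)}_0)^2=-\partial_r^2+\Lambda^2-\sigma_3\,(A'(r)+m_j r^{-2})$ tested against $g$.

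Next I would drop $\|\partial_r g\|^2\ge 0$ and bound the integrand pointwise. From \eqref{DefArad1} one has $A'(r)+m_j r^{-2}=B(r)-\Lambda(r,j)r^{-1}$, and on $\operatorname{supp} g$ we have $r>R>R_0$, so $|B(r)|<\bs$ by \eqref{r0}; together with $\|\sigma_3\|=1$ this gives, on $\operatorname{supp} g$,
\[
\Lambda^2\,|g|^2-(A'+m_j r^{-2})\langle g,\sigma_3 g\rangle_{\C^2}\ \ge\ \big(\Lambda^2-|\Lambda|\,r^{-1}-\bs\big)|g|^2\ \ge\ \Big(\big(|\Lambda|-\tfrac{1}{2r}\big)^2-\tfrac{1}{4r^2}-\bs\Big)|g|^2\,.
\]

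The heart of the matter is the lower bound $|\Lambda(r,j)|\ge \tfrac{\nu B_0}{2}\,r$ on $\operatorname{supp} g\subset\{r>R\}\cap\Omega_{q_1,\mu q_2}$ for some fixed $\nu\in(\mu q_2,1)$, valid once $R$ (and $R_0$) are large. From \eqref{DefArad1}, $A(r)=\tfrac{B_0}{2}r+r^{-1}\int_0^r b(s)s\,\rd s$, and by (A1) we may fix $R_0$ so large that $\varepsilon_0:=\|\mathbbm{1}_{\{r>R_0\}}b\|_\infty<\tfrac{B_0}{2}(1-\mu q_2)^2$; splitting the integral at $R_0$ gives, for $r>R_0$, $A(r)\ge\tfrac{1}{2}(B_0-\varepsilon_0)r-C r^{-1}$ with $C:=\int_0^{R_0}|b(s)|s\,\rd s<\infty$. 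On the other hand, the definition of $r(j)$ shows that on $\Omega_{q_1,\mu q_2}$ one has $m_j\le\tfrac{(q_1^2-\mu^2 q_2^2)B_0^2}{4\bs}r^2$, hence (using $\bs>B_0$ and $q_1<1$) $m_j r^{-1}\le\tfrac{(1-\mu^2 q_2^2)B_0}{4}r$, while for $m_j<0$ the term $-m_j r^{-1}$ only helps. Combining these and using $\varepsilon_0<\tfrac{B_0}{2}(1-\mu q_2)^2$, an elementary computation gives $A(r)-m_j r^{-1}>\tfrac{\nu B_0}{2}r$ for some fixed $\nu\in(\mu q_2,1)$ and all $(r,j)\in\Omega_{q_1,\mu q_2}$ with $r\ge R_1$, $R_1$ large. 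Enlarging $R_1$ so that $\tfrac{1}{2}(\nu-\mu q_2)B_0\,r^2\ge\tfrac{1}{2}$ for $r\ge R_1$, we obtain $|\Lambda(r,j)|-\tfrac{1}{2r}\ge\tfrac{\nu B_0}{2}r-\tfrac{1}{2r}\ge\tfrac{\mu q_2 B_0}{2}r$, so $\big(|\Lambda(r,j)|-\tfrac{1}{2r}\big)^2\ge\tfrac{\mu^2 q_2^2 B_0^2}{4}r^2$. Inserting this into the pointwise bound and integrating over $(r,j)$ yields
\[
\|K^{(2)}_0 g\|^2\ \ge\ \tfrac{\mu^2 q_2^2 B_0^2}{4}\,\|r g\|^2-\tfrac{1}{4}\,\|r^{-1} g\|^2-\bs\,\|g\|^2\,,
\]
as claimed. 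I expect this last step to be the main obstacle: one must extract the precise constant $\mu^2 q_2^2 B_0^2/4$ from $\Lambda^2$, which works only because $\Omega_{q_1,\mu q_2}$ together with the choice $\bs>B_0$ leaves a strict margin — quantified by $\mu q_2<1$ — that simultaneously beats the centrifugal term $m_j/r$ and leaves room to absorb the $\tfrac{1}{2r}$ shift for large $r$. The integration by parts in the first paragraph is routine but should be justified with some care, since $g$ is only known a priori to lie in $\mathcal D(K^{(2)}_0)$.
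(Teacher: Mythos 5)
Your proposal is correct and follows essentially the same route as the paper: square the fibered operator, drop the radial kinetic term, and bound the resulting effective potential pointwise on $\operatorname{supp} g\subset\Omega_{q_1,\mu q_2}$ using $m_j\le (q_1^2-\mu^2q_2^2)B_0^2r^2/(4\bs)$ together with $A(r)\approx B_0 r/2$ for large $r$, which yields exactly the constants $\mu^2q_2^2B_0^2/4$, $-1/(4r^2)$ and $-\bs$. The only differences are cosmetic bookkeeping: you use the identity $A'+m_jr^{-2}=B-\Lambda/r$ and complete a square in $|\Lambda|$, whereas the paper works componentwise with the bounds \eqref{eq:8} and the inequality $m_j^2\mp m_j\ge -1/4$.
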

%
%
\begin{proof}
Let us write $g=(g^+,g^-)^{\mathrm{T}}$ and $g_j^\pm:= g^\pm(\cdot,j)$.
By Equation \eqref{DefK02} we have
\begin{align*}
\|K^{(2)}_0 g\|^2_{\mathcal{H}^{(2)}}&
=\sum_{j\in  \Z} \Big(\|(\partial_r - m_jr^{-1}+A(r))g_j^+\|^2\\
&\phantom{=\sum_{j\in  \Z} \Big(\|(\partial_r}
\quad +\|(-\partial_r - m_jr^{-1}+A(r))g_j^-\|^2\Big)\,.
\end{align*}
Furthermore, dropping the term $-\partial_r^2$, we get
\begin{align*}
\|(\pm \partial_r - m_jr^{-1}+A(r))g_j^\pm\|^2
&\ge \big \langle g^\pm_j\,|\,\big((m^2_j\mp m_j)r^{-2}+A(r)^2 \big)g_j^\pm\big \rangle\\
&\quad+ \big \langle g^\pm_j\,|\,\mp \partial_r A(r)-2m_j r^{-1}A(r)\big)g_j^\pm\big \rangle\,.
\end{align*}
Observe that  (A2) implies that
\begin{align}
  \label{eq:7}
  \frac{1}{r^2}\int_{0}^r b(s)s ds={\rm o}(1)\,,\quad\mbox{as}\quad r\to\infty\,.
\end{align}
This can be seen by splitting the integral above in the regions where $b(s)s$ is integrable and the  one where $b$ decays in the $L^\infty$- norm. Hence, given $q_3\in(q_1,1)$ we find, using \eqref{eq:7}, a constant $R_2>R_1$ such that, for all $r>R_2$,
\begin{equation}
  \label{eq:8}
\begin{split}
  &B(r)\ge q_3 B_0\,,\qquad A(r)\ge q_1 B_0r/2\,,\\
&|\partial_r A(r)|\le \bs\,,\qquad
A(r)\le \bs r/2\,.
\end{split}
\end{equation}
Therefore, for all $r>R>R_2$, we get
\begin{align*}
  &\|(\pm \partial_r - m_jr^{-1}+A(r))g_j^\pm\|^2\\&\qquad
\ge \sps{g_j^\pm}{(-r^{-2}/4+q_1^2B_0^2 r^2/4-2m_jr^{-1}A(r) -\bs)g_j^\pm}\,,
\end{align*}
where we also use that $(m^2_j\pm m_j)\ge -1/4$.

Assume that $m_j<0$. Since  $q_1>q_2$ and   $A(r)>0$, for $r>R_2$, we find that
\begin{equation}
\|(\pm \partial_r - m_jr^{-1}+A(r))g_j^\pm\|^2 
\ge \big \langle g^\pm_j\,|\,\big(q_2^2B_0^2 r^2/4-r^{-2}/4-\bs \big)g_j^\pm\big \rangle\,.
\end{equation}
Assume now that $m_j\ge 0$.  Recall that $A(r)\le \bs r/2$, for $r>R_2$. Using  that $m_j\le r^2(q_1^2-\mu^2q_2^2)B_0^2/(4\bs)$ on $\operatorname{supp}g \subset \Omega_{q_1,\mu q_2}$ we get
\begin{align*}
\|&(\pm \partial_r - m_jr^{-1}+A(r))g_j^\pm\|^2\\
&\ge \big \langle g^\pm_j\,|\,\big(q_1^2B_0^2 r^2/4-r^{-2}/4-m_j \bs-\bs \big)g_j^\pm\big \rangle\\
&\ge \big \langle g^\pm_j\,|\,\big( \mu^2 q_2^2 B_0^2 r^2/4-r^{-2}/4
-\bs\big)g_j^\pm\big \rangle\,.
\end{align*}
This finishes the proof.
\end{proof}
%
%
\begin{proof}[Proof of Lemma \ref{LowerboundQ}]
Notice that
\begin{align*}
e^{\pm \delta\,\rho_\epsilon}&K^{(2)}_0 e^{\mp \delta\,\rho_\epsilon}
= K^{(2)}_0+ Z^{\pm\rho_\epsilon},\qquad
Z^{\pm\rho_\epsilon}:= \pm \ri \delta\partial_r \rho_\epsilon \sigma_2\,.
\end{align*}
Thus, we have 
\begin{align*}
Q&=\operatorname{Re} \langle (K^{(2)}_0
+Z^{-\rho_\epsilon})g\,|\,(K^{(2)}_0+Z^{\rho_\epsilon})g \rangle\\
&= \| K^{(2)}_0 g \|^2 -\delta^2\|\partial_r\rho_\epsilon g\|^2 
\end{align*}
Since $|\partial_r\rho_\epsilon|\le|\partial_r\rho|\le q_2 B_0 r/2$  we find
\begin{equation}
Q\ge \| K^{(2)}_0 g \|^2-(1/4)\delta^2q_2^2B_0^2\| r g\|^2\,.
\end{equation}
Combining this with Lemma \ref{LowerboundK02} and that $\operatorname{supp}g\subset \{ (r,j)\,|\, r\ge R\}$ we obtain
(recall that $0<\delta<\mu<1$)
\begin{equation}
Q\ge \Big((\mu^2-\delta^2)q_2^2B_0^2 R^2/4-R^{-2}/4-\bs\Big)\|g\|^2\,.
\end{equation}
This concludes the proof.
\end{proof}
%
%
\begin{proof}[Proof of Lemma \ref{UpperboundQ}]
We clearly have
\begin{align}\label{qq}
Q&\le |\sps{K_0^{(2)} e^{\delta \rho_\epsilon}g}{f_R (E-W)\hat{\Psi}}| + 
|\sps{K_0^{(2)} e^{\delta \rho_\epsilon}g}{\sigma_2 (\partial_r f_R) \hat{\Psi}}|\,. 
\end{align}
We analyze each of the above terms separately. Using that $(K_0^{(2)}+W)\hat{\Psi}=E\hat{\Psi}$ and noting that $W f_R$ extends  trivially  to a bounded operator (for $R>R_1$ large enough), we have, for any  $\eta\in\mathcal{F}UC_0^\infty (\R^2;\C^2)$,
\begin{align*}
  \sps{K_0^{(2)} \eta}{f_R (E-W)\hat{\Psi}}&= \sps{(E-W)f_R K_0^{(2)} \eta}{\hat{\Psi}}\\
&=\sps{K_0^{(2)}f_R (E-W)\eta}{\hat{\Psi}}+
\sps{[(E-W)f_R,K_0^{(2)} ]\eta}{\hat{\Psi}}\\
&=\sps{\eta}{(E-W)^2f_R\hat{\Psi}}+\sps{\eta}{[W,K^{(2)}_0]f_R\hat{\Psi}}\\
&\quad+\sps{\eta}{\ri \sigma_2 (\partial_r f_R)(W-E)\hat{\Psi}}\,.
\end{align*}
This identity extends to any $\eta\in \mathcal{D}(K^{(2)}_0)$, in particular, we may choose  $\eta=e^{\delta \rho_\epsilon}g$ (see Lemma~\ref{wdef}). Thus,  using Lemma~\ref{KonjugiertW}, we find  a constant $C>0$, independent of  $R$ and $\epsilon$, such that
\begin{align*}
&|\sps{K_0^{(2)} e^{\delta \rho_\epsilon}g}{f_R (E-W)\hat{\Psi}}|\\
&\qquad\le \|g\|\,
\|e^{\delta \rho_\epsilon} \big[ (E-W)^2f_R+[W,K^{(2)}_0]f_R+\ri \sigma_2 (\partial_r f_R)(W-E)\big]\hat{\Psi}\|\\
&\qquad\le C \|g\| \, \|e^{\delta \rho_\epsilon}\hat{\Psi}\|\le
C\|g\|(\|g\|+\|e^{\delta \rho}(1-f_R)\|)\,.   
\end{align*}
We now treat the second term in \eqref{qq}.
We define the operators $\Upsilon$ and $L$ acting, for any
$h\in\mathcal{H}^{(2)}$ and $ (r,j)\in\R^+\times \Z$, as
\begin{align*}
 & (\Upsilon h)(r,j)=e^{-|m_j|}h(r,j)\,,\\
&(Lh)(r,j)= (2\sigma_1\sigma_2(m_j r^{-1}+A(r))(\partial_rf_R\Upsilon h)(r,j)\,.
\end{align*}
Clearly, since $A(r)$ is bounded on the support of $\partial_rf_R$ (for $R>R_1$ large enough; see \eqref{eq:8}) $L$ is an anti-symmetric  bounded operator on $\mathcal{H}^{(2)}$.
With these definitions we have, using again the eigenvalue equation, that for any $\eta\in\mathcal{F}UC_0^\infty (\R^2;\C^2)$
\begin{align*}
\sps{&K_0^{(2)} \eta}{\sigma_2 (\partial_r f_R) \hat{\Psi}}\\ \nonumber
&=\sps{K_0^{(2)}\sigma_2 (\partial_r f_R)\eta}{  \hat{\Psi}}
+ \sps{\eta}{  \mathbbm{1}_{\operatorname{supp}\partial_rf_R}(\ri \partial^2_r f_R\hat{\Psi}- L \Upsilon^{-1}\hat{\Psi} )}\\ \nonumber
&= \sps{\eta}{\mathbbm{1}_{\operatorname{supp}\partial_rf_R}\big(\sigma_2 (\partial_r f_R)(E-W)\hat{\Psi}+  \ri \partial^2_r f_R\hat{\Psi}- L \Upsilon^{-1}\hat{\Psi} \big)}\,.
\end{align*}
Note that $\Upsilon^{-1}\hat{\Psi}\in \mathcal{H}^{(2)}$ by Lemma \ref{RotAnalytic}.
Next, we extend this identity to $\eta\in \mathcal{D}(K_0^{(2)})$ and replace $\eta$ by $e^{\delta \rho_\epsilon}g$. 
Using that $e^{\delta \rho_\epsilon}\mathbbm{1}_{\operatorname{supp}\partial_rf_R}$ is bounded uniformly in $\epsilon>0$, we find  $\epsilon$-independent constants $C(R),C'(R)>0$ such that
\begin{align*}\label{com1}
  & |\sps{K_0^{(2)}e^{\delta \rho_\epsilon}g}{\sigma_2 (\partial_r
    f_R) \hat{\Psi}}| \\
&\quad\le C'(R)\|g\| \,\|e^{\delta
    \rho}\mathbbm{1}_{\operatorname{supp}\partial_rf_R}\| \big(\|
  \Upsilon^{-1} \hat{\Psi}\|+\|\mathbbm{1}_{\operatorname{supp}\partial_rf_R}W\hat{\Psi}\|\big)\\ 
&\quad\le
  C(R)\|g\|\,,
\end{align*}
where in the last inequality we use again Lemma \ref{RotAnalytic}.
Therefore, we obtain from \eqref{qq} and the above bounds that $$Q\le \|g\|\big(C\|g\|+C\|e^{\delta \rho}(1-f_R)\|+C(R)\big)\,,$$
which concludes the proof.
\end{proof}
%
\appendix

\section{Bounds for the Green function of $D_{{\bf A}_0}$}\label{greenf}
Let%
$$\theta(\bx-\bx'):= \frac{B_0|\bx-\bx'|^2}{4}, \quad\eta(\bx,\bx'):=-\frac{B_0}{2}(x_1x_2'-x_2x_1')\,.$$

\begin{lemma}\label{GreenKernel} 
Let $z\in \R\setminus\sigma(D_{{\bf A}_0})$ and let
$G_0(\bx,\bx',z)$, $\bx, \bx'\in\R^2$, be a representation of the Green kernel of
$(D_{{\bf A}_0}-z)^{-1}$ as $2\times 2$-matrix. Then we have that
\begin{equation}
  \label{eq:17}
 \big\|G_0(\bx,\bx';z)\big\|_{\C^2\otimes\C^2}\le e^{-\theta(\bx-\bx')}\omega(\bx-\bx';z)\,,
\end{equation}
for some function $\omega(\cdot;z)\,:\R^2\rightarrow \R^+$
that satisfies
\begin{equation}
  \label{eq:18}
  \sup_{\bx\in\R^2} |\bx|e^{-\varepsilon |\bx|} \omega(\bx;z)<\infty,\ \varepsilon>0\,.
\end{equation}
\end{lemma}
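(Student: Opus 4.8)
The plan is to reduce the $2\times2$ matrix kernel to the scalar Landau resolvent and then apply a single first‑order operator. Put $d_0:=(p_1-A_{0,1})+\ri(p_2-A_{0,2})$, so that, as in \eqref{eq:4} and \eqref{dddd} with ${\bf A}={\bf A}_0$ and $B\equiv B_0$,
\[
D_{{\bf A}_0}=\begin{pmatrix}0&d_0^*\\ d_0&0\end{pmatrix},\qquad
D_{{\bf A}_0}^2=\begin{pmatrix}(\bp-{\bf A}_0)^2-B_0&0\\[2pt] 0&(\bp-{\bf A}_0)^2+B_0\end{pmatrix}.
\]
For $z\in\R\setminus\sigma(D_{{\bf A}_0})$ one has $(D_{{\bf A}_0}-z)^{-1}=(D_{{\bf A}_0}+z)(D_{{\bf A}_0}^2-z^2)^{-1}$, and since $\sigma((\bp-{\bf A}_0)^2)=\{(2n+1)B_0:n\in\N_0\}$ while $z\notin\{l_n\}_{n\in\Z}$, the energies $z^2\pm B_0$ avoid $\sigma((\bp-{\bf A}_0)^2)$. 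Writing $R_\mu(\bx,\bx')$ for the integral kernel of $((\bp-{\bf A}_0)^2-\mu)^{-1}$, this yields
\[
G_0(\bx,\bx';z)=\begin{pmatrix}z\,R_{B_0+z^2}&d_0^*R_{z^2-B_0}\\[2pt] d_0\,R_{B_0+z^2}&z\,R_{z^2-B_0}\end{pmatrix}(\bx,\bx'),
\]
the off‑diagonal entries being $d_0$ (resp.\ $d_0^*$) applied in the $\bx$‑variable to a scalar resolvent kernel.

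Next I would make $R_\mu$ explicit. Starting from the Mehler heat kernel
\[
e^{-t(\bp-{\bf A}_0)^2}(\bx,\bx')=\frac{B_0}{4\pi\sinh(B_0t)}\exp\!\Big(\ri\,\eta(\bx,\bx')-\tfrac{B_0}{4}\coth(B_0t)\,|\bx-\bx'|^2\Big),
\]
one computes $R_\mu=\int_0^\infty e^{\mu t}e^{-t(\bp-{\bf A}_0)^2}\,dt$ (absolutely convergent for $\mu<B_0$, continued analytically in $\mu$ otherwise); the substitution $w=(e^{2B_0t}-1)^{-1}$ turns this into the classical integral representation of the Tricomi confluent hypergeometric function $U$, giving
\[
R_\mu(\bx,\bx')=\frac{\Gamma(a)}{4\pi}\,e^{\ri\,\eta(\bx,\bx')}\,e^{-\theta(\bx-\bx')}\,U\!\big(a,1,2\theta(\bx-\bx')\big),\qquad a=\tfrac12-\tfrac{\mu}{2B_0}\,,
\]
whose only poles, coming from $\Gamma(a)$, sit precisely at $\mu\in\sigma((\bp-{\bf A}_0)^2)$. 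I would then record the asymptotics needed: $U(a,1,\zeta)$ is smooth for $\zeta>0$, $U(a,1,\zeta)=O(\log\zeta)$ as $\zeta\to0^+$, $U(a,1,\zeta)=O(\zeta^{-\operatorname{Re}a})$ as $\zeta\to\infty$, while $\partial_\zeta U(a,1,\zeta)=-a\,U(a+1,2,\zeta)$ behaves like $\zeta^{-1}$ near $0$ and polynomially at $\infty$. (Equivalently, $R_\mu$ may be obtained by summing the spectral series $\sum_n\big((2n+1)B_0-\mu\big)^{-1}P_n(\bx,\bx')$ using the known Laguerre‑type Landau projection kernels.)

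It then remains to assemble $G_0$ and take norms. The structural key is the magnetic‑translation covariance of the symmetric gauge: a direct computation gives $d_0^{(\bx)}\big[e^{\ri\eta(\bx,\bx')}F(\bx-\bx')\big]=e^{\ri\eta(\bx,\bx')}\,\widetilde F(\bx-\bx')$ with $\widetilde F$ depending only on $F$, $\nabla F$ and a factor linear in $\bx-\bx'$, so every entry of $G_0(\bx,\bx';z)$ equals $e^{\ri\eta(\bx,\bx')}$ times a function of $\bx-\bx'$ alone. Taking the $\C^2\otimes\C^2$‑norm makes the phase drop out, and \eqref{eq:17} follows with $\omega(\cdot;z)$ a finite sum of terms of the form $\big(\text{polynomial in }|\bv|\big)\,U\!\big(a,1,\tfrac{B_0}{2}|\bv|^2\big)$ and $\big(\text{polynomial in }|\bv|\big)\,\partial_\zeta U\!\big(a,1,\tfrac{B_0}{2}|\bv|^2\big)$, with $a$ evaluated at $\mu=z^2\pm B_0$. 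For \eqref{eq:18}: near $\bv=0$ the worst singularity of $\omega$ is the $|\bv|^{-1}$ produced by $\partial_\zeta U$, which is exactly absorbed by the prefactor $|\bv|$; as $|\bv|\to\infty$ every term grows at most polynomially and is therefore dominated by $e^{-\varepsilon|\bv|}$ for any $\varepsilon>0$. This gives $\sup_{\bv}|\bv|e^{-\varepsilon|\bv|}\omega(\bv;z)<\infty$.

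The main obstacle is the middle step: evaluating the Laplace transform of the Mehler kernel in closed form, justifying its analytic continuation in $\mu$ past $\sigma((\bp-{\bf A}_0)^2)$, and carefully controlling the short‑ and long‑distance behaviour of $U(a,1,\zeta)$ and $\partial_\zeta U(a,1,\zeta)$ once $D_{{\bf A}_0}+z$ has been applied — in particular, checking that the logarithmic coincidence‑point singularity of $R_\mu$ is worsened to no more than $|\bx-\bx'|^{-1}$ in $G_0$, which is precisely the singularity the weight $|\bx|$ in \eqref{eq:18} is designed to absorb.
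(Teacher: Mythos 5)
Your proposal is correct and follows essentially the same route as the paper: the supersymmetric identity $(D_{{\bf A}_0}-z)^{-1}=(D_{{\bf A}_0}+z)(D_{{\bf A}_0}^2-z^2)^{-1}$ reduces everything to the scalar Landau resolvents at $z^2\pm B_0$, whose kernels are $e^{\ri\eta}e^{-\theta}$ times Tricomi functions $U(\alpha,1,2\theta)$, and the off-diagonal entries are handled via $\partial_\zeta U(\alpha,1,\zeta)=-\alpha\,U(\alpha+1,2,\zeta)$ together with the short- and long-distance asymptotics of $U$, exactly as in the paper's proof. The only cosmetic difference is that you sketch the scalar Landau Green function from the Mehler heat kernel, whereas the paper simply quotes it from Cornean--Nenciu.
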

\begin{proof}
Recall that by Proposition \ref{supersymmetry} we have for $E\not=0$ that 
$ \pm E\in\sigma(D_{{\bf A}_0})$ if and only if $E^2\in \sigma(dd^*)\setminus \{0\}=\sigma(d^*d)\setminus \{0\}$, where 
\begin{equation}\label{ddd} 
d^*d=(\mathbf{p}-\mathbf{A}_0)^2-B_0\,,\quad 
dd^*=(\mathbf{p}-\mathbf{A}_0)^2+B_0\,.
\end{equation} 
A simple computation using \eqref{eq:4} yields, 
for any $z\in \R\setminus\sigma(D_{{\bf A}_0})$, 
\begin{equation}\label{ddd1}
\begin{split}
(D_{{\bf A}_0}-z)^{-1}&=(D_{{\bf A}_0}+z) (D_{{\bf A}_0}^2-z^2)^{-1}
\\ 
&
=\left( \begin{array}{cc}
z(d^*d-z^2)^{-1}&d^*(dd^*-z^2)^{-1}\\
d(d^*d-z^2)^{-1}&z(dd^*-z^2)^{-1}
\end{array}
\right)\,.
\end{split}
\end{equation}
It is well-known that the Green function of $(\mathbf{p}-\mathbf{A}_0)^2$  is given by
\begin{align}\label{ddd2}
[(\mathbf{p}-\mathbf{A}_0)^2-\zeta]^{-1}(\bx,\bx')=
(4\pi)^{-1}&\Gamma(\alpha) e^{\ri \eta(\bx,\bx')}
e^{-\theta(\bx-\bx')} U(\alpha,1, 2\theta(\bx-\bx'))\,,
\end{align}
where $U$ is a confluent hypergeometric function and $\alpha=-1/2(\zeta/B_0-1)\notin -\N$, see for instance \cite[Lemma 2.2]{CorneanNenciu1998}.

Combining \eqref{ddd}, \eqref{ddd1}, and \eqref{ddd2} we obtain that the Green kernel of $D_{{\bf A}_0}$ is given by
\begin{equation}
G_0(\bx,\bx';z)=e^{\ri \eta (\bx,\bx')-\theta(\bx-\bx')}\,
\left(\begin{array}{cc}
\Omega_{11}(\bx,\bx';z) &\Omega_{12}(\bx,\bx';z)\\
\overline{\Omega_{12}(\bx,\bx';z)}&\Omega_{22}(\bx,\bx';z)
\end{array}
\right)\,,
\end{equation}
where we define  $\alpha_\pm=-1/2((z^2\pm B_0)/B_0-1)$ and
\begin{align*}
\Omega_{11}(\bx,\bx';z)&:=(4\pi)^{-1}z\Gamma(\alpha_+)\,U(\alpha_+,1,2\theta(\bx-\bx'))\,,\\
\Omega_{12}(\bx,\bx';z)&:=(4\pi)^{-1}B_0\Gamma(\alpha_-+1)\,U(\alpha_-+1,2,2\theta(\bx-\bx'))\{\ri (x_1-x_1')+(x_2-x_2')\}\,,\\
\Omega_{22}(\bx,\bx';z)&:=(4\pi)^{-1}z\Gamma(\alpha_-)\,U(\alpha_-,1,2\theta (\bx-\bx'))\,.
\end{align*}
Here we also used that  $\tfrac{d}{dt} U(\alpha_,1,t)=-\alpha_-U(\alpha_-+1,2,t)$  (see \cite[Eq. 13.4.22]{abramowitzstegun1965}. Since $-\alpha_\pm\not\in \mathbbm{N}_0$, the bounds \eqref{eq:17} and \eqref{eq:18} 
follow now from  the asymptotic formulas for $U$  \cite[Eq. 13.5.2, Eq. 13.5.7, Eq. 13.5.9]{abramowitzstegun1965}.
\end{proof}
%
%
\section{The family $\{H^{(1)}(z)\}_{z\in \C}$}\label{family}
Throughout this section we assume that  (A1)-(A4) are satisfied and use that notation introduced in Section \ref{Gauss}. Our concern is the family of operators $\{H^{(1)}(z)\}_{z\in \C}$ defined a priori on the dense subspace $UC^\infty_0(\R^2,\C^2)$ of $\mathcal{H}^{(1)}$ as 
\begin{align}\label{f1}
H^{(1)}(z):=K^{(1)}_0+\widetilde{v}_z\,,\quad z\in\C\,.
\end{align}
We first state a technical lemma.
\begin{lemma}\label{compsi}
Let $T$ be a (complex-valued) multiplication operator on $L^2(\R^2,\C^2)$ with 
$T\in L^{p}_{\rm loc}(\R^2,\C^2),\, p\in (2,\infty]$  and $\lim_{n\to\infty} \|\mathbbm{1}_{\{|\bx|>n\}} T\|_\infty=0$. Then, $T$ is relative $\sqrt{\bp^2+1}$- compact. 
\end{lemma}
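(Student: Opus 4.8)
The plan is to show directly that $T(\bp^2+1)^{-1/2}$ is a compact operator on $L^2(\R^2;\C^2)$; since $\sqrt{\bp^2+1}\ge 1$ is self-adjoint and boundedly invertible with $(\sqrt{\bp^2+1})^{-1}=(\bp^2+1)^{-1/2}$, this is exactly the assertion that $T$ is relative $\sqrt{\bp^2+1}$-compact (and it automatically gives $\mathcal{D}(\sqrt{\bp^2+1})\subset\mathcal{D}(T)$, since $T(\bp^2+1)^{-1/2}$ is then everywhere defined and bounded). Because $(\bp^2+1)^{-1/2}$ acts componentwise on the two spinor components, it suffices to treat $T$ as a scalar (resp.\ entrywise matrix) multiplication operator, so I will regard $T(\bp^2+1)^{-1/2}$ as acting on scalar $L^2(\R^2;\C)$.

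For $n\in\N$ put $B_n:=\{\bx\in\R^2:|\bx|\le n\}$ and split
\[
T(\bp^2+1)^{-1/2}=\mathbbm{1}_{B_n}T\,(\bp^2+1)^{-1/2}+\mathbbm{1}_{B_n^c}T\,(\bp^2+1)^{-1/2}\,.
\]
First I would dispose of the tail term: since $\|(\bp^2+1)^{-1/2}\|\le 1$,
\[
\big\|\mathbbm{1}_{B_n^c}T\,(\bp^2+1)^{-1/2}\big\|\le \|\mathbbm{1}_{\{|\bx|\ge n\}}T\|_\infty\to 0\quad(n\to\infty)
\]
by the decay hypothesis on $T$. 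For the local term, observe that $\mathbbm{1}_{B_n}T$ has compact support: if $p<\infty$ it therefore lies in $L^p(\R^2)$, while if $p=\infty$ it is bounded with compact support, hence lies in $L^s(\R^2)$ for every $s\in[1,\infty)$. In either case $\mathbbm{1}_{B_n}T\in L^s(\R^2)$ for some $s\in(2,\infty)$, so by \cite[Theorem 4.1]{Simon2005} (the same fact already invoked in Subsection \ref{def-of-H}) the operator $\mathbbm{1}_{B_n}T\,(\bp^2+1)^{-1/2}$ is compact.

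Consequently $T(\bp^2+1)^{-1/2}$ is the operator-norm limit of the compact operators $\mathbbm{1}_{B_n}T\,(\bp^2+1)^{-1/2}$ and is therefore itself compact, which is the claim. There is no genuine obstacle here; the only point needing a moment's care is the case $p=\infty$, where one must note that truncating to a ball upgrades $L^\infty_{\rm loc}$ membership to $L^s$ with finite $s>2$, so that the cited compactness criterion — which requires $2<s<\infty$ strictly — does apply.
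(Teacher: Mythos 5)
Your proof is correct and follows essentially the same route as the paper: decompose $T$ into a compactly supported piece, handled by the compactness criterion of \cite[Theorem 4.1]{Simon2005}, plus a tail whose operator norm vanishes by the decay hypothesis, and conclude by norm approximation. Your extra remark on the case $p=\infty$ (truncation to a ball yields membership in $L^s(\R^2)$ for finite $s>2$) is a detail the paper leaves implicit but is the correct justification.
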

\begin{proof}
For $n\in \N$ write $T=T_1+T_2$ where $T_1$ is supported inside the ball $B_n(0)\subset\R^2$ and $T_2$ on the complement of $B_n(0)$. Then $T_1$ is relative $\sqrt{\bp^2+1}$- compact  \cite[Theorem 4.1]{Simon2005}. Moreover,
\begin{align*}
  \|T(\bp^2+1)^{-1/2}-T_1(\bp^2+1)^{-1/2}\|\le\| T_2\| \to 0\,, 
\end{align*}
as $n\to \infty$, from which follows the claim. 
\end{proof}
\begin{lemma}\label{family1}
For any $z\in \C$ the operator $\widetilde{v}_z (K_0^{(1)}+\ri)^{-1}$ is compact in $\mathcal{H}^{(1)}$. 
\end{lemma}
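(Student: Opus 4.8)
\textbf{Proof plan for Lemma \ref{family1}.} The goal is to show that $\widetilde{v}_z(K_0^{(1)}+\ri)^{-1}$ is compact on $\mathcal{H}^{(1)}$ for every fixed $z\in\C$. The plan is to reduce this to the corresponding statement on the flat operator $\sqrt{\bp^2+1}$ on $L^2(\R^2;\C^2)$ via two elementary reductions: first, controlling $\widetilde{v}_z$ by the decaying dominating function $u_\tau$ from hypothesis (A4), and second, comparing the resolvent of $K_0^{(1)}$ with the free one.

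First I would use (A4) with $\tau=|\operatorname{Im} z|$ to get a nonnegative $u_\tau\in L^p_{\rm loc}(\R^+\times T,r\,\rd r\,\rd\theta)$, $p\in(2,\infty]$, with $\|\mathbbm{1}_{\{r>n\}}u_\tau\|_\infty\to 0$ and $|\widetilde{v}_z(r,\theta)|\le u_\tau(r,\theta)$ pointwise. Transporting back through the unitary $U$ (polar coordinates), $u_\tau$ corresponds to a multiplication operator $\tilde u$ on $L^2(\R^2;\C^2)$ that lies in $L^p_{\rm loc}(\R^2;\C)$ and satisfies $\lim_{n\to\infty}\|\mathbbm{1}_{\{|\bx|>n\}}\tilde u\|_\infty=0$; by Lemma \ref{compsi} it is relative $\sqrt{\bp^2+1}$-compact. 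Since $|\widetilde{v}_z|\le u_\tau$, the operator $\widetilde{v}_z|u_\tau|^{-1/2}$ (with the convention that it vanishes where $u_\tau=0$, and $\widetilde{v}_z u_\tau^{-1}$ interpreted as a bounded multiplication operator of norm $\le 1$ off the zero set) is bounded; writing $\widetilde{v}_z = (\widetilde{v}_z u_\tau^{-1/2})\,u_\tau^{1/2}$ is awkward because $u_\tau^{1/2}$ need not be relative compact, so instead I would simply use the domination directly: it suffices to show $u_\tau(K_0^{(1)}+\ri)^{-1}$ is compact, because then $\widetilde{v}_z(K_0^{(1)}+\ri)^{-1}$, being the product of the bounded multiplication operator $\widetilde{v}_z/u_\tau$ (extended by $0$) with the compact operator $u_\tau(K_0^{(1)}+\ri)^{-1}$ — after noting $\widetilde{v}_z(K_0^{(1)}+\ri)^{-1} = (\widetilde v_z/u_\tau)\,u_\tau(K_0^{(1)}+\ri)^{-1}$ holds as multiplication operators — is compact as well. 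Thus I have reduced matters to: $\tilde u\,(D_{{\bf A}_0}+\ri)^{-1}$ is compact on $L^2(\R^2;\C^2)$, where I have used $U^*K_0^{(1)}U=D_{{\bf A}_0}$ and unitary equivalence.

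Next I would reduce from $D_{{\bf A}_0}$ to $D_{\bf 0}$. Since $(D_{{\bf A}_0}+\ri)^{-1} = (D_{\bf 0}+\ri)^{-1} - (D_{{\bf A}_0}+\ri)^{-1}({\boldsymbol\sigma}\cdot{\bf A}_0)(D_{\bf 0}+\ri)^{-1}$ on a core, and ${\bf A}_0(\bx)=\tfrac{B_0}{2}(-x_2,x_1)$ grows only linearly, one expects $\tilde u\,(D_{{\bf A}_0}+\ri)^{-1}$ and $\tilde u\,(D_{\bf 0}+\ri)^{-1}$ to differ by a compact term once $\tilde u$ controls the linear growth at infinity; but cleaner is to pass to the second resolvent. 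Using the block structure \eqref{ddd1}-type identity $(D_{{\bf A}_0}+\ri)^{-1} = (D_{{\bf A}_0}-\ri)\,(D_{{\bf A}_0}^2+1)^{-1}$ with $D_{{\bf A}_0}^2 = \mathrm{diag}(d^*d, dd^*) = \mathrm{diag}(H_S-B_0, H_S+B_0)$ and $H_S=(\bp-{\bf A}_0)^2$, it is enough to show $\tilde u\,(H_S+1)^{-1/2}$ is compact (the remaining factor $(D_{{\bf A}_0}-\ri)(D_{{\bf A}_0}^2+1)^{-1/2}$ being bounded). By the diamagnetic inequality for $|\bp-{\bf A}_0|$ (as invoked already after \eqref{w1} in the proof of Lemma \ref{essentialda}), relative $\sqrt{\bp^2+1}$-compactness of $\tilde u$ upgrades to relative $H_S^{1/2}$-compactness of $\tilde u$, exactly as done there for $|b|^{1/2}$ and for $V$. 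This gives the compactness of $\tilde u\,(H_S+1)^{-1/2}$, hence of $\tilde u\,(D_{{\bf A}_0}+\ri)^{-1}$, hence of $\widetilde{v}_z(K_0^{(1)}+\ri)^{-1}$.

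The main obstacle is bookkeeping rather than a genuine difficulty: one must check that the change of variables to polar coordinates and the factor $e^{-\ri\sigma_3\theta/2}$-type conjugations built into $U$ do not spoil membership of the dominating function in $L^p_{\rm loc}(\R^2)$ nor its decay — the Jacobian $r$ and the fact that (A4) is stated with the measure $r\,\rd r\,\rd\theta$ make this consistent, but it should be stated carefully. A secondary point to be careful about is the division $\widetilde v_z/u_\tau$: it is defined and bounded (by $1$) on $\{u_\tau>0\}$ and one sets it to $0$ on $\{u_\tau=0\}$, and since $\widetilde v_z$ vanishes there too by domination, the factorization $\widetilde v_z(K_0^{(1)}+\ri)^{-1}=(\widetilde v_z/u_\tau)\,u_\tau(K_0^{(1)}+\ri)^{-1}$ is valid. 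Everything else is a direct appeal to Lemma \ref{compsi}, the diamagnetic inequality, and the second resolvent identity, all of which are either proved or used earlier in the paper.
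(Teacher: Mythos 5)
Your overall route is the same as the paper's: dominate $\widetilde v_z$ by $u_\tau$ via (A4), pull back through $U$, invoke Lemma \ref{compsi} to get relative $\sqrt{\bp^2+1}$-compactness, upgrade to relative $H_S^{1/2}$-compactness by the diamagnetic inequality, and conclude through the block resolvent identity for the magnetic Dirac operator; this is precisely the content of the ``discussion at the end of the proof of Lemma \ref{essentialda}'' that the paper's short proof cites. (The detour through the bounded ratio $\widetilde v_z/u_\tau$ is unnecessary: since $|\widetilde v_z|\le u_\tau$, the pulled-back $\widetilde v_z$ itself is in $L^p_{\rm loc}$ and decays at infinity, so Lemma \ref{compsi}, which allows complex-valued $T$, applies to it directly.)

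There is, however, one concrete error: $U^*K_0^{(1)}U$ is not $D_{{\bf A}_0}$ but $D_{\bf A}$ with the radial gauge \eqref{DefArad1} built from the \emph{full} field $B=B_0+b$; the operator $K_0^{(1)}$ is ``free'' only with respect to the electric potential, not the magnetic perturbation. Consequently the constant-field identity $D_{{\bf A}_0}^2=\mathrm{diag}(H_S-B_0,H_S+B_0)$ is not available, and what you actually establish, compactness of $\tilde u\,(D_{{\bf A}_0}+\ri)^{-1}$, is not unitarily equivalent to the claimed statement. The repair costs nothing new: with $H_S=(\bp-{\bf A})^2$ one has $D_{\bf A}^2=\mathrm{diag}(d^*d,dd^*)$ and, in the sense of quadratic forms, $d^*d=H_S-B$ and $dd^*=H_S+B$ by \eqref{dddd}; since $\mathcal{Q}(d^*d)=\mathcal{Q}(dd^*)=\mathcal{Q}(H_S)$, the closed-graph argument in the proof of Lemma \ref{essentialda} gives boundedness of $(H_S+1)^{1/2}(d^*d+2B_0+\lambda)^{-1/2}$ and its $dd^*$ analogue, so relative $H_S^{1/2}$-compactness of the pulled-back potential still yields compactness of its product with $(D_{\bf A}\pm\ri)^{-1}$, exactly as the paper does there for $V$. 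A minor further point: choose $\tau>0$ with $\tau\ge|\operatorname{Im}z|$ rather than $\tau=|\operatorname{Im}z|$, since (A4) is stated only for $\tau>0$ and your choice degenerates for real $z$.
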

\begin{proof}
Let $z\in \C$ and $\tau>0$ with $\tau>|z|$. Due to the inequality  $|\widetilde{v}_z|\le u_\tau$ on $\R^+\times T$ and the fact that $u_\tau\in  L^p(\mathbb{R}^+\times T,r\rd r\,\rd \theta)$ (for some $2<p\le\infty$) we see that  $\widetilde{v}_z$ is well defined on the domain of $K_0^{(1)}$. Let $\widetilde{u}_\tau=U^* u_\tau U$. It suffices to show that 
$U^*\widetilde{v}_zU (D_{\mathbf{A}}+\ri)^{-1}$ is compact in $L^2(\R^2;\C^2)$. This is, however, a consequence of Lemma \ref{compsi} and  the discussion at the end of the proof of Lemma \ref{essentialda}.
\end{proof}
\begin{lemma}\label{family2}
$\{H^{(1)}(z)\}_{z\in \C}$ defined in \eqref{f1} extends to an analytic family of type (A) with domain  $\mathcal{D}(H^{(1)}(z))=\mathcal{D}(K_0^{(1)})$.
\end{lemma}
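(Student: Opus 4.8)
The plan is to verify directly the two conditions defining a holomorphic family of type (A) in the sense of Kato (see \cite{ReedSimon1978}): that the domain $\mathcal{D}(H^{(1)}(z))$ is independent of $z$ and equals $\mathcal{D}(K_0^{(1)})$, and that $z\mapsto H^{(1)}(z)\psi\in\mathcal{H}^{(1)}$ is analytic on $\C$ for every fixed $\psi\in\mathcal{D}(K_0^{(1)})$. Throughout I use that $K_0^{(1)}=U\DA U^*$ is self-adjoint, so that $\mathcal{D}(K_0^{(1)})=U\mathcal{D}(\DA)$ is dense and contains the subspace $UC^\infty_0(\R^2;\C^2)$ on which $H^{(1)}(z)$ is a priori defined; note also that $UC^\infty_0(\R^2;\C^2)$ is a core for $K_0^{(1)}$, being the image under a unitary of a core of $\DA$.

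\emph{Constancy of the domain.} Fix $z\in\C$. By Lemma~\ref{family1} the operator $\widetilde v_z(K_0^{(1)}+\ri)^{-1}$ is compact, which together with the self-adjointness of $K_0^{(1)}$ implies that $\widetilde v_z$ is $K_0^{(1)}$-bounded with relative bound $0$ (indeed $\|\widetilde v_z(K_0^{(1)}-\ri\mu)^{-1}\|\to 0$ as $\mu\to\infty$, since $\widetilde v_z(K_0^{(1)}+\ri)^{-1}$ is compact and $(K_0^{(1)}+\ri)(K_0^{(1)}-\ri\mu)^{-1}\to 0$ strongly). By the stability of closedness under relatively bounded perturbations of relative bound less than one, $H^{(1)}(z)=K_0^{(1)}+\widetilde v_z$ is a closed operator on $\mathcal{D}(K_0^{(1)})$; this domain is manifestly independent of $z$, and $\widetilde v_z$ extends from $UC^\infty_0(\R^2;\C^2)$ to all of $\mathcal{D}(K_0^{(1)})$ by continuity in the graph norm of $K_0^{(1)}$. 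Fix now $\tau>0$ and recall from (A4) that $|\widetilde v_z|\le u_\tau$ on $\R^+\times T$ for $z\in S_\tau$, with $u_\tau\in L^p_{\rm loc}$ decaying at infinity. By the argument in the proof of Lemma~\ref{family1} (applying Lemma~\ref{compsi} to $u_\tau$ transported to $\R^2$ via $U$), $u_\tau$ is $\DA$-bounded, hence $K_0^{(1)}$-bounded; consequently $u_\tau\psi\in\mathcal{H}^{(1)}$ for $\psi\in\mathcal{D}(K_0^{(1)})$, and the pointwise bound $|\widetilde v_z\psi|\le u_\tau|\psi|$ gives $\|\widetilde v_z\psi\|\le\|u_\tau\psi\|$, uniformly for $z\in S_\tau$.

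\emph{Analyticity.} Since $K_0^{(1)}\psi$ does not depend on $z$, it suffices to show that $z\mapsto\widetilde v_z\psi$ is analytic for each fixed $\psi\in\mathcal{D}(K_0^{(1)})$. Because a weakly analytic Hilbert-space-valued map is analytic, I only need that $z\mapsto\sps{\phi}{\widetilde v_z\psi}=\int_{\R^+\times T}\overline{\phi(r,\theta)}\,\widetilde v_z(r,\theta)\,\psi(r,\theta)\,r\,\rd r\,\rd\theta$ is analytic for every $\phi\in\mathcal{H}^{(1)}$. For $z$ in a fixed strip $S_\tau$ the integrand is dominated by $|\phi|\,u_\tau\,|\psi|\in L^1(\R^+\times T,r\,\rd r\,\rd\theta)$ (by Cauchy--Schwarz and $u_\tau\psi\in\mathcal{H}^{(1)}$), while $z\mapsto\widetilde v_z(r,\theta)$ is analytic for a.e.\ $(r,\theta)$ by (A4). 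Dominated convergence then yields continuity of $z\mapsto\sps{\phi}{\widetilde v_z\psi}$, and Fubini's theorem combined with Cauchy's integral theorem gives $\oint_{\partial\Delta}\sps{\phi}{\widetilde v_z\psi}\,\rd z=\int_{\R^+\times T}\big(\oint_{\partial\Delta}\widetilde v_z(r,\theta)\,\rd z\big)\,\overline{\phi(r,\theta)}\,\psi(r,\theta)\,r\,\rd r\,\rd\theta=0$ for every triangle $\Delta$ in the interior of $S_\tau$; Morera's theorem gives analyticity there, and hence on all of $\C$ since $\tau>0$ is arbitrary. Combined with the previous step, $\{H^{(1)}(z)\}_{z\in\C}$ is a holomorphic family of type (A) with domain $\mathcal{D}(K_0^{(1)})$.

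The step I expect to require the most care is the analyticity argument: one must produce, on each horizontal strip, a $z$-independent $L^1$-majorant for the pointwise-analytic integrand so as to justify both the interchange of the contour integral with the space integral (Fubini) and the passage from weak to strong analyticity. Once Lemma~\ref{family1} is in hand, the constancy of the domain is a routine relative-boundedness argument.
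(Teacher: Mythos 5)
Your proof is correct, and its first half (closedness and $z$-independence of the domain via Lemma~\ref{family1}, i.e.\ relative compactness hence relative bound zero of $\widetilde v_z$ with respect to $K_0^{(1)}$, together with the uniform strip bound $\|\widetilde v_z\psi\|\le\|u_\tau\psi\|$ coming from $|\widetilde v_z|\le u_\tau$ and the $K_0^{(1)}$-boundedness of $u_\tau$) is exactly what the paper does, if slightly more explicit. Where you genuinely diverge is the analyticity step: the paper proves strong analyticity directly by expanding $\widetilde v_z(r,\theta)=\sum_n v^{(n)}(r,\theta)z^n$ with Cauchy-integral coefficients $v^{(n)}=\frac{1}{2\pi\ri}\oint_{|\zeta|=s}\widetilde v_\zeta\,\zeta^{-n-1}\rd\zeta$, using the same strip domination to get the pointwise bound $|v^{(n)}|\le u_{2s}/s^n$ and hence $\|v^{(n)}\varphi\|\le s^{-n}\|u_{2s}\varphi\|$, so that $z\mapsto\widetilde v_z\varphi$ is a norm-convergent vector-valued power series on $|z|<s$ with $s$ arbitrary. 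You instead prove weak analyticity of $z\mapsto\sps{\phi}{\widetilde v_z\psi}$ by dominated convergence, Fubini and Morera on each strip $S_\tau$, and then invoke the standard theorem that weakly analytic Banach-space-valued maps are norm analytic. Both routes rest on the same two ingredients (pointwise analyticity of $z\mapsto\widetilde v_z(r,\theta)$ from (A4) and the integrable majorant $u_\tau|\psi|$ on strips); the paper's version is more quantitative and self-contained, producing explicit Taylor coefficients of the family, while yours avoids any series manipulation at the cost of appealing to the weak-to-strong analyticity principle. Minor remarks, affecting both arguments equally: joint measurability of $(z,r,\theta)\mapsto\widetilde v_z(r,\theta)$ is tacitly used (for your Fubini step, for the paper's measurability of $v^{(n)}$), and your parenthetical justification that compactness of $\widetilde v_z(K_0^{(1)}+\ri)^{-1}$ gives relative bound zero should really pass through $\|\widetilde v_z(K_0^{(1)}-\ri\mu)^{-1}\|=\|[(K_0^{(1)}+\ri\mu)^{-1}(K_0^{(1)}-\ri)]\,[\widetilde v_z(K_0^{(1)}+\ri)^{-1}]^*\|^{\phantom{*}}$-type reasoning (strong convergence of the adjoint factors composed with a compact operator), but this is the standard fact and does not constitute a gap.
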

\begin{proof}
Due to Lemma \ref{family1} we know that, for any $z\in \C$, $H^{(1)}(z)$ extends to a closed operator with $\mathcal{D}(H^{(1)}(z))=\mathcal{D}(K_0^{(1)})$. It is enough to show that, for any $\varphi\in \mathcal{D}(K_0^{(1)})$ the mapping $\C\ni z\mapsto H^{(1)}(z)\varphi \in \mathcal{H}^{(1)}$ is analytic.

By the assumption (A5) we have, for any $(r,\theta)\in \R^+\times T$, that the power series $\widetilde{v}_z(r,\theta)=\sum_{n\in\N_0} v^{(n)}(r,\theta) z^n$ with 
\begin{align}\label{f2}
 v^{(n)}(r,\theta)= \frac{1}{2\pi \ri}\oint_{|\zeta|=s} \frac{\widetilde{v}_\zeta(r,\theta)}{\zeta^{n+1}}d\zeta\,,
\end{align}
for some $s>0$, has an infinite convergence radius. In addition, we clearly get from \eqref{f2} that $ |v^{(n)}(r,\theta)|\le u_{2s}(r,\theta)/s^n$ for any $(r,\theta)\in \R^+\times T$. In particular, we find that 
$$\|v^{(n)} \varphi \|\le \frac{1}{s^n}\|u_{2s} \varphi\|\,,\quad \varphi\in  \mathcal{D}(K_0^{(1)})\,.$$
Therefore, for any $|z|<s$,
\begin{align*}
  v_z\varphi=\sum_{n\in\N_0} v^{(n)}z^n \varphi\,,\quad \varphi\in  \mathcal{D}(K_0^{(1)})\,.
\end{align*}
 This concludes the proof  since $s>0$ can be chosen arbitrarily large.
\end{proof}
\bibliographystyle{plain} 

\end{document}